\documentclass[journal,twoside,web]{ieeecolor}
\usepackage{generic}
\def\BibTeX{{\rm B\kern-.05em{\sc i\kern-.025em b}\kern-.08em
    T\kern-.1667em\lower.7ex\hbox{E}\kern-.125emX}}
\usepackage{cite}
\usepackage{amsmath,amssymb,amsfonts}
\usepackage{algorithmic}
\usepackage{algorithm}
\usepackage{graphicx}
\usepackage{eso-pic}
\usepackage{hyperref}
\hypersetup{hidelinks=true}
\usepackage{textcomp}
\usepackage{xcolor}
\usepackage{subcaption}   
\usepackage{pgfplots}
\pgfplotsset{compat=1.18}

\newtheorem{assumption}{Assumption}
\newtheorem{definition}{Definition}
\newtheorem{lemma}{Lemma}
\newtheorem{theorem}{Theorem}
\newtheorem{remark}{Remark}
\newtheorem{proposition}{Proposition}
\newtheorem{corollary}{Corollary}
\newtheorem{problem}{Problem}

\definecolor{tacBlue}{HTML}{006BA4}
\definecolor{tacOrange}{HTML}{FF800E}
\definecolor{tacDarkGray}{HTML}{595959}
\definecolor{tacLightGray}{HTML}{666666}

\pgfplotsset{compat=1.18}

\pgfplotsset{
  tacaxis/.style={
    width=0.93\linewidth,
    height=0.6\linewidth,
    xlabel={noise scaling $s$},
    ylabel={spectral radius $\rho$},
    xmin=0, xmax=6,
    tick align=outside,
    tick style={black},
    axis line style={black},
    line width=0.8pt,
    grid=major,
    major grid style={draw=gray!20, line width=0.2pt},
    legend cell align=left,
    legend style={
      draw=none,
      fill=none,
      font=\footnotesize,
      at={(0.02,0.98)},
      anchor=north west,
      row sep=1pt,
      inner sep=1pt
    },
    label style={font=\footnotesize},
    tick label style={font=\footnotesize},
  },
seqAstyle/.style={
    thick,
    tacBlue,
    dashdotted,
  },
  seqBstyle/.style={
    thick,
    tacOrange,
    dashed,
  },
  decstyle/.style={
    thick,
    tacDarkGray,
    solid,
  },
  thresholdstyle/.style={
    semithick,
    tacLightGray,
    dotted,
  },
  seqAdensestyle/.style={
    seqAstyle,
    dash pattern={on 2pt off 1pt on 0.8pt off 1pt},
  },
  seqBdensestyle/.style={
    seqBstyle,
    dash pattern={on 1.6pt off 0.9pt},
  },
  predstyle/.style={
    black,
    line width=0.7pt,
    densely dotted,
  }
}

\begin{document}

\title{Policy Iteration for Linear-Quadratic Stochastic Differential Games with State- and Control-Dependent Noise}

\author{Karl Handwerker, Felix Th\"ommes, \IEEEmembership{Member, IEEE,} Lucas G\"unther, Balint Varga, \IEEEmembership{Member, IEEE,} and S\"oren Hohmann, \IEEEmembership{Senior Member, IEEE}
\thanks{All authors are with the Institute of Control Systems (IRS),
Karlsruhe Institute of Technology (KIT), 76131 Karlsruhe, Germany.
Corresponding author: K.~Handwerker (karl.handwerker@kit.edu).}}

\maketitle

\AddToShipoutPictureBG*{%
  \AtPageLowerLeft{\raisebox{0.32in}{\makebox[\paperwidth][c]{%
    \parbox{\textwidth}{\centering\footnotesize
      This work has been submitted to the IEEE for possible publication.
      Copyright may be transferred\\
      without notice, after which this version may no longer be accessible.}}}}%
}

\begin{abstract}
This paper presents a novel sequential policy iteration (PI) method for stochastic differential games with state- and control-dependent noise. 
The updates preserve mean-square stability, so that the iteration is well posed. 
We further derive a closed-form expression for the Fréchet derivative of the sequential PI map at a Nash equilibrium.
The resulting characterization reveals how control-dependent noise, policy-evaluation sensitivity, and update ordering govern local error propagation, and yields explicit sufficient conditions for local linear convergence.
Since finding an initial stabilizing solution is a major challenge in policy iteration, we also propose a homotopy-based initialization that ensures a valid starting point.
The effectiveness of the proposed PI algorithm and the analytical results are verified through a numerical example.
\end{abstract}

\begin{IEEEkeywords}
Differential games, Mean-square stability, Multiplicative noise, Nash equilibrium, Policy iteration, Riccati equations 
\end{IEEEkeywords}

\section{Introduction}
\label{sec:introduction}

Noncooperative stochastic differential games are mathematical models of the dynamic interaction between multiple decision makers acting on an uncertain system, each pursuing an individual objective \cite{basar_olsder_1998,friedman_2013,hu_lauriere_2024_survey_ml_stochastic_games}.
Originating in deterministic differential game theory \cite{friedman_2013}, the stochastic formulation captures uncertainties inherent in a wide range of applications such as financial markets \cite{saito_takahashi_2019_automatica}, networked control \cite{liu_huang_2025_tac_aggregative,yuksel_basar_2013}, and human--machine interaction \cite{varga_inga_lemmer_hohmann_2021_ccta,kille_leibold_karg_varga_hohmann_2024_roman}, where the uncertainty typically scales with the state and with the applied controls \cite{todorov_2005_neural_computation}.

A central solution concept in such settings is the feedback Nash equilibrium, namely a profile of state-feedback strategies for which no player can improve its own objective by a unilateral deviation \cite{starr_ho_1969,basar_olsder_1998}.
The computation of a feedback Nash equilibrium requires solving a system of coupled equilibrium conditions, which in general take the form of coupled stochastic Hamilton--Jacobi--Bellman equations \cite{yong_zhou_1999,basar_olsder_1998}.
In infinite-horizon linear--quadratic (LQ) settings, these conditions reduce to a system of coupled stochastic algebraic Riccati equations that characterize the equilibrium value functions and the associated linear feedback gains \cite{zhu_zhang2013_state_control_noise}.

Closed-form solutions to these equations are available only in special cases, so that feedback Nash equilibria must in general be computed numerically \cite{basar_olsder_1998}.
While the single-player counterpart can be solved by classical iterative methods \cite{kleinman_1969,mclane_1971,damm2004}, the coupled equilibrium conditions of a game constitute a fixed-point problem rather than an optimization problem, so that the underlying monotonicity arguments no longer apply.
Assessing convergence remains an open question for the computation of Nash equilibria in general \cite{hu_lauriere_2024_survey_ml_stochastic_games}, since nonzero-sum games may possess several stabilizing feedback Nash equilibria already in the deterministic special case \cite{basar_olsder_1998,engwerda2005}.
Global convergence to a distinguished equilibrium can therefore not be expected without strong additional assumptions, and convergence becomes an inherently local and equilibrium-dependent question.
Even at a given equilibrium, iterations of the same method class converge at different rates, since the multiplayer structure allows the players to be updated either simultaneously or sequentially one at a time in a chosen order \cite{nortmann_monti_sassano_mylvaganam2024_tac_iterative_datadriven_lq_games}, the latter being empirically reported to converge faster \cite{chen_chen_lewis_xie_mcpi_nash_dg,chen_chen_lewis_2026_automatica_mcpi_nonlinear}. 
A structural explanation of this phenomenon in policy iteration is missing.

The solution methods themselves, by contrast, have advanced considerably, though largely for the deterministic special case, in which iterative schemes built around repeated Lyapunov or Riccati equation solutions are well established \cite{freiling_1996,li_gajic_1995}.
Policy iteration (PI), which alternates between policy evaluation and policy improvement, is the most widely used approach for computing feedback Nash equilibria in deterministic games \cite{puterman_brumelle_1979_pi_newton,lewis_vrabie_2009_rl_adp, nortmann_monti_sassano_mylvaganam2024_tac_iterative_datadriven_lq_games,guan_salizzoni_kamgarpour_summers_2024_pi_lqdg,chen_chen_lewis_xie_mcpi_nash_dg,vamvoudakis_lewis_2011_online_hjb_games,thoemmes2026_ifac_policy_gradient}. 
For stochastic games, iterative methods are available for two-player formulations \cite{ivanov_tanov_2018_iterative_lq_stochastic_games,sun_jiang_zhang2012_discrete_lq_games}, for state-dependent noise \cite{sagara_mukaidani_yamamoto2007_state_dep_noise,mukaidani2009_automatica_soft_constrained}, and for multiplayer games over finite time horizons \cite{han_hu_2020_deep_fp,han_hu_long_2020_convergence_dfp,hambly_xu_yang2022_policy_gradient_lq_games,andersson_andersson_ljung2026_fp_fbsde}.

None of these settings, however, combines an arbitrary number of players with noise that depends on the controls over an infinite horizon.
Control-dependent noise in particular couples the players' stationarity conditions and turns the equilibrium conditions into rational matrix equations, whereas they are polynomial in the deterministic case and under purely state-dependent noise \cite{damm2004,zhu_zhang2013_state_control_noise}.
Whenever the diffusion does not vanish, drift and diffusion jointly govern the evolution of the state variance, so that mean-square stability (MSS) replaces asymptotic stability as the relevant closed-loop property \cite{oksendal_2003}.
As a consequence, neither the algorithms developed for the deterministic and state-dependent case nor the results establishing their convergence carry over, since they address different equations and, in the deterministic case, certify a different notion of stability.
In fact, a recent survey identifies the general solvability of Riccati-type equations for the explicit implementation of feedback-type optimal solutions in games with state- and control-dependent noise as an open research direction \cite{moon_wang_basar_2026_survey_lq_sdg}.

We therefore develop a policy-iteration method for infinite-horizon \(N\)-player nonzero-sum stochastic LQ differential games with state- and control-dependent noise.
The policy evaluation step presupposes a mean-square-stabilizing gain profile, since otherwise the infinite-horizon value equations and costs are not defined \cite{willems_willems_1976}.
How such a profile is to be obtained has been raised as a question in its own right \cite{nortmann_monti_sassano_mylvaganam2024_tac_iterative_datadriven_lq_games}.
Homotopy continuation has been used for this purpose in deterministic single-player \cite{chen_lewis_li_2022_homotopic_pi} and game \cite{chen_chen_lewis_xie_mcpi_nash_dg} settings.
Both rely on a different notion of stability and require each player to stabilize the system on its own, so that neither extends to the present class.
Once started, the iteration must then retain mean-square stability at every update, which renders every iterate implementable as a stabilizing feedback.

To the best of our knowledge, the sequential policy-iteration method proposed here is the first for this class of games. 
Our main contributions are:
\begin{enumerate}
    \item A policy-iteration method whose player-wise updates
    preserve mean-square stability, with fixed points that are precisely
    the stabilizing feedback Nash equilibria.

    \item A local convergence analysis that delimits the equilibria at which sequential updating is guaranteed to contract at least as strongly as simultaneous updating, and shows that with three or more players the update order can decide whether an equilibrium is locally attracting at all.

    \item A homotopy-based initialization that produces a mean-square-stabilizing
    initial gain profile in finitely many continuation steps whenever the game is mean-square stabilizable.
\end{enumerate}

\noindent \textbf{Notation:}
For positive integers \(n\) and \(m\), let \(\mathbb R^n\) and \(\mathbb R^{n\times m}\) denote the space of real vectors of dimension \(n\) and the space of real \(n\times m\) matrices, respectively. Let \(\mathbb S^n\) denote the set of real symmetric \(n\times n\) matrices, and define
$
\mathbb S^n_+ := \{X\in\mathbb S^n : X\succeq 0\},
$ and
$\mathbb S^n_{++} := \{X\in\mathbb S^n : X\succ 0\}.
$
For \(X\in\mathbb S^n\), \(\lambda_{\min}(X)\) and \(\lambda_{\max}(X)\) denote its smallest and largest eigenvalues, respectively. For a square matrix \(X\), \(\rho(X)\) denotes its spectral radius. We use \(\|\cdot\|\) for the Euclidean norm on vectors and the associated induced operator norm on matrices. The identity matrix of dimension \(n\) is denoted by \(I_n\), or simply by \(I\) when the dimension is clear from the context.
Let $(\Omega,\mathcal F,\mathbb P)$ be a probability space, and let $\{\mathcal F_t\}_{t\ge0}$ be a right-continuous filtration such that $\mathcal F_0$ contains all $\mathbb P$-null sets of $\mathcal F$. Assume that $W=(W_t)_{t\ge0}$ is a one-dimensional standard Brownian motion adapted to $\{\mathcal F_t\}_{t\ge0}$. Throughout the paper, the initial state is deterministic, $x(0)=x_0\in\mathbb R^n$.
Whenever an equilibrium notion is stated, it is understood for every initial state $x_0\in\mathbb R^n$.

\section{Problem Formulation}\label{sec:problem_statement}

We consider an $N$-player stochastic differential game.
For each player $i\in \mathcal{I} := \{1,\dots,N\}$, we define the control $u_i(t) \in \mathbb{R}^{m_i}$, where $u(t):=(u_1(t),\dots,u_N(t))$ denotes the joint control. The state $x(t)\in\mathbb R^n$ evolves
according to the stochastic differential equation (SDE) 
\begin{align}\label{eq:sde-N}
dx(t)
&=
\Bigl(Ax(t)+\sum_{i\in\mathcal I} B_i u_i(t)\Bigr)\,dt\nonumber\\
&\quad+
\Bigl(Cx(t)+\sum_{i\in\mathcal I} D_i u_i(t)\Bigr)\,dW(t),
\end{align}
where $A,C\in\mathbb{R}^{n\times n}$ and
$B_i,D_i\in\mathbb{R}^{n\times m_i}$, $i \in \mathcal I$ are constant matrices.
Each player $i \in \mathcal I$ seeks to minimize the infinite-horizon quadratic cost
\begin{equation}
\label{eq:cost}
J_i(u;x_0)
=
\mathbb E\int_0^\infty
x(t)^\top Q_i x(t)
+
\sum_{j\in\mathcal I} u_j(t)^\top R_{ij}u_j(t)
\,dt.
\end{equation}
where $Q_i\in\mathbb{S}^n_{++}$, $R_{ii}\in\mathbb S^{m_i}_{++}$, and $R_{ij}\in\mathbb S^{m_j}_{+}$ for all $j\in\mathcal{I}\setminus\{i\}$.

We consider constant linear state-feedback strategies of the form
$
u_i(t)=-K_i x(t),
$
with $
K_i\in\mathcal K_i:=\mathbb R^{m_i\times n}$,
$ i\in\mathcal I$,
and define the gain profile space
$
\mathcal K:=\prod_{i\in\mathcal I}\mathcal K_i,
$
whose elements are the $N$-tuples $K=(K_i)_{i\in\mathcal I}$. For $i\in\mathcal I$, we also write $K_{-i}:=(K_j)_{j\in\mathcal I\setminus\{i\}}$ for the gains of all players except player $i$.
For $K\in\mathcal K$, define the closed-loop drift and diffusion matrices
\begin{equation}
A_{\mathrm{cl}}(K)
:=
A-\sum_{i\in\mathcal I} B_iK_i,
\quad
C_{\mathrm{cl}}(K)
:=
C-\sum_{i\in\mathcal I} D_iK_i,
\end{equation}
such that the closed-loop dynamics becomes
\begin{equation}
\label{eq:closed-loop-SDE}
dx(t)
=
A_{\mathrm{cl}}(K)x(t)\,dt
+
C_{\mathrm{cl}}(K)x(t)\,dW(t), \, x(0)=x_0.
\end{equation}
For every gain profile \(K\in\mathcal K\), the closed-loop system \eqref{eq:closed-loop-SDE} admits a unique strong solution on \([0,\infty)\).
The induced closed-loop control is $u^K(t):=(-K_i x(t))_{i\in\mathcal I}$, and we define
\begin{equation}
J_i(K;x_0):=J_i(u^K;x_0)
\end{equation}
for the induced cost of player $i \in \mathcal I$ with the convention that \(J_i(K;x_0)=+\infty\) whenever the integral in \eqref{eq:cost} diverges under \(u^K\). The associated value function under $K$ is
defined by
\begin{equation}\label{eq:value-function}
V_i^K(x_0):=J_i(K;x_0).
\end{equation}

\begin{definition}[Mean-square stability {\cite[Def.~1.5.1]{damm2004}}]
The closed-loop system \eqref{eq:closed-loop-SDE} is called mean-square stable (MSS) if its state trajectory satisfies
\begin{equation}
\lim_{t\to\infty}\mathbb{E}\|x(t)\|^2=0
\quad \forall x_0\in\mathbb{R}^n.
\end{equation}
The system \eqref{eq:sde-N} is called mean-square stabilizable if there exists an $N$-tuple \(K\in\mathcal K\) such that the closed-loop system \eqref{eq:closed-loop-SDE} is MSS.
\end{definition}

We denote the set of stabilizing gain profiles by
\begin{equation}\label{eq:FN}
\mathcal K_{\mathrm{MSS}}
:=
\left\{
K\in\mathcal K:
\eqref{eq:closed-loop-SDE}
\text{ is MSS}
\right\}.
\end{equation}

\begin{assumption}[Mean-square stabilizability]\label{ass:MSS}
The stochastic game \eqref{eq:sde-N} is mean-square stabilizable by constant
linear state feedback, that is, $\mathcal K_{\mathrm{MSS}}\neq\varnothing$.
\end{assumption}

Following \cite{damm2004}, a gain profile
$K\in\mathcal K$ belongs to $\mathcal K_{\mathrm{MSS}}$ if and only if there exists a matrix
$X=X^\top\succ0$ such that
\begin{equation}\label{eq:Lyap-MSS}
A_{\mathrm{cl}}(K)^\top X
+
X A_{\mathrm{cl}}(K)
+
C_{\mathrm{cl}}(K)^\top X C_{\mathrm{cl}}(K)
\prec0.
\end{equation}
In this setting, the second moment \(\mathbb{E}\|x(t)\|^2\) decays moreover exponentially; see, e.g.,
\cite[Thm.~1.5.3]{damm2004}.

\begin{definition}[Stabilizing linear feedback Nash equilibrium]\label{def:LFNE}
A gain profile $K^\star=(K_i^\star)_{i\in\mathcal I}\in\mathcal K_{\mathrm{MSS}}$ is called a
stabilizing linear feedback Nash equilibrium \cite{basar_olsder_1998,zhu_zhang2013_state_control_noise} if, for every player
$i\in\mathcal I$ and every other gain $\widetilde K_i\in\mathcal K_i$,
\begin{equation}\label{eq:nash-def}
J_i(K^\star;x_0)
\le
J_i\bigl((\widetilde K_i,K_{-i}^\star);x_0\bigr)
\qquad
\forall x_0\in\mathbb R^n,
\end{equation}
whenever $(\widetilde K_i,K_{-i}^\star)\in\mathcal K_{\mathrm{MSS}}$.
\end{definition}

If \(K^\star\in\mathcal K_{\mathrm{MSS}}\) is a stabilizing linear feedback Nash equilibrium,
then for each player \(i\in\mathcal I\) there exists a matrix
$
P_i^\star\in\mathbb S_{++}^n
$
such that
$
V_i^{K^\star}(x_0)=x_0^\top P_i^\star x_0$,
 $\forall x_0\in\mathbb R^n,
$
and the matrices \(P_i^\star\) satisfy, for each player \(i\in\mathcal I\), the coupled relations
\begin{align}
0
&=
A_{\mathrm{cl}}(K^\star)^\top P_i^\star
+
P_i^\star A_{\mathrm{cl}}(K^\star)
+
C_{\mathrm{cl}}(K^\star)^\top P_i^\star C_{\mathrm{cl}}(K^\star)
\nonumber\\
&\quad
+
Q_i
+
\sum_{j\in\mathcal I}(K_j^\star)^\top R_{ij}K_j^\star,
\label{eq:equilibrium-eval}
\end{align}
and
\begin{equation}\label{eq:equilibrium-improve}
\bigl(R_{ii}+D_i^\top P_i^\star D_i\bigr)K_i^\star
=
B_i^\top P_i^\star
+
D_i^\top P_i^\star
\Bigl(
C-\sum_{j\in\mathcal I\setminus\{i\}} D_jK_j^\star
\Bigr).
\end{equation}
These coupled equilibrium equations \eqref{eq:equilibrium-eval}--\eqref{eq:equilibrium-improve} follow from the quadratic structure of the closed-loop value functions \eqref{eq:value-function} and the first-order optimality conditions of the induced best-response problems; see, e.g., \cite{zhu_zhang2013_state_control_noise}.
They admit closed-form solutions only in special cases, so that \(K^\star\) must in general be computed numerically \cite{basar_olsder_1998}.

\begin{problem}\label{prob:main}
Develop a policy iteration algorithm for computing a stabilizing linear feedback Nash equilibrium $K^\star \in \mathcal{K}_{\mathrm{MSS}}$ for the stochastic $N$-player differential game \eqref{eq:sde-N}--\eqref{eq:cost}, that is, solve the coupled equilibrium system \eqref{eq:equilibrium-eval}--\eqref{eq:equilibrium-improve} via a well-posed PI scheme on the set of stabilizing gain profiles $\mathcal{K}_{\mathrm{MSS}}$.
\end{problem}

\begin{problem}\label{prob:convergence}
Characterize the local convergence behavior of the proposed algorithm near a stabilizing linear feedback Nash equilibrium $K^\star \in \mathcal{K}_{\mathrm{MSS}}$, and provide verifiable sufficient conditions for local convergence.
\end{problem}

\begin{problem}\label{prob:init}
Construct a gain profile $K\in\mathcal K_{\mathrm{MSS}}$ directly from the data of the game \eqref{eq:sde-N}--\eqref{eq:cost}, so that the proposed algorithm can be initialized without a stabilizing profile being known in advance.
\end{problem}

\section{Sequential Policy Iteration}\label{sec:03_CSPI}

This section addresses Problem~\ref{prob:main} by presenting a policy-iteration method based on sequential player-wise updates for computing stabilizing linear feedback Nash equilibria of the stochastic $N$-player differential game \eqref{eq:sde-N}--\eqref{eq:cost}.
Starting from a stabilizing gain profile, the method alternates between (A)~policy evaluation via a generalized Lyapunov equation and (B)~policy improvement based on fixed-value stationarity conditions.
Within each iteration, the players are updated one at a time in a fixed order. Each player's improvement step is evaluated at a partially updated gain profile that already incorporates the most recent updates of the preceding players, so that the following players within the same iteration already benefit from the updated gains of preceding players.
Both the policy-evaluation step and the associated quadratic value representation are defined only for gain profiles in \(\mathcal K_{\mathrm{MSS}}\). The sequential construction is therefore designed to preserve mean-square stability at each intermediate player-wise update, so that the policy-iteration mapping remains well defined on this set.

\subsection{Policy Evaluation}\label{sec:eval_model_based}

The policy evaluation step assesses a given stabilizing and in general suboptimal gain profile $K\in\mathcal K_{\mathrm{MSS}}$ by computing, for each player $i \in \mathcal I$, the infinite-horizon cost \eqref{eq:cost} induced by the closed-loop system \eqref{eq:closed-loop-SDE}.
Under $u_i(t)=-K_i x(t)$, define the induced running-cost matrix
\begin{equation}
S_i(K):=Q_i+\sum_{j\in\mathcal I} K_j^\top R_{ij}K_j,
\end{equation}
so that
\begin{equation}\label{eq:induced_running_cost}
J_i(K;x_0)=\mathbb{E}\!\int_0^\infty x(t)^\top S_i(K)x(t)\,dt.
\end{equation}

\begin{lemma}[Policy evaluation under fixed linear feedback]
\label{lem:eval-Pi}
Let \(K\in\mathcal K_{\mathrm{MSS}}\) and fix \(i\in\mathcal I\). Then the generalized Lyapunov equation
\begin{equation}
\label{eq:eval-Ei}
A_{\mathrm{cl}}(K)^\top P_i
+ P_i A_{\mathrm{cl}}(K)
+ C_{\mathrm{cl}}(K)^\top P_i C_{\mathrm{cl}}(K)
+ S_i(K)=0
\end{equation}
admits a unique symmetric solution \(P_i(K)\in\mathbb S^n\). Moreover,
\(P_i(K)\in\mathbb S^n_{++}\), and the induced cost of player \(i\) admits the quadratic value
representation
$
J_i(K;x_0)=x_0^\top P_i(K)x_0, 
\, \forall x_0\in\mathbb R^n.
$

\end{lemma}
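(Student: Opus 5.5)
The plan is to work with the generalized Lyapunov operator
\[
\mathcal L_K(P):=A_{\mathrm{cl}}(K)^\top P+PA_{\mathrm{cl}}(K)+C_{\mathrm{cl}}(K)^\top P\,C_{\mathrm{cl}}(K),
\qquad P\in\mathbb S^n,
\]
and its adjoint $\mathcal L_K^*(X)=A_{\mathrm{cl}}(K)X+XA_{\mathrm{cl}}(K)^\top+C_{\mathrm{cl}}(K)XC_{\mathrm{cl}}(K)^\top$ with respect to the trace inner product. $\mathcal L_K^*$ generates the second-moment dynamics: by It\^o's formula, $X(t):=\mathbb E[x(t)x(t)^\top]$ satisfies $\dot X=\mathcal L_K^*(X)$ with $X(0)=x_0x_0^\top$.

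First I establish unique solvability. Since $K\in\mathcal K_{\mathrm{MSS}}$, there is $X\succ0$ satisfying \eqref{eq:Lyap-MSS}, that is, $\mathcal L_K(X)\prec0$. Both $\mathcal L_K$ and $\mathcal L_K^*$ are resolvent positive, since $P\mapsto C_{\mathrm{cl}}^\top PC_{\mathrm{cl}}$ is positive and the remaining part is a standard Lyapunov operator. By the theory of resolvent positive operators in \cite{damm2004}, the existence of $X\succ0$ with $\mathcal L_K(X)\prec0$ implies that the spectral abscissa satisfies $\beta(\mathcal L_K)<0$. Hence $\mathcal L_K$ is invertible on $\mathbb S^n$, and $-\mathcal L_K^{-1}$ maps $\mathbb S^n_+$ into $\mathbb S^n_+$. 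Setting $P_i(K):=-\mathcal L_K^{-1}(S_i(K))$ gives the unique symmetric solution of \eqref{eq:eval-Ei}. Positive definiteness then follows because $S_i(K)\succeq Q_i\succ0$. If $P_i v=0$ for some $v\neq0$, then $v^\top\mathcal L_K(P_i)v=v^\top C_{\mathrm{cl}}^\top P_iC_{\mathrm{cl}}v\ge0$, whereas \eqref{eq:eval-Ei} gives $v^\top\mathcal L_K(P_i)v=-v^\top S_iv<0$, a contradiction. So $P_i\succ0$.

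For the cost representation, I compute $\frac{d}{dt}\mathbb E[x^\top P_ix]=\operatorname{tr}(P_i\mathcal L_K^*(X))=\operatorname{tr}(\mathcal L_K(P_i)X)=-\mathbb E[x^\top S_i(K)x]$. Integrating over $[0,T]$ gives
\[
x_0^\top P_ix_0-\mathbb E[x(T)^\top P_ix(T)]=\mathbb E\int_0^T x^\top S_i(K)x\,dt.
\]
By mean-square stability, $\mathbb E[x(T)^\top P_ix(T)]\le\lambda_{\max}(P_i)\,\mathbb E\|x(T)\|^2\to0$. Since the integrand is nonnegative, monotone convergence yields $J_i(K;x_0)=x_0^\top P_i(K)x_0$, which in particular is finite.

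The main obstacle is the step from the strict LMI \eqref{eq:Lyap-MSS} to invertibility and inverse-positivity of $\mathcal L_K$. I plan to cite the resolvent-positive operator results in \cite{damm2004}. A self-contained alternative is also available: the exponential decay of $X(t)$ under MSS (\cite[Thm.~1.5.3]{damm2004}) makes $\int_0^\infty e^{t\mathcal L_K}(S)\,dt$ converge, and this integral solves the equation. Uniqueness follows because $e^{t\mathcal L_K}\to0$: any solution $P$ of the homogeneous equation $\mathcal L_K(P)=0$ is a fixed point of $e^{t\mathcal L_K}$, and hence $P=0$.
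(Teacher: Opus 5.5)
Your proposal is correct and follows essentially the same route as the paper. Unique solvability comes from the generalized Lyapunov theory of \cite{damm2004}, positive definiteness from $S_i(K)\succ0$ (your kernel argument spells out what the paper only asserts), and the value representation from It\^o's formula together with $\mathbb E\|x(T)\|^2\to0$ and monotone convergence. The only real difference is in the last step: you pass through the deterministic moment equation $\dot X=\mathcal L_K^*(X)$ and trace duality, which tacitly uses the vanishing expectation of the It\^o integral that the paper justifies explicitly by stopping-time localization and dominated convergence.
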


\begin{proof}
Fix \(K\in\mathcal K_{\mathrm{MSS}}\) and \(i\in\mathcal I\). Existence and
uniqueness of a symmetric solution \(P_i(K)\in\mathbb S^n\) of
\eqref{eq:eval-Ei} follow from the unique solvability of the generalized
Lyapunov equation for mean-square stable closed-loop systems; see, e.g.,
\cite{damm2004}. Since \(Q_i\succ0\) and \(R_{ij}\succeq0\), we have
\(S_i(K)\succ0\), and therefore \(P_i(K)\succ0\).

Applying It\^o's formula to \(x(t)^\top P_i(K)x(t)\) and using
\eqref{eq:eval-Ei} gives \(d\bigl(x(t)^\top P_i(K)x(t)\bigr)
=-x(t)^\top S_i(K)x(t)\,dt+dM_i(t)\), where \(M_i(\cdot)\) is a local
martingale; see, e.g., \cite{oksendal_2003}. Fix \(T>0\) and let
\(\tau_\ell:=\min\bigl(\inf\{t\ge0:\|x(t)\|\ge\ell\},\,T\bigr)\) for
\(\ell>0\). Integrating over \([0,\tau_\ell]\) and taking expectations, using
that \(x(0)=x_0\) is deterministic and that the stopped process
\(M_i(\min(\cdot,\tau_\ell))\) is a true martingale, we obtain
\begin{align*}
&\mathbb E\bigl[x(\tau_\ell)^\top P_i(K)x(\tau_\ell)\bigr]
- x_0^\top P_i(K)x_0 \\
&\qquad = -\,\mathbb E\!\int_0^{\tau_\ell} x(t)^\top S_i(K)x(t)\,dt.
\end{align*}
Since \(\tau_\ell\le T\) and \(\tau_\ell\to T\) almost surely as
\(\ell\to\infty\), monotone convergence applies to the right-hand side, and
since \(x(\cdot)\) is continuous with
\(\mathbb E\bigl[\sup_{t\in[0,T]}\|x(t)\|^2\bigr]<\infty\) for the linear SDE
\eqref{eq:closed-loop-SDE}, see, e.g., \cite{nisio2015}, dominated convergence
applies to the left-hand side. Hence the identity holds with \(\tau_\ell\)
replaced by \(T\). Since \(K\in\mathcal K_{\mathrm{MSS}}\), one has
\(\mathbb E\|x(T)\|^2\to0\), and letting \(T\to\infty\) yields
\(J_i(K;x_0)=x_0^\top P_i(K)x_0\).
\end{proof}

\subsection{Player-Wise Policy Improvement}
We first develop the player-wise policy-improvement step. Given a value matrix, it returns the gain that enforces the corresponding fixed-value stationarity condition, independently of how the value matrix was obtained.

Fix a player \(i\in\mathcal I\), a value matrix \(P\in\mathbb S^n_{++}\), and the gains \(K_j\), \(j\in\mathcal I\setminus\{i\}\), of the remaining players. The improved gain for player \(i\) is obtained by enforcing the first-order stationarity condition of player \(i\)'s induced cost, which is quadratic in \(K_i\), while holding the value matrix fixed at \(P\). This fixed-value stationarity condition is
\begin{equation}\label{eq:Ui-stationarity}
\begin{aligned}
0
&=
\bigl(R_{ii}+D_i^\top P D_i\bigr)\,\widehat K_i \\
&\,
+\sum_{j\in\mathcal I\setminus\{i\}}
D_i^\top P D_j\,K_j
- B_i^\top P
- D_i^\top P C,
\end{aligned}
\end{equation}
where \(\widehat K_i\in\mathbb R^{m_i\times n}\) denotes the updated gain of player \(i\). Since \(R_{ii}\succ0\) and \(P\succ0\), the matrix
$
R_{ii}+D_i^\top P D_i
$
is positive definite. Hence \eqref{eq:Ui-stationarity} uniquely determines \(\widehat K_i\) as a function of the value matrix \(P\) and the gains \(K_j\), \(j\in\mathcal I\setminus\{i\}\).

\begin{algorithm}[t]
\caption{Sequential Policy Iteration for Stochastic $N$-Player LQ Differential Games}
\label{alg:cascaded-PI}
\begin{algorithmic}
\REQUIRE Stabilizing initial gain profile $K^0=(K_1^0,\dots,K_N^0)\in\mathcal K_{\mathrm{MSS}}$,
system matrices $A,B_i,C,D_i$, weights $Q_i,R_{ii},R_{ij}$, tolerance $\varepsilon>0$

\STATE $k\gets 0$

\REPEAT
    \FOR{$i \in \mathcal I $}
        \STATE Form the partially updated profile
        \[
        \widetilde K^{k,i}
        :=
        \bigl(K_1^{k+1},\dots,K_{i-1}^{k+1},K_i^k,\dots,K_N^k\bigr).
        \]

        \STATE \textbf{Policy evaluation:} Solve for $P_i^{k}\in\mathbb S^n_{++}$
        \begin{align*}
        0
        &=
        S_i(\widetilde K^{k,i})
        +P_i^{k}A_{\mathrm{cl}}(\widetilde K^{k,i})
        +A_{\mathrm{cl}}(\widetilde K^{k,i})^\top P_i^{k}
        \\
        &\quad
        +C_{\mathrm{cl}}(\widetilde K^{k,i})^\top
        P_i^{k}
        C_{\mathrm{cl}}(\widetilde K^{k,i}).
        \end{align*}

        \STATE \textbf{Policy improvement:} Update
        \begin{equation}
        \begin{aligned}
          K_i^{k+1}\nonumber
            :={}&
            \bigl(R_{ii}+D_i^\top P_i^{k}D_i\bigr)^{-1}
            \Bigl(
            B_i^\top P_i^{k} \\
            &\qquad
            + D_i^\top P_i^{k}
            \bigl(C-\sum_{j\in\mathcal I\setminus\{i\}} D_j\widetilde K_j^{k,i}\bigr)
            \Bigr)
        \end{aligned}
        \end{equation}
    \ENDFOR

    \STATE $k\gets k+1$
\UNTIL{$\displaystyle \sum_{i\in\mathcal I} \|K_i^k-K_i^{k-1}\|<\varepsilon$}

\RETURN $ K^k$
\end{algorithmic}
\end{algorithm}

\subsection{Sequential Iteration Mapping}
We now combine policy evaluation and policy improvement into a single player-wise update and compose these updates sequentially.
For a gain profile \(K\in\mathcal K_{\mathrm{MSS}}\) and a player \(i\in\mathcal I\), the value matrix \(P_i(K)\in\mathbb S^n_{++}\) is well defined by Lemma~\ref{lem:eval-Pi}. Evaluating \(P_i(K)\) and applying the improvement step \eqref{eq:Ui-stationarity} with \(P=P_i(K)\) yields the player-wise gain update \(\mathcal U_i(K)\), defined as the unique solution \(\widehat K_i\) of \eqref{eq:Ui-stationarity} at \(P=P_i(K)\),
\begin{equation}\label{eq:Ti-update}
\mathcal U_i(K):=\widehat K_i\big|_{P=P_i(K)}.
\end{equation}
\begin{definition}[Player-wise update mapping]\label{def:cascaded-Ti}
For each \(i\in\mathcal I\), define the player-wise update mapping
$
\mathcal T_i:\mathcal K_{\mathrm{MSS}}\to\mathcal K
$
by
\begin{equation}\label{eq:Ti-mapping}
\mathcal T_i(K):=
\bigl(K_1,\dots,K_{i-1},\mathcal U_i(K),K_{i+1},\dots,K_N\bigr).
\end{equation}
\end{definition}

Within each iteration, these player-wise updates are applied to one player at a time, in an arbitrary but fixed order, here \(1,\dots,N\).

Let \(K^k\in\mathcal K_{\mathrm{MSS}}\) denote the current iterate. The partially updated gain profiles generated within iteration \(k\) are defined recursively by
\begin{equation}\label{eq:stage-profile}
\widetilde K^{k,i}:=
\mathcal T_{i-1}\circ\cdots\circ \mathcal T_1(K^k),
\qquad i\in\mathcal I,
\end{equation}
where, by convention, \(\mathcal T_{i-1}\circ\cdots\circ \mathcal T_1\) denotes the identity mapping for \(i=1\).
Equivalently,
\begin{equation}\label{eq:cascaded-profile}
\widetilde K^{k,i}
=
\bigl(K_1^{k+1},\dots,K_{i-1}^{k+1},K_i^k,\dots,K_N^k\bigr),
\, i\in\mathcal I\setminus\{1\}.
\end{equation}
\begin{definition}[Sequential policy iteration mapping]\label{def:sequential-map}
Define the sequential policy-iteration mapping by
\begin{equation}\label{eq:sequential_update}
\mathcal T_{\mathrm{seq}}:=\mathcal T_N\circ \mathcal T_{N-1}\circ\cdots\circ \mathcal T_1.
\end{equation}
\end{definition}
Algorithm~\ref{alg:cascaded-PI} summarizes the resulting sequential policy-iteration scheme.

\subsection{Preservation of Mean-Square Stability}

Each \(\mathcal T_i\) is defined through the value matrix \(P_i(K)\) and therefore requires \(K\in\mathcal{K}_{\mathrm{MSS}}\). The composition \eqref{eq:sequential_update} is thus well posed only if every partially updated gain profile again lies in \(\mathcal{K}_{\mathrm{MSS}}\). The following theorem establishes that each player-wise update preserves mean-square stability, thereby confirming that Algorithm~\ref{alg:cascaded-PI} is well posed.
\begin{theorem}[Mean-square stability preservation]
\label{thm:mss-preservation-cascaded}
Fix \(k\in\mathbb N\) and let \(K^k\in\mathcal K_{\mathrm{MSS}}\). Let the stage
profiles \(\widetilde K^{k,i}\), \(i\in\mathcal I\), be defined by
\eqref{eq:stage-profile}, with \(\widetilde K^{k,1}=K^k\). Then
$
\widetilde K^{k,i}\in\mathcal K_{\mathrm{MSS}}$,
$ i\in\mathcal I.
$
Moreover,
$
K^{k+1}=\mathcal T_{\mathrm{seq}}(K^k)\in\mathcal K_{\mathrm{MSS}}.
$
\end{theorem}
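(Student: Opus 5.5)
The argument is an induction over the stages \(i=1,\dots,N\) within iteration \(k\), driven by a single-player claim. The claim is that if \(K\in\mathcal K_{\mathrm{MSS}}\), then \(\mathcal T_i(K)\in\mathcal K_{\mathrm{MSS}}\) for every \(i\in\mathcal I\). Given this, \(\widetilde K^{k,1}=K^k\in\mathcal K_{\mathrm{MSS}}\) by hypothesis. If \(\widetilde K^{k,i}\in\mathcal K_{\mathrm{MSS}}\), then \(\widetilde K^{k,i+1}=\mathcal T_i(\widetilde K^{k,i})\in\mathcal K_{\mathrm{MSS}}\). Finally \(K^{k+1}=\mathcal T_N(\widetilde K^{k,N})\in\mathcal K_{\mathrm{MSS}}\). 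Since each \(\mathcal U_i\) only requires \(P_i(\cdot)\), which Lemma~\ref{lem:eval-Pi} supplies on \(\mathcal K_{\mathrm{MSS}}\), every step is well defined.

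To prove the claim, fix \(K\in\mathcal K_{\mathrm{MSS}}\) and set \(P:=P_i(K)\succ0\) from Lemma~\ref{lem:eval-Pi}. Let \(\widehat K_i=\mathcal U_i(K)\) and \(\widehat K:=\mathcal T_i(K)\). The plan is to use \(X=P\) as a Lyapunov certificate in \eqref{eq:Lyap-MSS} for \(\widehat K\), in the manner of Kleinman's argument.

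Write \(\bar A:=A-\sum_{j\neq i}B_jK_j\) and \(\bar C:=C-\sum_{j\neq i}D_jK_j\), so that \(A_{\mathrm{cl}}(K)=\bar A-B_iK_i\) and \(C_{\mathrm{cl}}(K)=\bar C-D_iK_i\). Let \(R:=R_{ii}+D_i^\top PD_i\succ0\). Condition \eqref{eq:Ui-stationarity} then reads \(R\widehat K_i=B_i^\top P+D_i^\top P\bar C=:L\). For a generic gain \(G\) of player \(i\), expanding the quadratic form gives
\[
\Phi(G):=(\bar A-B_iG)^\top P+P(\bar A-B_iG)+(\bar C-D_iG)^\top P(\bar C-D_iG)+G^\top R_{ii}G .
\]
This simplifies to
\[
\Phi(G)=\bar A^\top P+P\bar A+\bar C^\top P\bar C-G^\top L-L^\top G+G^\top RG .
\]
Completing the square yields
\[
\Phi(G)=\Phi(\widehat K_i)+(G-\widehat K_i)^\top R\,(G-\widehat K_i).
\]
The evaluation equation \eqref{eq:eval-Ei} says \(\Phi(K_i)=-Q_i-\sum_{j\neq i}K_j^\top R_{ij}K_j\). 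It follows that
\[
\Phi(\widehat K_i)=-Q_i-\sum_{j\neq i}K_j^\top R_{ij}K_j-(K_i-\widehat K_i)^\top R(K_i-\widehat K_i).
\]
Since \(\Phi(\widehat K_i)=A_{\mathrm{cl}}(\widehat K)^\top P+PA_{\mathrm{cl}}(\widehat K)+C_{\mathrm{cl}}(\widehat K)^\top PC_{\mathrm{cl}}(\widehat K)+\widehat K_i^\top R_{ii}\widehat K_i\), we obtain
\[
A_{\mathrm{cl}}(\widehat K)^\top P+PA_{\mathrm{cl}}(\widehat K)+C_{\mathrm{cl}}(\widehat K)^\top PC_{\mathrm{cl}}(\widehat K)\preceq -Q_i\prec0 .
\]
This uses \(R_{ij}\succeq0\), \(R\succ0\), \(R_{ii}\succ0\) and \(Q_i\succ0\). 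With \(X=P\succ0\), the characterization \eqref{eq:Lyap-MSS} then gives \(\widehat K\in\mathcal K_{\mathrm{MSS}}\).

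The main obstacle is the completing-the-square identity, specifically checking that the cross terms \(-G^\top L-L^\top G\) collect exactly the drift term \(B_i^\top P\) and the diffusion term \(D_i^\top P\bar C\). The diffusion coupling through the other players' \(D_jK_j\) must be absorbed into \(\bar C\), which is precisely why \eqref{eq:Ui-stationarity} includes the \(\sum_{j\neq i}D_i^\top PD_jK_j\) term. Once this algebra is verified, strictness of the inequality comes for free from \(Q_i\succ0\).
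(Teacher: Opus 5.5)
Your proposal is correct and follows the paper's proof. The paper likewise uses $P_i(\widetilde K^{k,i})$ as the certificate in \eqref{eq:Lyap-MSS} for $\widetilde K^{k,i+1}$, substitutes the stationarity condition to obtain $\mathcal L_{\widetilde K^{k,i+1}}(P_i^{k})+S_i(\widetilde K^{k,i+1})=-\Delta K_i^\top(R_{ii}+D_i^\top P_i^{k}D_i)\Delta K_i$, and inducts over the stages; your completing-the-square in a generic gain $G$ with $\bar C$ absorbing the other players' diffusion terms is the same identity written differently.
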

\begin{proof}
For \(K\in\mathcal K_{\mathrm{MSS}}\), define the Lyapunov operator
\begin{equation}
\label{eq:proof-Lyapunov-operator}
\mathcal L_K(X)
:=
A_{\mathrm{cl}}(K)^\top X
+
X A_{\mathrm{cl}}(K)
+
C_{\mathrm{cl}}(K)^\top X C_{\mathrm{cl}}(K),
\end{equation}
with $ X\in\mathbb S^n$.
For notational convenience, set
$
A^{k,i}:=A_{\mathrm{cl}}(\widetilde K^{k,i})$,
$
C^{k,i}:=C_{\mathrm{cl}}(\widetilde K^{k,i})$ and
$
S_i^{k,i}:=S_i(\widetilde K^{k,i}).
$
In addition, define
$
\widetilde K^{k,N+1}:=K^{k+1}=\mathcal T_{\mathrm{seq}}(K^k).
$

Fix \(i\in\mathcal I\) and assume that
$
\widetilde K^{k,i}\in\mathcal K_{\mathrm{MSS}}.
$
By Lemma~\ref{lem:eval-Pi}, the generalized Lyapunov equation 
\begin{equation}
\label{eq:proof-stage-evaluation}
\mathcal L_{\widetilde K^{k,i}}(P_i^{k})+S_i^{k,i}=0
\end{equation}
admits a unique solution \(P_i^{k}\in\mathbb S_{++}^n\).

Let
$
\Delta K_i:=K_i^{k+1}-K_i^k.
$
Since only the \(i\)-th gain is updated, one has
\begin{equation}
\label{eq:proof-A-update}
A^{k,i+1}=A^{k,i}-B_i\Delta K_i
\end{equation}
and
\begin{equation}
\label{eq:proof-C-update}
C^{k,i+1}=C^{k,i}-D_i\Delta K_i.
\end{equation}
A direct expansion of the Lyapunov operator at \(\widetilde K^{k,i+1}\) yields
\begin{align}
\mathcal L_{\widetilde K^{k,i+1}}(P_i^{k})
&=
\mathcal L_{\widetilde K^{k,i}}(P_i^{k}) \notag \\
&\quad -\Delta K_i^\top\!\bigl(B_i^\top P_i^{k}+D_i^\top P_i^{k}C^{k,i}\bigr)
\notag\\
&\quad
-\bigl(P_i^{k}B_i+(C^{k,i})^\top P_i^{k}D_i\bigr)\Delta K_i \notag \\
&\quad +\Delta K_i^\top D_i^\top P_i^{k}D_i\,\Delta K_i.
\label{eq:proof-L-expansion}
\end{align}
Moreover, since only \(K_i\) changes,
\begin{align}
S_i^{k,i+1}
&=
S_i^{k,i}
-(K_i^k)^\top R_{ii}K_i^k
+(K_i^{k+1})^\top R_{ii}K_i^{k+1}.
\label{eq:proof-S-update}
\end{align}

Using the stationarity condition \eqref{eq:Ui-stationarity} for player \(i\) at
\(\widetilde K^{k,i}\), one obtains
\begin{equation}
\label{eq:proof-stationarity-rewrite}
B_i^\top P_i^{k}+D_i^\top P_i^{k}C^{k,i}
=
R_{ii}K_i^{k+1}+D_i^\top P_i^{k}D_i\,\Delta K_i.
\end{equation}
Substituting \eqref{eq:proof-stationarity-rewrite} into
\eqref{eq:proof-L-expansion} and combining the result with
\eqref{eq:proof-S-update} and the evaluation identity \eqref{eq:proof-stage-evaluation} yields
\begin{equation}
\label{eq:proof-completed-square}
\mathcal L_{\widetilde K^{k,i+1}}(P_i^{k})+S_i^{k,i+1}
=
-\Delta K_i^\top\bigl(R_{ii}+D_i^\top P_i^{k}D_i\bigr)\Delta K_i
\preceq0.
\end{equation}
Since \(Q_i\succ0\) and \(R_{ij}\succeq0\) for all \(j\in\mathcal I\), one has
$
S_i^{k,i+1}\succ0.
$
Therefore,
\begin{equation}
\label{eq:proof-Lyapunov-ineq}
\mathcal L_{\widetilde K^{k,i+1}}(P_i^{k})
=
-\,S_i^{k,i+1}
-\Delta K_i^\top\bigl(R_{ii}+D_i^\top P_i^{k}D_i\bigr)\Delta K_i
\prec0.
\end{equation}
Since \(P_i^{k}\succ0\), the MSS Lyapunov criterion \eqref{eq:Lyap-MSS}
implies that
$
\widetilde K^{k,i+1}\in\mathcal K_{\mathrm{MSS}}.
$
The claim follows by induction over \(i\in\mathcal I\).
\end{proof}

\subsection{Fixed-Point Characterization}
Theorem~\ref{thm:mss-preservation-cascaded} guarantees that the iteration remains on \(\mathcal K_{\mathrm{MSS}}\). It remains to characterize its fixed points and thereby confirm that, upon convergence, Algorithm~\ref{alg:cascaded-PI} returns a solution of the coupled equilibrium system \eqref{eq:equilibrium-eval}-\eqref{eq:equilibrium-improve}.

\begin{proposition}[Fixed-point characterization]
\label{prop:fixed-point-characterization}
A gain profile \(K^\star\in\mathcal K_{\mathrm{MSS}}\) satisfies
\(\mathcal T_{\mathrm{seq}}(K^\star)=K^\star\) if and only if it satisfies the coupled
equilibrium equations \eqref{eq:equilibrium-eval}--\eqref{eq:equilibrium-improve},
and hence if and only if it is a stabilizing linear feedback Nash equilibrium in
the sense of Definition~\ref{def:LFNE}. In that case,
\(\mathcal T_i(K^\star)=K^\star\) for every \(i\in\mathcal I\).
\end{proposition}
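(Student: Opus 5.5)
We split the statement into three parts: the forward direction, the reverse direction, and the link to Definition~\ref{def:LFNE}.

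\emph{Forward direction.} Assume \(K^\star\in\mathcal K_{\mathrm{MSS}}\) satisfies \(\mathcal T_{\mathrm{seq}}(K^\star)=K^\star\). Each \(\mathcal T_i\) changes only the \(i\)-th block of the profile, and the blocks are updated in the order \(1,\dots,N\). Hence the \(i\)-th block of \(\mathcal T_{\mathrm{seq}}(K^\star)\) equals \(\mathcal U_i(\widetilde K^{\star,i})\), where \(\widetilde K^{\star,i}:=\mathcal T_{i-1}\circ\cdots\circ\mathcal T_1(K^\star)\). By Theorem~\ref{thm:mss-preservation-cascaded}, every \(\widetilde K^{\star,i}\) lies in \(\mathcal K_{\mathrm{MSS}}\), so all these quantities are well defined.

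We then argue by induction over \(i\) that \(\widetilde K^{\star,i}=K^\star\) and \(\mathcal T_i(K^\star)=K^\star\).
\begin{itemize}
\item For \(i=1\), we have \(\widetilde K^{\star,1}=K^\star\). The first block of the fixed-point identity gives \(\mathcal U_1(K^\star)=K_1^\star\), so \(\mathcal T_1(K^\star)=K^\star\) and \(\widetilde K^{\star,2}=K^\star\).
\item If \(\widetilde K^{\star,i}=K^\star\), then the \(i\)-th block gives \(\mathcal U_i(K^\star)=K_i^\star\). Hence \(\mathcal T_i(K^\star)=K^\star\) and \(\widetilde K^{\star,i+1}=K^\star\).
\end{itemize}
The point to state carefully is that later updates never overwrite earlier blocks. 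This establishes the last claim of the proposition.

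Unpacking \(\mathcal U_i(K^\star)=K_i^\star\) yields the equilibrium equations:
\begin{itemize}
\item \(P_i(K^\star)\) solves \eqref{eq:eval-Ei} at \(K^\star\), which is exactly \eqref{eq:equilibrium-eval} with \(P_i^\star:=P_i(K^\star)\).
\item \(K_i^\star\) solves \eqref{eq:Ui-stationarity} with \(P=P_i^\star\) and \(K_j=K_j^\star\). Rearranging gives \eqref{eq:equilibrium-improve}.
\end{itemize}

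\emph{Reverse direction.} Assume \((K^\star,P^\star)\) satisfies \eqref{eq:equilibrium-eval}--\eqref{eq:equilibrium-improve} with \(K^\star\in\mathcal K_{\mathrm{MSS}}\). By the uniqueness part of Lemma~\ref{lem:eval-Pi}, \eqref{eq:equilibrium-eval} forces \(P_i^\star=P_i(K^\star)\). Since \(R_{ii}+D_i^\top P_i^\star D_i\succ0\), the stationarity equation has a unique solution, and \eqref{eq:equilibrium-improve} then forces \(\mathcal U_i(K^\star)=K_i^\star\). Thus \(\mathcal T_i(K^\star)=K^\star\) for every \(i\), and composing these gives \(\mathcal T_{\mathrm{seq}}(K^\star)=K^\star\).

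\emph{Equivalence with Nash equilibria.} One direction, that a stabilizing linear feedback Nash equilibrium satisfies the coupled equations, is stated in the paper just before Problem~\ref{prob:main}.

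The converse is the main obstacle: the coupled equations must imply the inequality \eqref{eq:nash-def}. Fix \(i\) and any \(\widetilde K_i\) with \(K':=(\widetilde K_i,K^\star_{-i})\in\mathcal K_{\mathrm{MSS}}\).
\begin{enumerate}
\item Repeat the completion-of-squares computation from the proof of Theorem~\ref{thm:mss-preservation-cascaded}, with \(\Delta K_i:=\widetilde K_i-K_i^\star\), using \eqref{eq:equilibrium-improve} in place of the stationarity identity. This gives
\[
\mathcal L_{K'}(P_i^\star)+S_i(K')=\Delta K_i^\top\bigl(R_{ii}+D_i^\top P_i^\star D_i\bigr)\Delta K_i\succeq0 .
\]
The sign is now reversed relative to that proof, because \(K_i^\star\), not \(\widetilde K_i\), is the stationary gain.
\item Hence \(\mathcal L_{K'}\bigl(P_i(K')-P_i^\star\bigr)=-\Delta K_i^\top\bigl(R_{ii}+D_i^\top P_i^\star D_i\bigr)\Delta K_i\preceq0\).
\item Since \(K'\) is MSS, \(-\mathcal L_{K'}^{-1}\) is a positive operator. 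This follows from the integral representation of the solution through the second-moment flow, as used in Lemma~\ref{lem:eval-Pi}.
\item Therefore \(P_i(K')\succeq P_i^\star\). By Lemma~\ref{lem:eval-Pi}, this means \(J_i(K';x_0)\ge J_i(K^\star;x_0)\) for all \(x_0\), which is \eqref{eq:nash-def}.
\end{enumerate}
Alternatively, Itô's formula applied to \(x^\top P_i^\star x\) along the trajectories of \(K'\), with the same stopping argument as in Lemma~\ref{lem:eval-Pi}, gives the same inequality directly.
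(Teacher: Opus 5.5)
Your proof is correct. For the fixed-point equivalence it follows the paper's argument. Necessity uses the block-by-block induction: block $i$ of $\mathcal T_{\mathrm{seq}}(K^\star)$ equals $\mathcal U_i$ evaluated at the profile entering stage $i$ and is never overwritten later, which forces $\mathcal U_i(K^\star)=K_i^\star$ and $\mathcal T_i(K^\star)=K^\star$. Sufficiency checks $\mathcal U_i(K^\star)=K_i^\star$ directly. Your explicit use of the uniqueness in Lemma~\ref{lem:eval-Pi} to identify $P_i^\star$ with $P_i(K^\star)$ makes precise a step the paper leaves implicit.

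\textbf{Link to Definition~\ref{def:LFNE}.} This is where you genuinely differ. The paper gives no argument here: it asserts that \eqref{eq:equilibrium-eval}--\eqref{eq:equilibrium-improve} are necessary and sufficient for a stabilizing Nash equilibrium and cites \cite{Lucas}. You prove the sufficiency direction yourself, and the computation checks out.
\begin{itemize}
\item Since $C-\sum_{j\neq i}D_jK_j^\star=C_{\mathrm{cl}}(K^\star)+D_iK_i^\star$, condition \eqref{eq:equilibrium-improve} reduces to $R_{ii}K_i^\star=B_i^\top P_i^\star+D_i^\top P_i^\star C_{\mathrm{cl}}(K^\star)$.
\item All terms linear in $\Delta K_i$ then cancel, leaving $\mathcal L_{K'}(P_i^\star)+S_i(K')=\Delta K_i^\top(R_{ii}+D_i^\top P_i^\star D_i)\Delta K_i$.
\item Positivity of $-\mathcal L_{K'}^{-1}$ for $K'\in\mathcal K_{\mathrm{MSS}}$ is the same fact the paper uses in the finite-termination part of Theorem~\ref{thm:homotopy_wellposed}.
\end{itemize}
What this buys is a self-contained proof that stationarity at a fixed point is already a global best response against every mean-square-stabilizing unilateral deviation, even though $K_i\mapsto J_i$ is not convex. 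It does so by reusing the completed-square identity of Theorem~\ref{thm:mss-preservation-cascaded} with the sign reversed.

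\textbf{Converse direction.} For Nash $\Rightarrow$ coupled equations you rely on the assertion preceding Problem~\ref{prob:main}, so on that side you rest on the same cited support as the paper. A first-order argument would close it:
\begin{itemize}
\item $K_i^\star$ is an interior minimizer of $\operatorname{tr}P_i(\cdot,K_{-i}^\star)$ on the open set of stabilizing deviations.
\item Setting the derivative to zero gives $M_iY=0$, where $M_i:=B_i^\top P_i^\star+D_i^\top P_i^\star C_{\mathrm{cl}}(K^\star)-R_{ii}K_i^\star$ and $Y\succ0$ is the integrated second moment.
\item Since $Y\succ0$, this gives $M_i=0$, which is exactly \eqref{eq:equilibrium-improve}.
\end{itemize}
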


\begin{proof}
(Sufficiency) If \(K^\star\) satisfies \eqref{eq:equilibrium-eval}--\eqref{eq:equilibrium-improve}, then for each \(i\in\mathcal I\) the value matrix \(P_i(K^\star)\) solves \eqref{eq:equilibrium-eval} and \(K_i^\star\) satisfies the stationarity condition \eqref{eq:equilibrium-improve}; hence \(\mathcal U_i(K^\star)=K_i^\star\) and \(\mathcal T_i(K^\star)=K^\star\) for all \(i\in\mathcal I\) and therefore \(\mathcal T_{\mathrm{seq}}(K^\star)=K^\star\).

(Necessity) Suppose \(\mathcal T_{\mathrm{seq}}(K^\star)=K^\star\). By Definition~\ref{def:cascaded-Ti}, each map \(\mathcal T_j\) updates only block \(j\) and leaves the remaining blocks unchanged. Hence \(\mathcal T_1\) replaces \(K_1^\star\) by \(\mathcal U_1(K^\star)\); since block \(1\) is not modified at the later stages, \(\mathcal T_{\mathrm{seq}}(K^\star)=K^\star\) forces \(\mathcal U_1(K^\star)=K_1^\star\), so that the profile entering stage \(2\) is again \(K^\star\). Proceeding inductively, the profile entering each stage \(i\) equals \(K^\star\) and \(\mathcal U_i(K^\star)=K_i^\star\) for every \(i\in\mathcal I\), which is precisely the stationarity condition \eqref{eq:equilibrium-improve} with \(P^\star_i=P^\star_i(K^\star)\) solving \eqref{eq:equilibrium-eval}. Hence \(K^\star\) satisfies \eqref{eq:equilibrium-eval}--\eqref{eq:equilibrium-improve}, which are necessary and sufficient conditions for \(K^\star\) to be a stabilizing linear feedback Nash equilibrium~\cite{Lucas}.
\end{proof}

Together with the mean-square-stability preservation of Theorem~\ref{thm:mss-preservation-cascaded}, this establishes that Algorithm~\ref{alg:cascaded-PI} is a well-posed iteration on \(\mathcal K_{\mathrm{MSS}}\) whose fixed points solve the coupled equilibrium system \eqref{eq:equilibrium-eval}--\eqref{eq:equilibrium-improve}, thereby completing Contribution~1 and resolving Problem~\ref{prob:main}.

\section{Local Convergence of Sequential Policy Iteration}
\label{sec:local_convergence}

With $\mathcal T_{\mathrm{seq}}$ well defined on $\mathcal{K}_{\mathrm{MSS}}$ by Theorem~\ref{thm:mss-preservation-cascaded}, we turn to its behavior near a stabilizing linear feedback Nash equilibrium.
Such equilibria need not be unique, since the coupled equations may admit several solutions already in the deterministic special case contained in the present model \cite{engwerda2005,basar_olsder_1998}.
Convergence is therefore a local, equilibrium-dependent property.
We analyze the Fr\'echet derivative of the iteration map \eqref{eq:sequential_update}, which governs the first-order propagation of perturbations near an equilibrium and thus yields a criterion for local convergence \cite{OrtegaRheinboldt}.
The derivative identifies the sequential scheme and its decoupled counterpart, in which all players update simultaneously, as block Gauss--Seidel and block Jacobi iterations of a single operator, respectively. 
This explains the faster convergence reported for sequential updates \cite{nortmann_monti_sassano_mylvaganam2024_tac_iterative_datadriven_lq_games,chen_chen_lewis_xie_mcpi_nash_dg} and yields a verifiable condition under which the advantage holds.

\subsection{Linearization and Local Convergence Criterion}\label{sec:playerwise_sens}

To derive the Fr\'echet derivative of the full sequential iteration map \(\mathcal{T}_{\mathrm{seq}}\) from Definition~\ref{def:sequential-map}, we first characterize the local linearization of the player-wise update maps \(\mathcal{T}_i\) from Definition~\ref{def:cascaded-Ti}.
For fixed \(i\in\mathcal I\), define the policy-evaluation residual map
$
\mathcal{E}_i:\mathbb S^n\times\mathcal K\to\mathbb S^n
$
by
\begin{align}
\mathcal{E}_i(P,K)
&:=
A_{\mathrm{cl}}(K)^\top P
+
P A_{\mathrm{cl}}(K)
\label{eq:eval-map-local}
\\
&\quad
+
C_{\mathrm{cl}}(K)^\top P C_{\mathrm{cl}}(K)
+
S_i(K).
\nonumber
\end{align}
By Lemma~\ref{lem:eval-Pi}, for every \(K\in\mathcal K_{\mathrm{MSS}}\), the value matrix \(P_i(K)\) is characterized by
\begin{equation}
\label{eq:eval-map-local-characterization}
\mathcal{E}_i\bigl(P_i(K),K\bigr)=0.
\end{equation}

\begin{lemma}[Local smoothness of the evaluation map]
\label{lem:local-smooth-P}
Fix \(i\in\mathcal I\) and let \(K^\star\in\mathcal K_{\mathrm{MSS}}\). Then there exists an open neighborhood \(\mathcal N\subset\mathcal K_{\mathrm{MSS}}\) of \(K^\star\) such that the solution map
\begin{equation}
\label{eq:local-solution-map}
\mathcal N\ni K\mapsto P_i(K)\in\mathbb S^n
\end{equation}
defined through \eqref{eq:eval-map-local-characterization}, is continuously Fr\'echet differentiable on \(\mathcal N\). Moreover, the partial Fr\'echet derivative \(\mathcal{D}_P\mathcal{E}_i(K)\) of \(\mathcal{E}_i\) with respect to \(P\), evaluated at \(\bigl(P_i(K),K\bigr)\),
\begin{equation}
\label{eq:local-eval-derivative}
\mathcal{D}_P\mathcal{E}_i(K):\mathbb S^n\to\mathbb S^n,
\end{equation}
is invertible for every \(K\in\mathcal N\), and its inverse is uniformly bounded on \(\mathcal N\).
\end{lemma}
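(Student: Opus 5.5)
The plan is to apply the implicit function theorem to the residual map \(\mathcal E_i\) from \eqref{eq:eval-map-local}. Since \(\mathcal E_i\) is affine in \(P\), its partial derivative with respect to \(P\) is independent of \(P\) and equals the Lyapunov operator of the closed loop:
\[
\mathcal D_P\mathcal E_i(K)[H]=\mathcal L_K(H)=A_{\mathrm{cl}}(K)^\top H+HA_{\mathrm{cl}}(K)+C_{\mathrm{cl}}(K)^\top HC_{\mathrm{cl}}(K),
\]
with \(H\in\mathbb S^n\). This is the operator \(\mathcal L_K\) already used in the proof of Theorem~\ref{thm:mss-preservation-cascaded}. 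Moreover, \(\mathcal E_i\) is a polynomial in the entries of \((P,K)\). Hence it is \(C^\infty\) jointly on \(\mathbb S^n\times\mathcal K\), viewed as finite-dimensional real vector spaces.

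First, I will show that \(\mathcal L_K\) is invertible on \(\mathbb S^n\) for every \(K\in\mathcal K_{\mathrm{MSS}}\). This is exactly the unique solvability of the generalized Lyapunov equation invoked in Lemma~\ref{lem:eval-Pi}. For every right-hand side \(-S\) with \(S\in\mathbb S^n\), there is a unique solution, citing \cite{damm2004}: mean-square stability is equivalent to \(\sigma(\mathcal L_K)\subset\mathbb C_-\). Consequently \(\mathcal L_K\) is a bijective linear map on a finite-dimensional space.

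Second, I will take \(\mathcal N\) open. The set \(\mathcal K_{\mathrm{MSS}}\) is itself open: the strict LMI \eqref{eq:Lyap-MSS} with a fixed \(X\succ0\) is preserved under small perturbations of \(K\), because its left-hand side depends continuously on \(K\). I then choose \(\mathcal N\) to be a closed-ball-contained open ball around \(K^\star\), that is, an open ball \(B_r(K^\star)\) whose closure \(\overline{B_r(K^\star)}\) lies in \(\mathcal K_{\mathrm{MSS}}\).

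Third, I will establish differentiability. Since \(\mathcal L_K\) is invertible, the map can be written explicitly as \(P_i(K)=-\mathcal L_K^{-1}(S_i(K))\). Inversion is smooth on the open set of invertible operators, and both \(K\mapsto\mathcal L_K\) and \(K\mapsto S_i(K)\) are polynomial, so \(K\mapsto P_i(K)\) is \(C^\infty\), in particular \(C^1\), on \(\mathcal N\). Equivalently, the implicit function theorem applied at each point of \(\mathcal N\) gives the same conclusion. It also yields the derivative formula \(DP_i(K)[\Delta]=-\mathcal L_K^{-1}\bigl(\mathcal D_K\mathcal E_i(P_i(K),K)[\Delta]\bigr)\), which will be needed later.

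The main (mild) obstacle is the uniform bound on the inverses. I will handle it by compactness. The map \(K\mapsto\mathcal L_K\) is continuous into the invertible operators on \(\mathbb S^n\), and inversion is continuous, so \(K\mapsto\|\mathcal L_K^{-1}\|\) is continuous on \(\mathcal K_{\mathrm{MSS}}\). On the compact set \(\overline{B_r(K^\star)}\subset\mathcal K_{\mathrm{MSS}}\) this function therefore attains a finite maximum, which bounds \(\|\mathcal L_K^{-1}\|\) on \(\mathcal N\). The only care needed is to ensure that the closure of \(\mathcal N\) stays inside \(\mathcal K_{\mathrm{MSS}}\), which is exactly why \(r\) is chosen that way rather than taking \(\mathcal N=\mathcal K_{\mathrm{MSS}}\), where the bound could blow up near the boundary.
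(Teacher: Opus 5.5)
Your proposal is correct and follows essentially the same route as the paper. Both arguments use openness of $\mathcal{K}_{\mathrm{MSS}}$ via the strict Lyapunov inequality and invertibility of $\mathcal{D}_P\mathcal{E}_i(K)=\mathcal{L}_K$ from the spectral characterization in \cite{damm2004}. Your explicit formula $P_i(K)=-\mathcal{L}_K^{-1}(S_i(K))$ is equivalent to the paper's implicit function theorem step. Your compactness argument on a closed ball inside $\mathcal{K}_{\mathrm{MSS}}$ is a more explicit version of the paper's ``shrink $\mathcal N$'' step for the uniform bound on the inverse.
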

\begin{proof}
Since \(K^\star\in\mathcal K_{\mathrm{MSS}}\), the strict Lyapunov inequality \eqref{eq:Lyap-MSS} remains valid under sufficiently small perturbations of \(K\). Hence \(\mathcal K_{\mathrm{MSS}}\) is open, and there exists an open neighborhood \(\mathcal N\subset\mathcal K_{\mathrm{MSS}}\) of \(K^\star\). The map \(\mathcal{E}_i(P,K)\) is continuously Fr\'echet differentiable in \((P,K)\), and by Lemma~\ref{lem:eval-Pi} the derivative \(\mathcal{D}_P\mathcal{E}_i(K^\star)\) is precisely the Lyapunov operator \(\mathcal L_{K^\star}\) from \eqref{eq:proof-Lyapunov-operator}, which is invertible for \(K^\star\in\mathcal K_{\mathrm{MSS}}\); see, e.g., \cite{damm2004}. The claim therefore follows from the implicit function theorem, see e.g. \cite{OrtegaRheinboldt}. Since \(\mathcal{D}_P\mathcal{E}_i\) depends continuously on \(K\), invertibility and a bound on the inverse persist under sufficiently small perturbations of \(K\). Hence, after possibly shrinking \(\mathcal N\), the stated properties hold for all \(K\in\mathcal N\).
\end{proof}

For fixed \(i\in\mathcal I\), define the fixed-value stationarity residual map associated with \eqref{eq:Ui-stationarity} by
\begin{equation}
\label{eq:local-Gi-domain}
\mathcal{G}_i:\mathbb S^n\times\mathbb R^{m_i\times n}\times\mathcal K\to\mathbb R^{m_i\times n}
\end{equation}
by
\begin{align}
\mathcal{G}_i(P,\widehat K_i, K)
&:=
\bigl(R_{ii}+D_i^\top P D_i\bigr)\widehat K_i
\label{eq:local-Gi}
\\
&\quad
+\sum_{j\in\mathcal I\setminus\{i\}} D_i^\top P D_j\, K_j
- B_i^\top P
- D_i^\top P C .
\nonumber
\end{align}
Let \(K\in\mathcal N\), and define \(P_i:=P_i(K)\) and \(\widehat K_i:=\mathcal U_i(K)\), where \(\mathcal U_i\) is given by \eqref{eq:Ti-update}. By \eqref{eq:eval-map-local-characterization}, one has \(\mathcal{E}_i(P_i,K)=0\). Then, by the definition of \(\mathcal U_i\),
\begin{equation}
\label{eq:local-stationarity-at-K}
\mathcal{G}_i(P_i,\widehat K_i,K)=0.
\end{equation}
Here \(P_i\) is obtained by policy evaluation at \(K\), and \(\widehat K_i\) is then obtained by enforcing the fixed-value stationarity condition with \(P_i\) and \(K\) held fixed.

Next, in addition to \(\mathcal{D}_P\mathcal{E}_i(K)\) from Lemma~\ref{lem:local-smooth-P}, we write \(\mathcal{D}_K\mathcal{E}_i(K)\) for the partial Fr\'echet derivative of \(\mathcal{E}_i\) with respect to \(K\), evaluated at \(\bigl(P_i(K),K\bigr)\), and \(\mathcal{D}_P\mathcal{G}_i(K)\), \(\mathcal{D}_{\widehat K_i}\mathcal{G}_i(K)\), and \(\mathcal{D}_K\mathcal{G}_i(K)\) for the corresponding partial Fr\'echet derivatives of \(\mathcal{G}_i\), evaluated at \(\bigl(P_i(K),\mathcal U_i(K),K\bigr)\) (explicit expressions are given in Appendix~\ref{app:frechet-blocks}). Whenever no ambiguity arises, we omit the dependence on \(K\).

Throughout, for a Fr\'echet differentiable map \(F\) we write \(\mathcal{D}F(K)\) for its Fr\'echet derivative at the point \(K\), which is a bounded linear operator, and \(\mathcal{D}F(K)[\Delta K]\) for its action on a perturbation \(\Delta K\). The subscript \(\bigl[\,\cdot\,\bigr]_j\) extracts the \(j\)-th block component.

\begin{lemma}[Local linearization of a player-wise update]
\label{lem:reduced-playerwise}
Fix \(i\in\mathcal I\) and let \(K\in\mathcal N\). Then the player-wise update map \(\mathcal{T}_i\) from Definition~\ref{def:cascaded-Ti} is continuously Fr\'echet differentiable at \(K\), and the \(i\)-th block row of \(\mathcal{D}\mathcal{T}_i(K)\) equals the linear operator \(\mathcal{D}\mathcal U_i(K):\mathcal K\to\mathbb R^{m_i\times n}\) given by
\begin{equation}
\label{eq:DUi-operator}
\mathcal{D}\mathcal U_i(K)
=
-
\bigl(\mathcal{D}_{\widehat K_i}\mathcal{G}_i\bigr)^{-1}
\Bigl(
\mathcal{D}_K\mathcal{G}_i
-
\mathcal{D}_P\mathcal{G}_i(\mathcal{D}_P\mathcal{E}_i)^{-1}\mathcal{D}_K\mathcal{E}_i
\Bigr).
\end{equation}
All remaining block rows act as the identity. In particular, \(\mathcal{D}\mathcal{T}_i(K)\) has exactly one nontrivial block row, namely the \(i\)-th one.
\end{lemma}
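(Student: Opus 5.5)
The plan is to write \(\mathcal U_i\) as a composition of two implicitly defined maps and then apply the chain rule together with the implicit function theorem.

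\textbf{Step 1 (evaluation map).} By Lemma~\ref{lem:local-smooth-P}, \(K\mapsto P_i(K)\) is continuously Fr\'echet differentiable on \(\mathcal N\), and \(\mathcal{D}_P\mathcal{E}_i\) is invertible there. Differentiating the identity \(\mathcal{E}_i(P_i(K),K)=0\) from \eqref{eq:eval-map-local-characterization} along a perturbation \(\Delta K\) gives
\[
\mathcal{D}_P\mathcal{E}_i\bigl[\mathcal{D}P_i(K)[\Delta K]\bigr]+\mathcal{D}_K\mathcal{E}_i[\Delta K]=0 .
\]
Hence
\[
\mathcal{D}P_i(K)=-(\mathcal{D}_P\mathcal{E}_i)^{-1}\mathcal{D}_K\mathcal{E}_i .
\]

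\textbf{Step 2 (improvement map).} \(\mathcal{G}_i\) is polynomial, hence smooth, in \((P,\widehat K_i,K)\). It is affine in \(\widehat K_i\), with
\[
\mathcal{D}_{\widehat K_i}\mathcal{G}_i[\Delta\widehat K_i]=(R_{ii}+D_i^\top P D_i)\Delta\widehat K_i .
\]
For \(P=P_i(K)\succ0\) this operator is left multiplication by a positive definite matrix. It is therefore invertible, and its inverse is continuous in \(P\). Since the equation is linear in \(\widehat K_i\), I can solve it explicitly:
\[
\mathcal U_i(K)=(R_{ii}+D_i^\top P_i(K)D_i)^{-1}\bigl(B_i^\top P_i(K)+D_i^\top P_i(K)C-\textstyle\sum_{j\neq i}D_i^\top P_i(K)D_jK_j\bigr).
\]
This expression is a composition of \(C^1\) maps, since matrix inversion is smooth on invertible matrices. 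So \(\mathcal U_i\) is continuously Fr\'echet differentiable on \(\mathcal N\), and no implicit function theorem is needed for this step.

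\textbf{Step 3 (derivative formula).} Differentiate the identity \(\mathcal{G}_i(P_i(K),\mathcal U_i(K),K)=0\) from \eqref{eq:local-stationarity-at-K} along \(\Delta K\):
\[
\mathcal{D}_P\mathcal{G}_i\,\mathcal{D}P_i(K)[\Delta K]+\mathcal{D}_{\widehat K_i}\mathcal{G}_i\,\mathcal{D}\mathcal U_i(K)[\Delta K]+\mathcal{D}_K\mathcal{G}_i[\Delta K]=0 .
\]
Insert Step 1 and apply \((\mathcal{D}_{\widehat K_i}\mathcal{G}_i)^{-1}\) to obtain \eqref{eq:DUi-operator}. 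One point needs care: \(\mathcal{D}_K\mathcal{G}_i\) differentiates only the explicit \(K\)-argument of \(\mathcal{G}_i\). In \eqref{eq:local-Gi} this argument enters only through \(K_j\) with \(j\neq i\), so its \(i\)-th block vanishes. The old gain \(K_i\) influences \(\mathcal U_i\) only through \(P_i(K)\), and Step 1 already accounts for that dependence.

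\textbf{Step 4 (block structure).} By Definition~\ref{def:cascaded-Ti}, \(\mathcal T_i(K)\) has components \(K_j\) for \(j\neq i\) and \(\mathcal U_i(K)\) in block \(i\). Its derivative therefore has identity block rows (the coordinate projections) for \(j\neq i\), and \(i\)-th block row \(\mathcal{D}\mathcal U_i(K)\). Continuity of \(K\mapsto\mathcal{D}\mathcal T_i(K)\) follows from the continuity of every ingredient on \(\mathcal N\), together with the uniformly bounded inverse from Lemma~\ref{lem:local-smooth-P}.

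\textbf{Main obstacle.} There is no deep difficulty. The main point requiring care is the bookkeeping in Step 3: keeping the explicit \(K\)-dependence of \(\mathcal{G}_i\) separate from its implicit dependence through \(P_i(K)\), so that the \(i\)-th block is not double counted. I would check this directly against the explicit block expressions in Appendix~\ref{app:frechet-blocks}.
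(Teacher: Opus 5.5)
Your proposal is correct and follows essentially the same route as the paper. Both arguments differentiate \(\mathcal{E}_i(P_i(K),K)=0\) to obtain \(\mathcal{D}P_i(K)\), differentiate the stationarity identity \eqref{eq:local-stationarity-at-K}, invert \(\mathcal{D}_{\widehat K_i}\mathcal{G}_i\) using \(R_{ii}+D_i^\top P_i(K)D_i\succ0\), and read off the block structure from Definition~\ref{def:cascaded-Ti}. Your explicit closed form for \(\mathcal U_i(K)\) is a useful refinement: it establishes differentiability of \(\mathcal U_i\) before the stationarity identity is differentiated, a point the paper only asserts at the end of its proof.
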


\begin{proof}
By Lemma~\ref{lem:local-smooth-P}, the map
\(
K\mapsto P_i(K)
\)
is continuously Fr\'echet differentiable on \(\mathcal N\). Since \eqref{eq:eval-map-local-characterization} holds for every \(K\in\mathcal N\), differentiation of \eqref{eq:eval-map-local-characterization} at \(K\) in the direction \(\Delta K\in\mathcal K\) yields
\begin{equation}
\label{eq:proof-DP}
\mathcal{D}P_i(K)[\Delta K]
=
-(\mathcal{D}_P\mathcal{E}_i)^{-1}\mathcal{D}_K\mathcal{E}_i[\Delta K].
\end{equation}
Here \((\mathcal{D}_P\mathcal{E}_i)^{-1}\) is well defined by Lemma~\ref{lem:local-smooth-P}.
Next, differentiating \eqref{eq:local-stationarity-at-K} at \(K\) in the direction \(\Delta K\) gives
\begin{align}
\label{eq:proof-DG}
0 &= \mathcal{D}_P\mathcal{G}_i\bigl[\mathcal{D}P_i(K)[\Delta K]\bigr]
+
\mathcal{D}_{\widehat K_i}\mathcal{G}_i\bigl[\mathcal{D}\mathcal{U}_i(K)[\Delta K]\bigr]
\notag\\
&\qquad
+
\mathcal{D}_K\mathcal{G}_i[\Delta K].
\end{align}
Moreover, \(\mathcal{D}_{\widehat K_i}\mathcal{G}_i\) is invertible, since
\begin{equation}
\label{eq:proof-Ghat-invertible}
\mathcal{D}_{\widehat K_i}\mathcal{G}_i[H]
=
\bigl(R_{ii}+D_i^\top P_i(K)D_i\bigr)H,
\quad H\in\mathbb R^{m_i\times n},
\end{equation}
and \(R_{ii}+D_i^\top P_i(K)D_i\succ0\) by \(R_{ii}\succ0\) and Lemma~\ref{lem:eval-Pi}. Substituting \eqref{eq:proof-DP} into \eqref{eq:proof-DG} and solving for
\(\mathcal{D}\mathcal U_i(K)\) yields \eqref{eq:DUi-operator}.
Finally, by Definition~\ref{def:cascaded-Ti}, the map \(\mathcal{T}_i\) leaves all player
components \(j\in\mathcal I\setminus\{i\}\) unchanged and updates only the \(i\)-th component through
\(\mathcal U_i\); hence the \(i\)-th block row of \(\mathcal{D}\mathcal{T}_i(K)\) equals \(\mathcal{D}\mathcal U_i(K)\), while all other block rows act as the identity, so that \(\mathcal{D}\mathcal{T}_i(K)\) has exactly one nontrivial block row, namely the \(i\)-th one. Since
\(K\mapsto P_i(K)\) is continuously Fr\'echet differentiable on \(\mathcal N\)
and the residual maps \(\mathcal{E}_i\) and \(\mathcal{G}_i\) are continuously Fr\'echet
differentiable in their arguments, \(\mathcal{T}_i\) is continuously Fr\'echet
differentiable at \(K\).
\end{proof}

Since the full sequential iteration map from Definition~\ref{def:sequential-map} is obtained by composing the player-wise update maps from Definition~\ref{def:cascaded-Ti}, its Fr\'echet derivative is given by the chain rule; see, e.g., \cite{OrtegaRheinboldt}. Let \(K^\star\in\mathcal K_{\mathrm{MSS}}\) be a stabilizing linear feedback Nash equilibrium. By Proposition~\ref{prop:fixed-point-characterization}, \(\mathcal{T}_i(K^\star)=K^\star\) for all \(i\in\mathcal I\), so that every partial composition \(\mathcal{T}_{i-1}\circ\cdots\circ \mathcal{T}_1\) fixes \(K^\star\) as well. Hence all intermediate evaluation points of the chain rule coincide with \(K^\star\), and
\begin{equation}
\label{eq:full-jacobian-at-fixed-point}
\mathcal{D}\mathcal{T}_{\mathrm{seq}}(K^\star)
=
\mathcal{D}\mathcal{T}_N(K^\star)\,\mathcal{D}\mathcal{T}_{N-1}(K^\star)\cdots \mathcal{D}\mathcal{T}_1(K^\star).
\end{equation}

\begin{lemma}[Sufficient criterion for local convergence]
\label{lem:local-convergence}
Let \(K^\star\in\mathcal K_{\mathrm{MSS}}\) be a stabilizing linear feedback Nash equilibrium, and hence a solution of \eqref{eq:equilibrium-eval}--\eqref{eq:equilibrium-improve}. Suppose that the sequential policy-iteration map \(\mathcal{T}_{\mathrm{seq}}\) from Definition~\ref{def:sequential-map} is continuously Fr\'echet differentiable in a neighborhood of \(K^\star\). If
\begin{equation}
\label{eq:local-spectral-condition}
\rho\!\left(\mathcal{D}\mathcal{T}_{\mathrm{seq}}(K^\star)\right)<1,
\end{equation}
then there exists an equivalent norm \(\|\cdot\|_\ast\) on \(\mathcal K\) and a neighborhood \(\mathcal V\subset\mathcal K\) of \(K^\star\) such that, for every initial point \(K^0\in\mathcal V\), the iterates generated by repeated application of \(\mathcal{T}_{\mathrm{seq}}\) converge linearly to \(K^\star\) in the norm \(\|\cdot\|_\ast\).
\end{lemma}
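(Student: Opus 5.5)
This is the standard Ostrowski-type local convergence argument; I would prove it in three steps. Write \(T:=\mathcal{D}\mathcal{T}_{\mathrm{seq}}(K^\star)\), a linear operator on the finite-dimensional space \(\mathcal K\).

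\textbf{Step 1: an adapted norm.} Fix \(\varepsilon>0\) with \(q:=\rho(T)+2\varepsilon<1\). By the classical construction in \cite{OrtegaRheinboldt}, there is a norm \(\|\cdot\|_\ast\) on \(\mathcal K\) whose induced operator norm satisfies \(\|T\|_\ast\le\rho(T)+\varepsilon\). One way to build it is to bring \(T\) (identified with a matrix on \(\mathbb R^{d}\), \(d=\sum_i m_i n\)) into Jordan form over \(\mathbb C\). One then rescales the basis by \(\mathrm{diag}(1,\delta,\delta^2,\dots)\) so that the off-diagonal entries become at most \(\varepsilon\). Finally, one takes the weighted \(\ell_\infty\)-norm in these coordinates, restricted to real vectors. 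Since \(\mathcal K\) is finite-dimensional, \(\|\cdot\|_\ast\) is equivalent to the Euclidean norm used elsewhere in the paper.

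\textbf{Step 2: local contraction.} By hypothesis, \(\mathcal{T}_{\mathrm{seq}}\) is continuously Fr\'echet differentiable on a neighborhood of \(K^\star\). We may take that neighborhood inside \(\mathcal K_{\mathrm{MSS}}\), since this set is open by the proof of Lemma~\ref{lem:local-smooth-P}. Continuity of the derivative then gives a radius \(r>0\) such that the closed ball \(\mathcal V:=\{K:\|K-K^\star\|_\ast\le r\}\) lies in this neighborhood and \(\|\mathcal{D}\mathcal{T}_{\mathrm{seq}}(K)-T\|_\ast\le\varepsilon\) for all \(K\in\mathcal V\). The ball is convex, so for \(K\in\mathcal V\) the mean-value inequality along the segment from \(K^\star\) to \(K\), together with \(\mathcal{T}_{\mathrm{seq}}(K^\star)=K^\star\) from Proposition~\ref{prop:fixed-point-characterization}, yields
\[
\|\mathcal{T}_{\mathrm{seq}}(K)-K^\star\|_\ast\le\sup_{t\in[0,1]}\bigl\|\mathcal{D}\mathcal{T}_{\mathrm{seq}}(K^\star+t(K-K^\star))\bigr\|_\ast\,\|K-K^\star\|_\ast\le q\,\|K-K^\star\|_\ast .
\]

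\textbf{Step 3: induction.} The bound shows that \(\mathcal{T}_{\mathrm{seq}}\) maps \(\mathcal V\) into itself. Hence every iterate \(K^k\) started from \(K^0\in\mathcal V\) stays in \(\mathcal V\subset\mathcal K_{\mathrm{MSS}}\), consistent with Theorem~\ref{thm:mss-preservation-cascaded}, so the iteration is well defined. Induction on \(k\) gives \(\|K^k-K^\star\|_\ast\le q^k\|K^0-K^\star\|_\ast\), which is linear convergence with rate \(q<1\).

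\textbf{Main obstacle.} The only delicate step is the norm construction in Step 1 when \(T\) has complex eigenvalues or nontrivial Jordan blocks, because the norm must be real-valued on \(\mathcal K\). I would handle this by defining \(\|K\|_\ast:=\|S^{-1}\mathrm{vec}(K)\|_\infty\), where \(S\) is the complex scaled Jordan basis. This quantity is a genuine norm on the real space \(\mathcal K\), and the induced operator bound \(\|T\|_\ast\le\rho(T)+\varepsilon\) carries over unchanged. Alternatively, I would simply cite the corresponding theorem in \cite{OrtegaRheinboldt}.
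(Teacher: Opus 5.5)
Your proposal is correct and follows the paper's proof: an adapted norm in which the induced operator norm of $\mathcal{D}\mathcal{T}_{\mathrm{seq}}(K^\star)$ is below one, a local contraction estimate on a closed $\|\cdot\|_\ast$-ball obtained from continuity of the derivative and the fixed-point property, and forward invariance of that ball. The only difference is that you carry out explicitly (via the scaled Jordan form and the mean-value inequality) the two steps the paper delegates to \cite{HornJohnson} and \cite{OrtegaRheinboldt,Deuflhard}.
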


\begin{proof}
Since \(\mathcal K\) is finite dimensional and \eqref{eq:local-spectral-condition} holds, there exists an equivalent norm \(\|\cdot\|_\ast\) on \(\mathcal K\) whose induced operator norm satisfies
\begin{equation}
\label{eq:proof-induced-norm}
\|\mathcal{D}\mathcal{T}_{\mathrm{seq}}(K^\star)\|_\ast<1;
\end{equation}
see, e.g., \cite{HornJohnson}. Because \(\mathcal{T}_{\mathrm{seq}}\) is continuously Fr\'echet differentiable in a neighborhood of \(K^\star\), the local contraction principle for nonlinear fixed-point iterations yields a constant \(q\in(0,1)\) and a radius \(r>0\) such that the contraction estimate
\begin{equation}
\label{eq:proof-local-contraction}
\|\mathcal{T}_{\mathrm{seq}}(K)-K^\star\|_\ast
\le q\,\|K-K^\star\|_\ast
\end{equation}
holds on the closed \(\|\cdot\|_\ast\)-ball \(\mathcal V:=\{K\in\mathcal K:\|K-K^\star\|_\ast\le r\}\); see, e.g., \cite{OrtegaRheinboldt,Deuflhard}. Since \(q<1\), the ball \(\mathcal V\) is forward invariant, since for \(K\in\mathcal V\) one has \(\|\mathcal{T}_{\mathrm{seq}}(K)-K^\star\|_\ast\le q\,\|K-K^\star\|_\ast\le q\,r<r\), so \(\mathcal{T}_{\mathrm{seq}}(K)\in\mathcal V\). Hence every iterate of \(\mathcal{T}_{\mathrm{seq}}\) with initial point \(K^0\in\mathcal V\) remains in \(\mathcal V\) and converges linearly to \(K^\star\).
\end{proof}
\begin{remark}[Vanishing local derivative]
If, in addition, \(\mathcal{D}\mathcal{T}_{\mathrm{seq}}(K^\star)=0\), then  the iteration converges locally superlinearly to \(K^\star\); see,
e.g., \cite{OrtegaRheinboldt,Deuflhard}.
\end{remark}

\subsection{Structural Decomposition of the Local Derivative}

Lemma~\ref{lem:local-convergence} shows that local convergence is governed by the Fr\'echet derivative \(\mathcal{D}\mathcal{T}_{\mathrm{seq}}(K^\star)\) through the spectral condition \eqref{eq:local-spectral-condition}.
So far, the sequential policy-iteration scheme has been stated for a fixed update order of the players. However, the sequential construction admits an additional degree of freedom, namely the order in which the player-wise updates are carried out.
We therefore analyze how the player-update order is reflected in the local derivative and thereby governs the local convergence behavior of sequential policy iteration.

For a permutation \(\pi\) of \(\mathcal I\), define the permuted sequential policy-iteration map by
\begin{equation}
\label{eq:permuted-sequential-map}
\mathcal{T}_{\mathrm{seq}}^{\pi}
:=
\mathcal{T}_{\pi(N)}\circ \mathcal{T}_{\pi(N-1)}\circ\cdots\circ \mathcal{T}_{\pi(1)}.
\end{equation}

\begin{proposition}[Order Decomposition]
\label{prop:order-decomposition}
Let \(K^\star\in\mathcal K_{\mathrm{MSS}}\) satisfy the coupled equilibrium equations
\eqref{eq:equilibrium-eval}--\eqref{eq:equilibrium-improve}, let \(\pi\) be a permutation of
\(\mathcal I\), and define
\begin{equation}
\label{eq:Ui-definition-permuted}
U_{\pi(\ell)}
:=
\mathcal{D}\mathcal{T}_{\pi(\ell)}(K^\star)-I,
\qquad \ell\in\mathcal I.
\end{equation}
Then the Fr\'echet derivative of the permuted sequential policy-iteration map
admits the decomposition
\begin{equation}
\label{eq:order-decomposition}
\mathcal{D}\mathcal{T}_{\mathrm{seq}}^{\pi}(K^\star)
=
\mathcal{D}\mathcal{T}_{\mathrm{inv}}+\Gamma_\pi,
\end{equation}
where
\begin{equation}
\label{eq:Jinv-definition}
\mathcal{D}\mathcal{T}_{\mathrm{inv}}
:=
I+\sum_{i\in\mathcal I} U_{\pi(i)}
\end{equation}
is the order-invariant part, and
\begin{equation}
\label{eq:Omega-definition}
\Gamma_\pi
:=
\sum_{m=2}^N
\ \sum_{1\le \ell_1<\cdots<\ell_m\le N}
U_{\pi(\ell_m)}\cdots U_{\pi(\ell_1)}
\end{equation}
is the order-dependent part, where, for each \(m\), the inner sum ranges over all strictly increasing index tuples \(1\le \ell_1<\cdots<\ell_m\le N\), equivalently over the \(m\)-element subsets of \(\mathcal I\).
\end{proposition}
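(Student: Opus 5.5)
The plan is to expand the chain-rule product and sort the resulting terms by degree. Applied to the permuted map \eqref{eq:permuted-sequential-map}, the fixed-point argument preceding \eqref{eq:full-jacobian-at-fixed-point} carries over verbatim. By Proposition~\ref{prop:fixed-point-characterization}, every \(\mathcal T_{\pi(\ell)}\) fixes \(K^\star\), so every partial composition does as well, and all intermediate evaluation points coincide with \(K^\star\). Hence
\begin{equation*}
\mathcal{D}\mathcal{T}_{\mathrm{seq}}^{\pi}(K^\star)
=
\mathcal{D}\mathcal{T}_{\pi(N)}(K^\star)\cdots\mathcal{D}\mathcal{T}_{\pi(1)}(K^\star)
=
(I+U_{\pi(N)})\cdots(I+U_{\pi(1)}),
\end{equation*}
using definition \eqref{eq:Ui-definition-permuted}. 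Differentiability of each factor is guaranteed by Lemma~\ref{lem:reduced-playerwise} on the neighborhood \(\mathcal N\) of \(K^\star\).

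Next I expand this noncommutative product of \(N\) factors. From each factor \(I+U_{\pi(\ell)}\) one chooses either \(I\) or \(U_{\pi(\ell)}\). Every choice corresponds to a subset \(\{\ell_1<\cdots<\ell_m\}\subseteq\mathcal I\) of positions where \(U\) is selected. Because the factors appear with decreasing \(\ell\) from left to right, the selected \(U\)'s appear in the product as \(U_{\pi(\ell_m)}\cdots U_{\pi(\ell_1)}\). The identity \(I\) commutes with everything, so no reordering of the \(U\)'s themselves is needed. This is the one point that requires care. I would make it rigorous by induction on \(N\): multiplying the expansion for \(N-1\) factors on the left by \(I+U_{\pi(N)}\) reproduces every subset term, with the index \(N\) either absent or appended as the largest element, and since it is the largest it stands leftmost.

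Finally, I collect terms by the cardinality \(m\). The empty subset (\(m=0\)) gives \(I\). The singletons (\(m=1\)) give \(\sum_{\ell}U_{\pi(\ell)}=\sum_{i\in\mathcal I}U_i\). This sum is independent of \(\pi\) because it only reindexes a finite sum, which justifies calling \(\mathcal D\mathcal T_{\mathrm{inv}}\) order-invariant. The terms with \(m\ge2\) are exactly \(\Gamma_\pi\) in \eqref{eq:Omega-definition}, which yields \eqref{eq:order-decomposition}.

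I expect no real obstacle. The only subtlety is bookkeeping the ordering of the noncommuting products, and the induction above handles it.
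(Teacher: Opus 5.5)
Your proposal is correct and follows essentially the same route as the paper. Both apply the chain rule at the common fixed point \(K^\star\) to obtain the ordered product \((I+U_{\pi(N)})\cdots(I+U_{\pi(1)})\), then expand it and sort the terms by degree, so that the \(m=0,1\) terms form \(\mathcal{D}\mathcal{T}_{\mathrm{inv}}\) and the remaining terms form \(\Gamma_\pi\); your induction on \(N\) and your remark that the first-order sum is independent of \(\pi\) simply make explicit what the paper states in one line.
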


\begin{proof}
Since \(K^\star\) satisfies \eqref{eq:equilibrium-eval}--\eqref{eq:equilibrium-improve},
one has \(\mathcal{T}_i(K^\star)=K^\star\) for all \(i\in\mathcal I\). Hence, by the chain rule,
\begin{equation}
\label{eq:proof-permuted-product}
\mathcal{D}\mathcal{T}_{\mathrm{seq}}^{\pi}(K^\star)
=
\mathcal{D}\mathcal{T}_{\pi(N)}(K^\star)\cdots \mathcal{D}\mathcal{T}_{\pi(1)}(K^\star).
\end{equation}
Using \eqref{eq:Ui-definition-permuted}, this becomes
\begin{equation}
\label{eq:proof-product-expansion}
\mathcal{D}\mathcal{T}_{\mathrm{seq}}^{\pi}(K^\star)
=
(I+U_{\pi(N)})\cdots(I+U_{\pi(1)}).
\end{equation}
Expanding the product yields the identity term \(I\), the first-order sum
\(\sum_{i\in\mathcal I} U_{\pi(i)}\), and all ordered products of length at least two.
The first two contributions form \(\mathcal{D}\mathcal{T}_{\mathrm{inv}}\), while the remaining terms are
collected in \(\Gamma_\pi\). This proves \eqref{eq:order-decomposition}.
\end{proof}

\subsection{Effects of the Update Order}

Proposition~\ref{prop:order-decomposition} identifies \(\Gamma_\pi\) as the sole source of order dependence in \(\mathcal{D}\mathcal{T}_{\mathrm{seq}}^{\pi}(K^\star)\). We now determine which properties of the sequential iteration are governed by this term and which remain invariant under permutations.

\begin{lemma}[Order invariance of fixed points]
\label{lem:order-invariant-fixpoint}
Let \(\pi\) be a permutation of \(\mathcal I\), and let \(\mathcal{T}_{\mathrm{seq}}^{\pi}\) be defined by \eqref{eq:permuted-sequential-map}. If \(K^\star\in\mathcal K_{\mathrm{MSS}}\) satisfies the coupled equilibrium equations \eqref{eq:equilibrium-eval}--\eqref{eq:equilibrium-improve}, then
\begin{equation}
\label{eq:order-invariant-fixpoint}
\mathcal{T}_{\mathrm{seq}}^{\pi}(K^\star)=K^\star
\end{equation}
for every permutation \(\pi\).
\end{lemma}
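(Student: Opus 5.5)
The plan is to reduce the claim to the stagewise fixed-point property already established in the sufficiency part of Proposition~\ref{prop:fixed-point-characterization}. That argument never uses the update order.

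First, since \(K^\star\in\mathcal K_{\mathrm{MSS}}\), Lemma~\ref{lem:eval-Pi} gives for each \(i\in\mathcal I\) a unique symmetric solution \(P_i(K^\star)\) of \eqref{eq:eval-Ei} at \(K=K^\star\). Because \(K^\star\) satisfies \eqref{eq:equilibrium-eval}, and the symmetric solution is unique, this solution coincides with \(P_i^\star\).

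Second, \eqref{eq:equilibrium-improve} states that \(K_i^\star\) solves the fixed-value stationarity condition \eqref{eq:Ui-stationarity} with \(P=P_i^\star\) and \(K_j=K_j^\star\) for \(j\neq i\). The matrix \(R_{ii}+D_i^\top P_i^\star D_i\) is positive definite, so that solution is unique. Hence \(\mathcal U_i(K^\star)=K_i^\star\), and by Definition~\ref{def:cascaded-Ti}, \(\mathcal T_i(K^\star)=K^\star\) for every \(i\in\mathcal I\). This is exactly the final assertion of Proposition~\ref{prop:fixed-point-characterization}, which I will simply invoke.

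Third, I argue by induction over the stage index \(\ell=1,\dots,N\) in the permuted composition \eqref{eq:permuted-sequential-map}. The claim is that \(\mathcal T_{\pi(\ell)}\circ\cdots\circ\mathcal T_{\pi(1)}(K^\star)=K^\star\). The base case \(\ell=1\) is \(\mathcal T_{\pi(1)}(K^\star)=K^\star\). For the step, suppose the partial composition up to \(\ell-1\) returns \(K^\star\); that point lies in \(\mathcal K_{\mathrm{MSS}}\), so \(\mathcal T_{\pi(\ell)}\) is defined there, and applying it returns \(K^\star\) again. Taking \(\ell=N\) gives \eqref{eq:order-invariant-fixpoint}.

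There is no real obstacle. The only point to watch is well-posedness, since each \(\mathcal T_{\pi(\ell)}\) is defined only on \(\mathcal K_{\mathrm{MSS}}\). The induction keeps every intermediate profile equal to \(K^\star\in\mathcal K_{\mathrm{MSS}}\), so this is automatic; alternatively, Theorem~\ref{thm:mss-preservation-cascaded} applied to the relabeled order gives the same conclusion.
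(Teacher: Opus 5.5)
Your proposal is correct and takes the same route as the paper: it establishes \(\mathcal T_i(K^\star)=K^\star\) for every \(i\in\mathcal I\), as in the sufficiency part of Proposition~\ref{prop:fixed-point-characterization}, and concludes because \(\mathcal T_{\mathrm{seq}}^{\pi}\) is a composition of maps that each fix \(K^\star\). Your uniqueness argument identifying \(P_i(K^\star)\) with \(P_i^\star\) and your stagewise induction, which also settles well-posedness on \(\mathcal K_{\mathrm{MSS}}\), only spell out details the paper leaves implicit.
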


\begin{proof}
If \(K^\star\in\mathcal K_{\mathrm{MSS}}\) satisfies
\eqref{eq:equilibrium-eval}--\eqref{eq:equilibrium-improve}, then
\(\mathcal{T}_i(K^\star)=K^\star\) for all \(i\in\mathcal I\) by
Definition~\ref{def:cascaded-Ti}. Since \(\mathcal{T}_{\mathrm{seq}}^{\pi}\) in
\eqref{eq:permuted-sequential-map} is a composition of the player-wise
update maps \(\mathcal{T}_i\), each of which fixes \(K^\star\), it follows
that
$
\mathcal{T}_{\mathrm{seq}}^{\pi}(K^\star)=K^\star,
$ which is exactly \eqref{eq:order-invariant-fixpoint}.
\end{proof}

While the fixed-point set is thus permutation-invariant, the spectral radius of \(\mathcal{D}\mathcal{T}_{\mathrm{seq}}^{\pi}(K^\star)\), which governs the local convergence rate by Lemma~\ref{lem:local-convergence}, is in general not.
The following result establishes the precise extent of this spectral order dependence.

\begin{lemma}[Spectral order dependence]
\label{lem:spectral-order-dependence}
Let \(K^\star\in\mathcal K_{\mathrm{MSS}}\) satisfy
\eqref{eq:equilibrium-eval}--\eqref{eq:equilibrium-improve} and let \(\pi\) be a
permutation of \(\mathcal I\).
\begin{enumerate}
\renewcommand{\labelenumi}{(\roman{enumi})}
\item If \(\tilde\pi\) is a cyclic shift of \(\pi\), then
\(\mathcal{D}\mathcal{T}_{\mathrm{seq}}^{\pi}(K^\star)\) and
\(\mathcal{D}\mathcal{T}_{\mathrm{seq}}^{\tilde\pi}(K^\star)\) have the same spectrum, and in
particular
\begin{equation}
\label{eq:cyclic-spectral-radius}
\rho\!\left(\mathcal{D}\mathcal{T}_{\mathrm{seq}}^{\pi}(K^\star)\right)
=
\rho\!\left(\mathcal{D}\mathcal{T}_{\mathrm{seq}}^{\tilde\pi}(K^\star)\right).
\end{equation}
\item For \(N\ge3\), there exist a game \eqref{eq:sde-N}--\eqref{eq:cost}, a
stabilizing linear feedback Nash equilibrium \(K^\star\), and permutations
\(\pi,\tilde\pi\) not related by a cyclic shift such that
\(\rho(\mathcal{D}\mathcal{T}_{\mathrm{seq}}^{\pi}(K^\star))<1<
\rho(\mathcal{D}\mathcal{T}_{\mathrm{seq}}^{\tilde\pi}(K^\star))\).
\end{enumerate}
\end{lemma}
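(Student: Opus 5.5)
Part (i) follows from the product structure already established in the proof of Proposition~\ref{prop:order-decomposition}. Write $M_\ell:=\mathcal{D}\mathcal{T}_{\pi(\ell)}(K^\star)=I+U_{\pi(\ell)}$. By the chain rule \eqref{eq:proof-permuted-product} we have $\mathcal{D}\mathcal{T}_{\mathrm{seq}}^{\pi}(K^\star)=M_N\cdots M_1$. It suffices to treat a cyclic shift by one position, $\tilde\pi(\ell)=\pi(\ell+1)$ with indices taken modulo $N$, because general shifts follow by iteration. For this shift, Lemma~\ref{lem:order-invariant-fixpoint} ensures that every partial composition fixes $K^\star$, so the chain rule again applies at $K^\star$ and gives $\mathcal{D}\mathcal{T}_{\mathrm{seq}}^{\tilde\pi}(K^\star)=M_1M_N\cdots M_2$. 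Setting $X:=M_N\cdots M_2$ and $Y:=M_1$, the two derivatives are $XY$ and $YX$. The standard fact that $XY$ and $YX$ share the same nonzero eigenvalues with the same algebraic multiplicities (see \cite{HornJohnson}) then applies. Both operators are square on the finite-dimensional space $\mathcal K$, so the zero eigenvalue also has equal multiplicity, the spectra coincide, and \eqref{eq:cyclic-spectral-radius} follows.

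Part (ii) is an existence statement, and I would prove it with an explicit example. For $N=3$ the permutations fall into two cyclic classes, $(1,2,3)$ and $(1,3,2)$, so I need a game in which these two orders give spectral radii on opposite sides of $1$. To keep the computation tractable I would take $n=1$ and $m_i=1$, with scalar data $A,C,B_i,D_i,Q_i,R_{ij}$. Then $\mathcal{D}\mathcal{T}_i(K^\star)$ is a $3\times3$ matrix that equals the identity except for row $i$, which is $r_i:=\mathcal{D}\mathcal U_i(K^\star)\in\mathbb R^{1\times 3}$ as given by \eqref{eq:DUi-operator}. In the scalar case, $P_i(K)=S_i(K)/\bigl(-2A_{\mathrm{cl}}-C_{\mathrm{cl}}^2\bigr)$, and $\mathcal U_i(K)$ is an explicit rational function, so $r_i$ can be computed in closed form.

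First I would fix the structure abstractly. With rows $r_i=(a_{i1},a_{i2},a_{i3})$, the two products $M_3M_2M_1$ and $M_2M_3M_1$ differ only through how the off-diagonal entries $a_{ij}$ are chained together. I would look for a configuration with strongly asymmetric coupling: for example, $a_{ii}$ small and $a_{12},a_{23},a_{31}$ large while the reverse couplings are small, or couplings of mixed sign. In such a configuration one cyclic direction composes into a near-nilpotent Gauss--Seidel sweep, while the other accumulates products exceeding $1$. Then I would realize such rows from actual game data. I would keep $D_i$ nonzero only for selected players, so that the control-dependent noise creates the $D_i^\top P D_j$ coupling in one direction, and use the $R_{ij}$ weights, which enter $P_i$ through $S_i$, to control the reverse direction. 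Once parameters are chosen, I compute $K^\star$ numerically or in closed form from \eqref{eq:equilibrium-eval}--\eqref{eq:equilibrium-improve}, verify that $-2A_{\mathrm{cl}}-C_{\mathrm{cl}}^2>0$ (which is mean-square stability in the scalar case), and evaluate both spectral radii. Finally, for $N>3$ I would append additional decoupled players: take $B_j=D_j=0$ and $R_{ij}=0$ for the coupling to the others, so that $\mathcal{D}\mathcal U_j$ vanishes in their own component and contributes only eigenvalue $0$. The example then embeds for all $N\ge3$ with the ordering of the first three players retained.

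The main obstacle is the second step, namely hitting rows $r_i$ with the desired asymmetry through genuine equilibrium data rather than arbitrary matrices. Because the entries are constrained by the equilibrium equations, I expect to need a parameter search, for instance sweeping a noise scaling as in the figures, followed by exact verification of one found instance with certified strict inequalities.
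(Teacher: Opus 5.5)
Part (i) matches the paper's argument: the chain rule at the common fixed point expresses both derivatives as the same factors $\mathcal{D}\mathcal{T}_{\pi(\ell)}(K^\star)$ in cyclically rotated order, and $\det(\lambda I-XY)=\det(\lambda I-YX)$ then gives equal spectra.

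Part (ii), however, is not proved. It is a pure existence claim, so its whole content is a witness. Your proposal stops at a search strategy (``asymmetric coupling'', ``parameter search followed by exact verification'') and never exhibits a game, an equilibrium, or the two spectral radii. The step you flag as the main obstacle is exactly what has to be shown: that suitably asymmetric rows $\mathcal{D}\mathcal U_i(K^\star)$ arise from genuine equilibrium data. Nothing in the proposal establishes this.

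The paper closes the gap with a very simple instance that needs no noise, so routing one coupling direction through $D_i^\top P D_j$ is unnecessary. Take $N=3$, $n=m_i=1$, $A=C=0$, $D_i=0$, $B_i=1$, $Q_i=\tfrac12$, $R_{ii}=1$, and cyclically asymmetric cross weights $R_{13}=R_{21}=R_{32}=\tfrac92$, $R_{12}=R_{23}=R_{31}=0$. Then:
\begin{itemize}
\item $\mathcal U_i(K)=P_i(K)=S_i(K)/\bigl(2\sum_k K_k\bigr)$;
\item $K^\star=(1,1,1)$, with $P_i^\star=1$ and $A_{\mathrm{cl}}(K^\star)=-3$;
\item $\partial_{K_j}\mathcal U_i(K^\star)=(R_{ij}-1)/3$ for $j\neq i$ and $0$ for $j=i$, so the entries are $-\tfrac13$ and $\tfrac76$ in a circulant pattern.
\end{itemize}
After discarding the eigenvalue $0$, the Gauss--Seidel sweeps for the orders $(1,2,3)$ and $(1,3,2)$ reduce to $2\times2$ matrices with characteristic polynomials $\lambda^2-\tfrac{91}{216}\lambda+\tfrac1{27}$ and $\lambda^2+\tfrac{65}{54}\lambda-\tfrac{343}{216}$. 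These give $\rho=\tfrac{8}{27}<1$ and $\rho=(65+23\sqrt{43})/108\approx2.0>1$. The asymmetry comes entirely from the cross weights entering $S_i$, which makes the closed-form verification immediate.

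Your padding argument for $N>3$ is correct, and it covers something the paper's proof leaves out: the paper only treats $N=3$, although the statement claims $N\ge3$. Precisely:
\begin{itemize}
\item With $B_j=D_j=0$ one has $\mathcal U_j\equiv0$ identically, not just in the own component, so $K_j^\star=0$.
\item The $j$-th columns of $\mathcal{D}\mathcal U_i(K^\star)$, $i\le3$, vanish, since the $R_{ij}$ terms in \eqref{eq:app-EKi} are multiplied by $K_j^\star=0$.
\item Hence every sequential derivative is block diagonal: the three-player sweep in the inherited relative order, plus a zero block.
\item Choosing $\pi,\tilde\pi$ whose restrictions to $\{1,2,3\}$ lie in different cyclic classes automatically rules out a cyclic shift.
\end{itemize}
Combined with the explicit three-player witness above, this gives a complete proof of (ii).
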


\begin{proof}
Claim~(i) follows from \eqref{eq:full-jacobian-at-fixed-point}, since
a cyclic shift of \(\pi\) corresponds to a cyclic permutation of the factors
\(\mathcal{D}\mathcal{T}_{\pi(\ell)}(K^\star)\) in the ordered product. For square matrices \(A\)
and \(B\) of the same size,
\begin{equation}
\label{eq:AB-BA-spectrum}
\det(\lambda I-AB)=\det(\lambda I-BA),
\end{equation}
and repeated application of \eqref{eq:AB-BA-spectrum} shows that any cyclic
permutation of the factors leaves the characteristic polynomial unchanged,
yielding \eqref{eq:cyclic-spectral-radius}.

For claim~(ii), let \(N=3\), \(n=1\), \(m_i=1\), \(A=0\), \(B_i=1\), \(C=0\),
\(D_i=0\), \(Q_i=\tfrac12\), and let the control weights be
\(R_{11}=R_{22}=R_{33}=1\), \(R_{13}=R_{21}=R_{32}=\tfrac92\) and
\(R_{12}=R_{23}=R_{31}=0\). Then \(K^\star=(1,1,1)\) with \(P_i^\star=1\) solves
\eqref{eq:equilibrium-eval}--\eqref{eq:equilibrium-improve} and
\(A_{\mathrm{cl}}(K^\star)=-3\), and \eqref{eq:DUi-operator} gives
\[
\mathcal{D}\mathcal{T}_{\mathrm{dec}}(K^\star)=
\begin{bmatrix}
0 & -\tfrac13 & \tfrac76\\[2pt]
\tfrac76 & 0 & -\tfrac13\\[2pt]
-\tfrac13 & \tfrac76 & 0
\end{bmatrix}.
\]
The six update orders split into the two cyclic classes of~(i), with
\(\rho=\tfrac{8}{27}<1\) and \(\rho=(65+23\sqrt{43})/108>1\). See also
Section~\ref{sec:numerical_experiments} for a stochastic three-player instance.
\end{proof}

By Lemma~\ref{lem:local-convergence}, the update order therefore governs the
local convergence rate at a given stabilizing fixed point, and can decide whether
that fixed point is locally attracting at all. 
In the example the game and the equilibrium are held fixed and only the composition order changes. The example is deterministic, so the effect is not specific to the stochastic setting.

\begin{remark}[Two-player spectral invariance]
\label{rem:two-player-spectral-invariance}
For \(N=2\), the two possible update orders \(\pi=(1,2)\) and
\(\pi=(2,1)\) are related by a cyclic shift.
Lemma~\ref{lem:spectral-order-dependence} therefore implies
\begin{equation}
\label{eq:two-player-spectral-radius}
\rho\!\left(\mathcal{D}\mathcal{T}_{\mathrm{seq}}^{(1,2)}(K^\star)\right)
=
\rho\!\left(\mathcal{D}\mathcal{T}_{\mathrm{seq}}^{(2,1)}(K^\star)\right),
\end{equation}
so that the sufficient spectral convergence condition
\eqref{eq:local-spectral-condition} of Lemma~\ref{lem:local-convergence} is
insensitive to the update order for \(N=2\).
\end{remark}

\subsection{Relation between Sequential and Decoupled Policy Iteration}

The preceding subsection characterized the order-dependent part \(\Gamma_\pi\).
We now show that the order-invariant part \(\mathcal{D}\mathcal{T}_{\mathrm{inv}}\) is the derivative of the decoupled scheme, in which every player is updated from the common gain profile \(K^{k}\) rather than from the partially updated profile \(\widetilde K^{k,i}\), and identify conditions under which the sequential structure is preferable.

On the neighborhood \(\mathcal N\) from Lemma~\ref{lem:local-smooth-P}, define the decoupled policy-iteration map by
\begin{equation}
\label{eq:decoupled-map}
\mathcal{T}_{\mathrm{dec}}(K)
:=
\bigl(\mathcal U_1(K),\dots,\mathcal U_N(K)\bigr),
\quad K\in\mathcal N,
\end{equation}
where \(\mathcal U_i\) is given by \eqref{eq:Ti-update} for each \(i\in\mathcal I\).

\begin{proposition}[Decoupled  PI derivative]
\label{prop:decoupled-jacobian}
Let \(K^\star\in\mathcal K_{\mathrm{MSS}}\) satisfy
\eqref{eq:equilibrium-eval}--\eqref{eq:equilibrium-improve}, let
\(\mathcal{T}_{\mathrm{dec}}\) be defined by \eqref{eq:decoupled-map}, and let
\(\mathcal{D}\mathcal{T}_{\mathrm{inv}}\) be given by \eqref{eq:Jinv-definition}. Then
\begin{equation}
\label{eq:decoupled-jacobian}
\mathcal{D}\mathcal{T}_{\mathrm{dec}}(K^\star)=\mathcal{D}\mathcal{T}_{\mathrm{inv}}.
\end{equation}
\end{proposition}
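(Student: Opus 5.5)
The plan is a direct block-by-block comparison of the two derivatives, using Lemma~\ref{lem:reduced-playerwise} as the only structural input. First, by Lemma~\ref{lem:local-smooth-P} and Lemma~\ref{lem:reduced-playerwise}, each \(\mathcal U_i\) is continuously Fr\'echet differentiable on \(\mathcal N\). The map \(\mathcal T_{\mathrm{dec}}\) in \eqref{eq:decoupled-map} is obtained by stacking the components \(\mathcal U_1,\dots,\mathcal U_N\). Hence it is differentiable at \(K^\star\), and its \(i\)-th block row is exactly \(\mathcal D\mathcal U_i(K^\star):\mathcal K\to\mathbb R^{m_i\times n}\) as given in \eqref{eq:DUi-operator}. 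No chain rule through intermediate profiles is needed here, because every component is evaluated at the same point \(K^\star\).

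Next, I would compute the block rows of \(\mathcal D\mathcal T_{\mathrm{inv}}=I+\sum_{i\in\mathcal I}U_{\pi(i)}\). Because \(\pi\) is a bijection, this sum equals \(I+\sum_{i\in\mathcal I}U_i\), so the expression does not depend on \(\pi\). By Lemma~\ref{lem:reduced-playerwise}, \(\mathcal D\mathcal T_i(K^\star)\) agrees with the identity in every block row \(j\neq i\), and its \(i\)-th block row is \(\mathcal D\mathcal U_i(K^\star)\). Therefore \(U_i=\mathcal D\mathcal T_i(K^\star)-I\) vanishes in all block rows except the \(i\)-th, where it equals \(\mathcal D\mathcal U_i(K^\star)-E_i\). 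Here \(E_i:\mathcal K\to\mathbb R^{m_i\times n}\), \(E_i[\Delta K]:=\Delta K_i\), denotes the canonical block projection, which is the \(i\)-th block row of the identity.

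Finally, I would read off the \(j\)-th block row of \(I+\sum_i U_i\). Only \(U_j\) contributes in row \(j\), so this row equals \(E_j+\bigl(\mathcal D\mathcal U_j(K^\star)-E_j\bigr)=\mathcal D\mathcal U_j(K^\star)\). This matches the \(j\)-th block row of \(\mathcal D\mathcal T_{\mathrm{dec}}(K^\star)\) for every \(j\in\mathcal I\), which proves \eqref{eq:decoupled-jacobian}.

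The argument is mostly bookkeeping, and the step needing the most care is the precise meaning of ``block row'' in Lemma~\ref{lem:reduced-playerwise}. The statement that all remaining rows act as the identity must be read as: the \(j\)-th output component of \(\mathcal D\mathcal T_i(K^\star)[\Delta K]\) equals \(\Delta K_j\) for \(j\neq i\). It must also be checked that the \(i\)-th row \(\mathcal D\mathcal U_i(K^\star)\) is evaluated at the same base point \(K^\star\) in both maps. That holds because the order decomposition expands the derivative at \(K^\star\), and Proposition~\ref{prop:fixed-point-characterization} ensures every partial composition of the \(\mathcal T_i\) fixes \(K^\star\).
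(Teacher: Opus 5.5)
Your proposal is correct and follows essentially the same route as the paper: it identifies the $i$-th block row of $\mathcal{D}\mathcal{T}_{\mathrm{dec}}(K^\star)$ with $\mathcal{D}\mathcal U_i(K^\star)$, uses Lemma~\ref{lem:reduced-playerwise} to see that $U_i$ is nonzero only in block row $i$ (where it equals $\mathcal{D}\mathcal U_i(K^\star)-E_i$), and reads off that block row $i$ of $I+\sum_{i}U_i$ is $\mathcal{D}\mathcal U_i(K^\star)$. Your closing appeal to the fixed-point property is unnecessary here, since $U_i$ is defined directly as $\mathcal{D}\mathcal{T}_i(K^\star)-I$, so both sides are evaluated at $K^\star$ by construction.
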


\begin{proof}
By Lemma~\ref{lem:reduced-playerwise}, each gain-update map
\(\mathcal U_i\) given by \eqref{eq:Ti-update} is continuously Fr\'echet
differentiable on \(\mathcal N\); in particular, the map
\(\mathcal{T}_{\mathrm{dec}}\) in \eqref{eq:decoupled-map} is Fr\'echet differentiable at
\(K^\star\). Moreover, by \eqref{eq:decoupled-map}, the \(i\)-th block of
\(\mathcal{T}_{\mathrm{dec}}(K)\) is \(\mathcal U_i(K)\). Hence, for every
\(\Delta K\in\mathcal K\),
\begin{equation}
\label{eq:proof-dec-block}
\bigl[\mathcal{D}\mathcal{T}_{\mathrm{dec}}(K^\star)[\Delta K]\bigr]_i
=
\mathcal{D}\mathcal U_i(K^\star)[\Delta K], \quad i \in \mathcal I.
\end{equation}

Now let \(U_i=\mathcal{D}\mathcal{T}_i(K^\star)-I\). By Lemma~\ref{lem:reduced-playerwise}, the \(i\)-th block row of \(\mathcal{D}\mathcal{T}_i(K^\star)\) equals \(\mathcal{D}\mathcal U_i(K^\star)\) and all other block rows are the identity; hence, writing \(\Delta\widehat K_i:=\mathcal{D}\mathcal U_i(K^\star)[\Delta K]\), one has
$
\bigl[U_i[\Delta K]\bigr]_j=0$, $
j\in\mathcal I\setminus\{i\},
$
and
$
\bigl[U_i[\Delta K]\bigr]_i
=
\Delta\widehat K_i-\Delta K_i.
$
Therefore, for every
\(\Delta K\in\mathcal K\),
\begin{equation}
\label{eq:proof-Jinv-block}
\Bigl[\Bigl(I+\sum_{i\in\mathcal I} U_i\Bigr)[\Delta K]\Bigr]_i
=
\Delta K_i+\bigl(\Delta\widehat K_i-\Delta K_i\bigr)
=
\Delta\widehat K_i.
\end{equation}
Comparison of
\eqref{eq:proof-dec-block} and \eqref{eq:proof-Jinv-block} thus yields
$
\mathcal{D}\mathcal{T}_{\mathrm{dec}}(K^\star)=I+\sum_{i\in\mathcal I} U_i.
$
By \eqref{eq:Jinv-definition}, this is exactly \(\mathcal{D}\mathcal{T}_{\mathrm{inv}}\).
\end{proof}

Since \(K^\star\) satisfies \eqref{eq:equilibrium-eval}--\eqref{eq:equilibrium-improve}, one has \(\mathcal{T}_{\mathrm{dec}}(K^\star)=K^\star\) by \eqref{eq:decoupled-map}. Moreover, the proof of Proposition~\ref{prop:decoupled-jacobian} shows that \(\mathcal{T}_{\mathrm{dec}}\) is continuously Fr\'echet differentiable at \(K^\star\). Hence the same argument as in Lemma~\ref{lem:local-convergence} applies to \(\mathcal{T}_{\mathrm{dec}}\). If
$
\rho\!\left(\mathcal{D}\mathcal{T}_{\mathrm{dec}}(K^\star)\right)<1,
$
or equivalently, by \eqref{eq:decoupled-jacobian},
$
\rho(\mathcal{D}\mathcal{T}_{\mathrm{inv}})<1,
$
then decoupled policy iteration converges locally linearly to \(K^\star\).

To relate the two schemes, we express the sequential local derivative as a block Gauss--Seidel--type iteration of the decoupled derivative \(\mathcal{D}\mathcal{T}_{\mathrm{inv}}\) (Proposition~\ref{prop:triangular-representation}). 

For a permutation \(\pi\) of \(\mathcal I\), define the \(\pi\)-ordered block space by
$
\mathcal K^\pi
:=
\prod_{i\in\mathcal I}\mathbb R^{m_{\pi(i)}\times n}.
$

\begin{definition}[Block-permutation operator]
\label{def:block-permutation}
Let \(\pi\) be a permutation of \(\mathcal I\). The block-permutation operator
\begin{equation}
\label{eq:block-permutation-operator}
\mathcal{P}_\pi:\mathcal K\to\mathcal K^\pi
\end{equation}
is defined by
\begin{equation}
\label{eq:block-permutation-action}
\mathcal{P}_\pi(\Delta K_1,\dots,\Delta K_N)
:=
(\Delta K_{\pi(1)},\dots,\Delta K_{\pi(N)}).
\end{equation}
Its inverse is the block-permutation operator
\begin{equation}
\label{eq:block-permutation-inverse}
\mathcal{P}_\pi^{-1}:\mathcal K^\pi\to\mathcal K,
\end{equation}
which reorders the blocks according to the inverse permutation \(\pi^{-1}\).
\end{definition}

For every linear operator \(A:\mathcal K\to\mathcal K\), define its
\(\pi\)-ordered representation by
$
\mathcal{P}_\pi A \mathcal{P}_\pi^{-1}:\mathcal K^\pi\to\mathcal K^\pi.
$

Define
\begin{equation}
\label{eq:Mpi-definition}
\mathcal{D}\mathcal{T}_{\mathrm{dec}}^{\pi}:=\mathcal{P}_\pi \mathcal{D}\mathcal{T}_{\mathrm{dec}}(K^\star)\mathcal{P}_\pi^{-1}.
\end{equation}
By \eqref{eq:block-permutation-inverse}, \(\mathcal{P}_\pi\) is invertible, so that \(\mathcal{D}\mathcal{T}_{\mathrm{dec}}^{\pi}\) is similar to \(\mathcal{D}\mathcal{T}_{\mathrm{dec}}(K^\star)\). Since similar operators share the same spectrum,
$
\rho(\mathcal{D}\mathcal{T}_{\mathrm{dec}}^{\pi})=\rho\!\left(\mathcal{D}\mathcal{T}_{\mathrm{dec}}(K^\star)\right),
$
and by \eqref{eq:decoupled-jacobian},
$
\mathcal{D}\mathcal{T}_{\mathrm{dec}}^{\pi}=\mathcal{P}_\pi \mathcal{D}\mathcal{T}_{\mathrm{inv}}\mathcal{P}_\pi^{-1}.
$

Write the block splitting of \(\mathcal{D}\mathcal{T}_{\mathrm{dec}}^{\pi}\) as
\begin{equation}
\label{eq:Mpi-splitting}
\mathcal{D}\mathcal{T}_{\mathrm{dec}}^{\pi}=L_\pi+D_\pi+R_\pi,
\end{equation}
where \(L_\pi\) is strictly block lower triangular, \(D_\pi\) is block
diagonal, and \(R_\pi\) is strictly block upper triangular.

\begin{proposition}[Triangular block representation]
\label{prop:triangular-representation}
Let \(\mathcal{T}_{\mathrm{seq}}^{\pi}\) be defined by \eqref{eq:permuted-sequential-map},
let \(\mathcal{P}_\pi\) be the block-permutation operator from
Definition~\ref{def:block-permutation}, let \(\mathcal{D}\mathcal{T}_{\mathrm{dec}}^{\pi}\) be given by
\eqref{eq:Mpi-definition}, and let \eqref{eq:Mpi-splitting} be the block
splitting of \(\mathcal{D}\mathcal{T}_{\mathrm{dec}}^{\pi}\). Then
\begin{equation}
\label{eq:operator-splitting}
\mathcal{P}_\pi \mathcal{D}\mathcal{T}_{\mathrm{seq}}^{\pi}(K^\star) \mathcal{P}_\pi^{-1}
=
(I-L_\pi)^{-1}(D_\pi+R_\pi).
\end{equation}
\end{proposition}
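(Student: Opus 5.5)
Working in the $\pi$-ordered coordinates, write $\delta := \mathcal P_\pi \Delta K$ and relabel so that the update order is $1,\dots,N$. By Proposition~\ref{prop:decoupled-jacobian} and Lemma~\ref{lem:reduced-playerwise}, the $\ell$-th block row of $\mathcal{D}\mathcal{T}_{\mathrm{dec}}^{\pi}$ is the linear map $\mathcal{D}\mathcal U_{\pi(\ell)}(K^\star)$ expressed in permuted coordinates. Write $M_{\ell j}$ for its blocks, so that $L_\pi$ collects the blocks $M_{\ell j}$ with $j<\ell$, $D_\pi$ the blocks with $j=\ell$, and $R_\pi$ the blocks with $j>\ell$.

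The plan is to track a perturbation through the factorization \eqref{eq:full-jacobian-at-fixed-point}, applied to the permuted product \eqref{eq:proof-permuted-product}. Set $\delta^{(0)}:=\delta$ and $\delta^{(\ell)}:=\mathcal P_\pi \mathcal{D}\mathcal{T}_{\pi(\ell)}(K^\star)\mathcal P_\pi^{-1}\delta^{(\ell-1)}$. Lemma~\ref{lem:reduced-playerwise} says that $\mathcal{D}\mathcal{T}_{\pi(\ell)}(K^\star)$ alters only block $\ell$ (in permuted coordinates), replacing it by $\sum_j M_{\ell j}\delta^{(\ell-1)}_j$. All chain-rule evaluation points are $K^\star$ because each $\mathcal T_i$ fixes $K^\star$ (Proposition~\ref{prop:fixed-point-characterization}), so the same blocks $M_{\ell j}$ are used at every stage.

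An induction on $\ell$ then shows that the profile $\delta^{(\ell-1)}$ entering stage $\ell$ has two parts. Its blocks $j<\ell$ are the already updated values $\delta^{+}_j:=\delta^{(N)}_j$, which are not touched again in later stages. Its blocks $j\ge\ell$ are still the original $\delta_j$. Consequently the final output satisfies
\begin{equation*}
\delta^{+}_\ell=\sum_{j<\ell}M_{\ell j}\delta^{+}_j+M_{\ell\ell}\delta_\ell+\sum_{j>\ell}M_{\ell j}\delta_j,\qquad \ell=1,\dots,N,
\end{equation*}
which in operator form reads $\delta^{+}=L_\pi\delta^{+}+(D_\pi+R_\pi)\delta$.

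Since $L_\pi$ is strictly block lower triangular, it is nilpotent, and $I-L_\pi$ is invertible with inverse $\sum_{m=0}^{N-1}L_\pi^m$. Solving gives $\delta^{+}=(I-L_\pi)^{-1}(D_\pi+R_\pi)\delta$. Because $\delta^{+}=\mathcal P_\pi\mathcal{D}\mathcal{T}_{\mathrm{seq}}^{\pi}(K^\star)\mathcal P_\pi^{-1}\delta$ holds for every $\delta$, this yields \eqref{eq:operator-splitting}.

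The main obstacle is the bookkeeping in the induction step. One must verify that the $\ell$-th row of $\mathcal{D}\mathcal{T}_{\pi(\ell)}(K^\star)$ acts on the whole intermediate perturbation $\delta^{(\ell-1)}$, including the already updated blocks, and not on the original $\delta$. Making this precise, and carefully conjugating with $\mathcal P_\pi$ so that the permuted indices line up with the triangular splitting, are the points that need care. The rest is the standard Gauss--Seidel algebra.
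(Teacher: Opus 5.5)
Your proposal is correct and follows essentially the same argument as the paper. Both track the perturbation through the chain-rule product in $\pi$-ordered coordinates and use that stage $\ell$ changes only block $\ell$, via the $\ell$-th block row of $\mathcal{D}\mathcal{T}_{\mathrm{dec}}^{\pi}$; both then derive $\delta^{+}=L_\pi\delta^{+}+(D_\pi+R_\pi)\delta$ and invert $I-L_\pi$ by strict block triangularity. The differences are cosmetic: the paper writes the stage maps as $I+V_i$ with $V_i:=\mathcal{P}_\pi U_{\pi(i)}\mathcal{P}_\pi^{-1}$, and you make the nilpotent Neumann-series inverse explicit.
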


\begin{proof}
For \(i\in\mathcal I\), set
$
V_i:=\mathcal{P}_\pi U_{\pi(i)}\mathcal{P}_\pi^{-1}.
$
By Lemma~\ref{lem:reduced-playerwise}, each \(U_{\pi(i)}\) has
exactly one nontrivial block row, namely the \(\pi(i)\)-th one. Hence each
\(V_i\) has exactly one nontrivial block row, namely the \(i\)-th one, and the
\(i\)-th block row of \(I+V_i\) coincides with the \(i\)-th block row of
\(\mathcal{D}\mathcal{T}_{\mathrm{dec}}^{\pi}\).

Now fix \(\Delta K^{\pi}\in\mathcal K^\pi\), and define
$
\Delta K^{\pi,(0)}:=\Delta K^{\pi}$, and
$
\Delta K^{\pi,(i)}:=(I+V_i)\Delta K^{\pi,(i-1)}$, for
$ i \in \mathcal I.
$
Then, by \eqref{eq:permuted-sequential-map} and the chain rule at \(K^\star\),
$
\Delta K^{\pi,(N)}=\mathcal{P}_\pi \mathcal{D}\mathcal{T}_{\mathrm{seq}}^{\pi}(K^\star)\mathcal{P}_\pi^{-1}\Delta K^{\pi}.
$
Since \(I+V_i\) differs from the identity only in block row \(i\), one has
\begin{equation}
(\Delta K^{\pi,(i)})_j=(\Delta K^{\pi,(i-1)})_j, \qquad j \in \mathcal I\setminus\{i\},
\end{equation}

and
\begin{equation}
(\Delta K^{\pi,(i)})_i=\sum_{j\in\mathcal I} (\mathcal{D}\mathcal{T}_{\mathrm{dec}}^{\pi})_{ij}(\Delta K^{\pi,(i-1)})_j.
\end{equation}

For \(j<i\), the \(j\)-th block is unchanged after step \(j\), hence
$
(\Delta K^{\pi,(i-1)})_j=(\Delta K^{\pi,(N)})_j.
$
For \(j\ge i\), the \(j\)-th block has not yet been updated at step \(i-1\),
hence
$
(\Delta K^{\pi,(i-1)})_j=\Delta K^{\pi}_j.
$
Therefore,
\begin{equation}
(\Delta K^{\pi,(N)})_i
=
\sum_{j<i}(\mathcal{D}\mathcal{T}_{\mathrm{dec}}^{\pi})_{ij}(\Delta K^{\pi,(N)})_j
+
\sum_{j\ge i}(\mathcal{D}\mathcal{T}_{\mathrm{dec}}^{\pi})_{ij}\Delta K^{\pi}_j,
\end{equation}

 for $ i\in\mathcal I$. Equivalently, using \eqref{eq:Mpi-splitting}, we get
\begin{equation}
\Delta K^{\pi,(N)}=L_\pi \Delta K^{\pi,(N)}+(D_\pi+R_\pi)\Delta K^{\pi}.
\end{equation}

Since \(L_\pi\) is strictly block lower triangular, \(I-L_\pi\) is invertible,
and thus
\begin{equation}
\Delta K^{\pi,(N)}=(I-L_\pi)^{-1}(D_\pi+R_\pi)\Delta K^{\pi}.
\end{equation}

Because this holds for every \(\Delta K^{\pi}\in\mathcal K^\pi\), \eqref{eq:operator-splitting}
follows.
\end{proof}

\begin{remark}[Jacobi and Gauss--Seidel structure]
Representation \eqref{eq:operator-splitting} places the update order within the classical theory of matrix splittings \cite{varga_2000,householder_1956}. The local dynamics of the decoupled scheme is the block-Jacobi iteration of the splitting \eqref{eq:Mpi-splitting} of \(\mathcal{D}\mathcal{T}_{\mathrm{inv}}\), updating all player blocks simultaneously, whereas the sequential scheme is the corresponding block Gauss--Seidel iteration, updating the blocks in turn, each using the already-updated ones. The local effect of the update order is therefore exactly the difference between Jacobi- and Gauss--Seidel-type iterations of the same operator, which both interprets the ordering phenomenon and provides the structural basis for the comparison below.
The representation uses only that each player-wise update modifies a single gain block, and therefore applies to player-wise policy iteration beyond the present class.
\end{remark}

Whereas \eqref{eq:local-spectral-condition} only guarantees the existence of a contracting norm, we compare the two schemes in an explicit, checkable weighted block-\(\ell_\infty\) norm. The weights provide the freedom to tighten the resulting contraction bound toward the spectral radius \cite{HornJohnson}. For \(\Delta K^{\pi}=(\Delta K^{\pi}_1,\dots,\Delta K^{\pi}_N)\in\mathcal K^\pi\) and weights \(w_1,\dots,w_N>0\), define
\begin{equation}
\label{eq:weighted-block-norm}
\|\Delta K^{\pi}\|_{w,\infty}
:=
\max_{1\le i\le N}\frac{\|\Delta K^{\pi}_i\|}{w_i}.
\end{equation}
This induces the transported norm on \(\mathcal K\),
\begin{equation}
\label{eq:transported-block-norm}
\|\Delta K\|_{\pi,w,\infty}
:=
\|\mathcal{P}_\pi \Delta K\|_{w,\infty},
\qquad \Delta K\in\mathcal K.
\end{equation}
Then
\begin{equation}
\label{eq:block-row-operator-norm}
\|\mathcal{D}\mathcal{T}_{\mathrm{dec}}^{\pi}\|_{w,\infty}
=
\max_{i\in\mathcal I}
\sup_{\|\Delta K^{\pi}\|_{w,\infty}\le 1}
\frac{\bigl\|\sum_{j\in\mathcal I}(\mathcal{D}\mathcal{T}_{\mathrm{dec}}^{\pi})_{ij}\Delta K^{\pi}_j\bigr\|}{w_i}.
\end{equation}
Choosing \(w\) as a dominant eigenvector of the nonnegative matrix with entries \(\|(\mathcal{D}\mathcal{T}_{\mathrm{dec}}^{\pi})_{ij}\|\) renders the resulting block bound minimal.

\begin{theorem}[Weighted block-norm comparison]
\label{thm:seq-no-worse}
Let \(K^\star\in\mathcal K_{\mathrm{MSS}}\) satisfy
\eqref{eq:equilibrium-eval}--\eqref{eq:equilibrium-improve}, let
\(\mathcal{T}_{\mathrm{seq}}^{\pi}\) be defined by \eqref{eq:permuted-sequential-map}, let
\(\mathcal{T}_{\mathrm{dec}}\) be defined by \eqref{eq:decoupled-map}, let \(\mathcal{P}_\pi\) be
the block-permutation operator from Definition~\ref{def:block-permutation}, and
let the norms \(\|\cdot\|_{w,\infty}\) on \(\mathcal K^\pi\) and
\(\|\cdot\|_{\pi,w,\infty}\) on \(\mathcal K\) be given by
\eqref{eq:weighted-block-norm} and \eqref{eq:transported-block-norm},
respectively. Assume that
\begin{equation}
\label{eq:dec-assumption}
\bigl\|\mathcal{D}\mathcal{T}_{\mathrm{dec}}(K^\star)\bigr\|_{\pi,w,\infty}\le 1.
\end{equation}
Then
\begin{equation}
\label{eq:seq-no-worse}
\bigl\|\mathcal{D}\mathcal{T}_{\mathrm{seq}}^{\pi}(K^\star)\bigr\|_{\pi,w,\infty}
\le
\bigl\|\mathcal{D}\mathcal{T}_{\mathrm{dec}}(K^\star)\bigr\|_{\pi,w,\infty}.
\end{equation}
Thus, in the norm \(\|\cdot\|_{\pi,w,\infty}\), the local linear convergence rate for sequential policy iteration is at least as good as that for decoupled policy iteration.
\end{theorem}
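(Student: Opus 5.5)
By construction the transported norm satisfies $\|A\|_{\pi,w,\infty}=\|\mathcal P_\pi A\mathcal P_\pi^{-1}\|_{w,\infty}$ for every linear operator $A$ on $\mathcal K$. It therefore suffices to work in the $\pi$-ordered space $\mathcal K^\pi$. By Proposition~\ref{prop:triangular-representation} and the splitting \eqref{eq:Mpi-splitting}, the goal reduces to showing
\[
\|(I-L_\pi)^{-1}(D_\pi+R_\pi)\|_{w,\infty}\le \|L_\pi+D_\pi+R_\pi\|_{w,\infty}=:\gamma,
\]
under the assumption $\gamma\le 1$. This is the classical Stein--Rosenberg/Gauss--Seidel-versus-Jacobi comparison in a block max-norm. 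I would prove it directly with a forward induction over the block rows, rather than going through $(I-L_\pi)^{-1}$ explicitly.

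Fix $\Delta K^\pi$ with $\|\Delta K^\pi\|_{w,\infty}\le 1$, so that $\|\Delta K^\pi_j\|\le w_j$ for all $j$. Let $y:=(I-L_\pi)^{-1}(D_\pi+R_\pi)\Delta K^\pi$. From the proof of Proposition~\ref{prop:triangular-representation}, $y$ is characterized row by row through
\[
y_i=\sum_{j<i}(\mathcal{D}\mathcal{T}_{\mathrm{dec}}^{\pi})_{ij}\,y_j+\sum_{j\ge i}(\mathcal{D}\mathcal{T}_{\mathrm{dec}}^{\pi})_{ij}\,\Delta K^\pi_j .
\]
The claim to establish by induction on $i$ is $\|y_i\|\le \gamma w_i$. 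Assume $\|y_j\|\le\gamma w_j\le w_j$ holds for all $j<i$; this is where $\gamma\le1$ enters. Then the block vector $z^{(i)}:=(y_1,\dots,y_{i-1},\Delta K^\pi_i,\dots,\Delta K^\pi_N)$ satisfies $\|z^{(i)}\|_{w,\infty}\le1$. Since $y_i$ is exactly the $i$-th block of $\mathcal{D}\mathcal{T}_{\mathrm{dec}}^{\pi}z^{(i)}$, the row characterization \eqref{eq:block-row-operator-norm} gives $\|y_i\|/w_i\le\gamma$, which closes the induction. Taking the maximum over $i$ and the supremum over $\Delta K^\pi$ yields $\|\mathcal P_\pi\mathcal D\mathcal T^\pi_{\mathrm{seq}}(K^\star)\mathcal P_\pi^{-1}\|_{w,\infty}\le\gamma$. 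Transporting back with $\mathcal P_\pi$ gives \eqref{eq:seq-no-worse}.

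The only delicate point is the induction step. It must use the \emph{full} operator-norm bound applied to the mixed vector $z^{(i)}$, not a bound on individual block entries, and it needs $\gamma\le1$ so that previously updated blocks remain in the unit ball. Without that assumption the induction fails, which explains the hypothesis \eqref{eq:dec-assumption}. A secondary check is that the row-wise supremum formula \eqref{eq:block-row-operator-norm} indeed equals the induced norm of the weighted block-$\ell_\infty$ norm. This is immediate because the norm is a maximum over blocks, so the induced norm is the maximum over block rows of the row-wise suprema. The final sentence on convergence rates then follows by the contraction argument of Lemma~\ref{lem:local-convergence} applied in the norm $\|\cdot\|_{\pi,w,\infty}$.
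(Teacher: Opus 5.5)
Your proposal is correct and follows the paper's proof essentially step for step. Both transport to $\mathcal K^\pi$, use the row recursion from Proposition~\ref{prop:triangular-representation}, and run the forward induction on the mixed vector $(y_1,\dots,y_{i-1},\Delta K^\pi_i,\dots,\Delta K^\pi_N)$, with the hypothesis \eqref{eq:dec-assumption} keeping the already-updated blocks inside the unit ball so that the row bound \eqref{eq:block-row-operator-norm} applies.
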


\begin{proof}
By \eqref{eq:transported-block-norm} and \eqref{eq:Mpi-definition}, one has
\begin{equation}
\label{eq:proof-dec-norm-equality}
\bigl\|\mathcal{D}\mathcal{T}_{\mathrm{dec}}(K^\star)\bigr\|_{\pi,w,\infty}
=
\|\mathcal{D}\mathcal{T}_{\mathrm{dec}}^{\pi}\|_{w,\infty}.
\end{equation}
Equations \eqref{eq:dec-assumption} and \eqref{eq:proof-dec-norm-equality} imply \(\|\mathcal{D}\mathcal{T}_{\mathrm{dec}}^{\pi}\|_{w,\infty}\le 1\).
Now fix \(\Delta K^{\pi}\in\mathcal K^\pi\) with \(\|\Delta K^{\pi}\|_{w,\infty}\le 1\), and set
\begin{equation}
\label{eq:proof-y-definition}
\Delta K^{\pi,+}:=(I-L_\pi)^{-1}(D_\pi+R_\pi)\Delta K^{\pi}.
\end{equation}
By Proposition~\ref{prop:triangular-representation},
\begin{equation}
\label{eq:proof-y-representation}
\Delta K^{\pi,+}=\mathcal{P}_\pi \mathcal{D}\mathcal{T}_{\mathrm{seq}}^{\pi}(K^\star)\mathcal{P}_\pi^{-1}\Delta K^{\pi}.
\end{equation}
We show by induction on \(i\in\mathcal I\) that
\begin{equation}
\label{eq:proof-induction-claim}
\frac{\|\Delta K^{\pi,+}_i\|}{w_i}\le \|\mathcal{D}\mathcal{T}_{\mathrm{dec}}^{\pi}\|_{w,\infty}.
\end{equation}

For \(i=1\), equations \eqref{eq:operator-splitting} and \eqref{eq:Mpi-splitting} give
\begin{equation}
\label{eq:proof-y1-row}
\Delta K^{\pi,+}_1=\sum_{j\in\mathcal I}(\mathcal{D}\mathcal{T}_{\mathrm{dec}}^{\pi})_{1j}\Delta K^{\pi}_j.
\end{equation}
Hence, by \eqref{eq:block-row-operator-norm},
\begin{equation}
\label{eq:proof-y1-bound}
\frac{\|\Delta K^{\pi,+}_1\|}{w_1}\le \|\mathcal{D}\mathcal{T}_{\mathrm{dec}}^{\pi}\|_{w,\infty}.
\end{equation}

Now let \(i\ge 2\), and assume that \(\|\Delta K^{\pi,+}_j\|/w_j\le \|\mathcal{D}\mathcal{T}_{\mathrm{dec}}^{\pi}\|_{w,\infty}\) for all \(j<i\). Define \(\Delta\widetilde{K}^{\pi,(i)}:=(\Delta K^{\pi,+}_1,\dots,\Delta K^{\pi,+}_{i-1},\Delta K^{\pi}_i,\dots,\Delta K^{\pi}_N)\). 

Since \(\|\mathcal{D}\mathcal{T}_{\mathrm{dec}}^{\pi}\|_{w,\infty}\le 1\), the induction hypothesis gives \(\|\Delta K^{\pi,+}_j\|/w_j\le 1\) for \(j<i\); together with \(\|\Delta K^{\pi}\|_{w,\infty}\le 1\), this implies
\begin{equation}
\label{eq:proof-tilde-z-bound}
\|\Delta\widetilde{K}^{\pi,(i)}\|_{w,\infty}\le 1.
\end{equation}
Moreover, equations \eqref{eq:operator-splitting} and \eqref{eq:Mpi-splitting} yield
\begin{align}
\label{eq:proof-yi-row}
\Delta K^{\pi,+}_i
&=
\sum_{j<i}(\mathcal{D}\mathcal{T}_{\mathrm{dec}}^{\pi})_{ij}\Delta K^{\pi,+}_j
+
\sum_{j\ge i}(\mathcal{D}\mathcal{T}_{\mathrm{dec}}^{\pi})_{ij}\Delta K^{\pi}_j \nonumber \\ 
&= 
\sum_{j\in\mathcal I}(\mathcal{D}\mathcal{T}_{\mathrm{dec}}^{\pi})_{ij}\Delta\widetilde{K}^{\pi,(i)}_j.
\end{align}
Therefore, by \eqref{eq:block-row-operator-norm} and \eqref{eq:proof-tilde-z-bound},
\begin{equation}
\label{eq:proof-yi-bound}
\frac{\|\Delta K^{\pi,+}_i\|}{w_i}\le \|\mathcal{D}\mathcal{T}_{\mathrm{dec}}^{\pi}\|_{w,\infty}.
\end{equation}
This proves \eqref{eq:proof-induction-claim} for all \(i\in\mathcal I\). Taking the maximum over \(i\) gives
\begin{equation}
\label{eq:proof-y-bound}
\|\Delta K^{\pi,+}\|_{w,\infty}\le \|\mathcal{D}\mathcal{T}_{\mathrm{dec}}^{\pi}\|_{w,\infty}.
\end{equation}

By \eqref{eq:proof-y-representation}, \eqref{eq:proof-y-bound}, and the arbitrariness of \(\Delta K^{\pi}\) with \(\|\Delta K^{\pi}\|_{w,\infty}\le 1\), we obtain
\begin{equation}
\label{eq:proof-conjugated-bound}
\bigl\|\mathcal{P}_\pi \mathcal{D}\mathcal{T}_{\mathrm{seq}}^{\pi}(K^\star)\mathcal{P}_\pi^{-1}\bigr\|_{w,\infty}
\le
\|\mathcal{D}\mathcal{T}_{\mathrm{dec}}^{\pi}\|_{w,\infty}.
\end{equation}
Using \eqref{eq:transported-block-norm} once more together with \eqref{eq:proof-dec-norm-equality}, we obtain \eqref{eq:seq-no-worse}.
\end{proof}

Wherever the decoupled scheme is non-expansive in \(\|\cdot\|_{\pi,w,\infty}\),
sequential updating therefore contracts at least as strongly.

\begin{corollary}[Two-player spectral relation]
\label{cor:two-player-squaring}
Let \(N=2\) and let \(K^\star\in\mathcal K_{\mathrm{MSS}}\) satisfy
\eqref{eq:equilibrium-eval}--\eqref{eq:equilibrium-improve}. Then, for either
update order,
\begin{equation}
\label{eq:two-player-squaring}
\rho\bigl(\mathcal{D}\mathcal{T}_{\mathrm{seq}}^{\pi}(K^\star)\bigr)
=
\rho\bigl(\mathcal{D}\mathcal{T}_{\mathrm{dec}}(K^\star)\bigr)^{2}.
\end{equation}
\end{corollary}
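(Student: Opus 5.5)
The plan is to reduce the statement to the classical Jacobi/Gauss--Seidel relation for a $2\times 2$ block operator with vanishing diagonal blocks. I would combine Proposition~\ref{prop:decoupled-jacobian}, Proposition~\ref{prop:triangular-representation} and Lemma~\ref{lem:spectral-order-dependence}. By Remark~\ref{rem:two-player-spectral-invariance} it suffices to treat $\pi=(1,2)$, so that $\mathcal P_\pi$ is the identity. Write
\[
M:=\mathcal{D}\mathcal{T}_{\mathrm{dec}}(K^\star)=\begin{bmatrix} M_{11} & M_{12}\\ M_{21} & M_{22}\end{bmatrix},
\qquad M_{ij}:=\mathcal{D}_{K_j}\mathcal U_i(K^\star).
\]

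\emph{Step 1 (main step): the diagonal blocks vanish, $M_{ii}=0$.} Take a perturbation $H$ of block $i$ only. I would show both contributions to \eqref{eq:DUi-operator} vanish in this direction.

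First, $\mathcal{D}_{K_i}\mathcal G_i=0$. In \eqref{eq:local-Gi}, the gain $K_i$ does not enter the sum over $j\neq i$, and the term involving $\widehat K_i$ is treated as a separate argument.

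Second, $\mathcal{D}_{K_i}P_i(K^\star)[H]=0$. Differentiating \eqref{eq:eval-Ei} in direction $H$ at $K^\star$ gives
\[
\mathcal L_{K^\star}(\delta P)-H^\top\Phi_i-\Phi_i^\top H=0,
\qquad
\Phi_i:=B_i^\top P_i^\star+D_i^\top P_i^\star C_{\mathrm{cl}}(K^\star)-R_{ii}K_i^\star .
\]
Moving the term $D_i^\top P_i^\star D_iK_i^\star$ in \eqref{eq:equilibrium-improve} to the right-hand side shows $R_{ii}K_i^\star=B_i^\top P_i^\star+D_i^\top P_i^\star C_{\mathrm{cl}}(K^\star)$. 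Hence $\Phi_i=0$. This is the first-order optimality of player $i$. Invertibility of $\mathcal L_{K^\star}$ (Lemma~\ref{lem:local-smooth-P}) then gives $\delta P=0$.

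Together these give $M_{ii}=0$, so in \eqref{eq:Mpi-splitting} we have $D_\pi=0$. I expect this step to be the only non-routine point. One must keep $\widehat K_i$ and $K_i$ apart in $\mathcal G_i$, and use the equilibrium identity correctly in the sensitivity equation.

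\emph{Step 2: the Gauss--Seidel form.} With $D_\pi=0$, Proposition~\ref{prop:triangular-representation} gives
\[
\mathcal{D}\mathcal{T}_{\mathrm{seq}}(K^\star)=(I-L)^{-1}R
=\begin{bmatrix} I&0\\ -M_{21}&I\end{bmatrix}^{-1}\begin{bmatrix}0&M_{12}\\0&0\end{bmatrix}
=\begin{bmatrix}0&M_{12}\\0&M_{21}M_{12}\end{bmatrix}.
\]
This operator is block upper triangular. Its spectrum is therefore $\{0\}$ together with the spectrum of $M_{21}M_{12}$, so
\[
\rho\bigl(\mathcal{D}\mathcal{T}_{\mathrm{seq}}(K^\star)\bigr)=\rho(M_{21}M_{12}).
\]

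\emph{Step 3: comparison with the decoupled scheme.} Since $M$ has zero diagonal blocks, $M^2=\operatorname{diag}(M_{12}M_{21},\,M_{21}M_{12})$. This gives
\[
\rho(M)^2=\rho(M^2)=\max\{\rho(M_{12}M_{21}),\rho(M_{21}M_{12})\}=\rho(M_{21}M_{12}).
\]
The last equality holds because $M_{12}M_{21}$ and $M_{21}M_{12}$ have the same nonzero eigenvalues, which is the rectangular analogue of \eqref{eq:AB-BA-spectrum}. Combining this with Step~2 yields \eqref{eq:two-player-squaring}. For the order $(2,1)$ the same result follows from Remark~\ref{rem:two-player-spectral-invariance}, or by repeating the computation with the roles of the blocks exchanged.
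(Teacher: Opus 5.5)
Your proposal is correct and follows essentially the same route as the paper. Both arguments show that the diagonal blocks of $\mathcal{D}\mathcal{T}_{\mathrm{dec}}(K^\star)$ vanish because $\mathcal{D}_{K_i}\mathcal{G}_i=0$ and the equilibrium condition \eqref{eq:equilibrium-improve} kills the $K_i$-block of $\mathcal{D}_K\mathcal{E}_i$, so that $D_\pi=0$. Both then read off $\rho(M_{21}M_{12})$ from the block upper triangular operator $(I-L_\pi)^{-1}R_\pi$, and compare it with the block-diagonal square of the decoupled derivative using the shared nonzero spectrum of $M_{12}M_{21}$ and $M_{21}M_{12}$. Your explicit sensitivity equation for $\delta P$ only spells out what the paper reads off directly from \eqref{eq:app-EKi} and \eqref{eq:DUi-operator}.
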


\begin{proof}
By \eqref{eq:app-GKi}, \(\mathcal{D}_K\mathcal{G}_i\) contains no term in
\(\Delta K_i\), and by \eqref{eq:app-EKi} its \(i\)-th block is
\(\mathcal{D}_{K_i}\mathcal{E}_i[\Delta K_i]
=-\Delta K_i^\top M_i-M_i^\top\Delta K_i\) with
\(M_i:=B_i^\top P_i^\star+D_i^\top P_i^\star C_{\mathrm{cl}}(K^\star)
-R_{ii}K_i^\star\). Since
\(C-\sum_{j\neq i}D_jK_j^\star=C_{\mathrm{cl}}(K^\star)+D_iK_i^\star\),
condition \eqref{eq:equilibrium-improve} yields \(M_i=0\), so that
\(\bigl[\mathcal{D}\mathcal U_i(K^\star)\bigr]_i=0\) by
\eqref{eq:DUi-operator} and \(D_\pi=0\) in \eqref{eq:Mpi-splitting}.

Write \(X_{ij}:=(\mathcal{D}\mathcal{T}_{\mathrm{dec}}^{\pi})_{ij}\). Then
\eqref{eq:operator-splitting} reduces to \((I-L_\pi)^{-1}R_\pi\), which is
block upper triangular with diagonal blocks \(0\) and \(X_{21}X_{12}\), whence
\(\rho(\mathcal{D}\mathcal{T}_{\mathrm{seq}}^{\pi}(K^\star))=\rho(X_{21}X_{12})\).
Moreover \((\mathcal{D}\mathcal{T}_{\mathrm{dec}}^{\pi})^{2}
=\operatorname{blkdiag}(X_{12}X_{21},X_{21}X_{12})\), and \(X_{12}X_{21}\) and
\(X_{21}X_{12}\) share their nonzero spectrum.
\end{proof}

For every two-player game, the two schemes therefore converge and diverge together at each stabilizing equilibrium, and where they converge, one sequential iteration contracts as much as two decoupled iterations.

\subsection{Effects of the Game Structure}
\label{sec:sdg_induced_convergence_effects}

In addition to the main structural results established in Theorem~\ref{thm:seq-no-worse} and Corollary~\ref{cor:two-player-squaring}, we now analyze how the stochasticity of the game affects local convergence. Specifically, we reveal how the control-dependent noise terms \(D_i\) enter the contraction bounds through the coefficients \(\gamma_{ij}\).

Let \(K^\star\in\mathcal K_{\mathrm{MSS}}\) be a stabilizing linear feedback Nash equilibrium in the sense of Definition~\ref{def:LFNE}, equivalently, a solution of \eqref{eq:equilibrium-eval}--\eqref{eq:equilibrium-improve}. For each \(i\in\mathcal I\), define
$
P_i^\star:=P_i(K^\star).
$
Moreover, let
$
(\mathcal{D}_P\mathcal{E}_i)^\star:=\mathcal{D}_P\mathcal{E}_i(K^\star)$, and
$
(\mathcal{D}_P\mathcal{G}_i)^\star:=\mathcal{D}_P\mathcal{G}_i(K^\star),
$
where \(\mathcal{D}_P\mathcal{E}_i\) and \(\mathcal{D}_P\mathcal{G}_i\) are the partial Fr\'echet derivatives introduced before Lemma~\ref{lem:reduced-playerwise}. For \(j\in\mathcal I\setminus\{i\}\), let
$
(\mathcal{D}_{K_j}\mathcal{E}_i)^\star:\mathbb R^{m_j\times n}\to\mathbb S^n
$
denote the restriction of \(\mathcal{D}_K\mathcal{E}_i(K^\star)\) to perturbations in the \(j\)-th player block, where \(\mathcal{D}_K\mathcal{E}_i\) is introduced before Lemma~\ref{lem:reduced-playerwise}. For each \(i\in\mathcal I\), define
\begin{equation}
\label{eq:gammaii}
\mu_i^\star:=\bigl\|\bigl((\mathcal{D}_P\mathcal{E}_i)^\star\bigr)^{-1}\bigr\|^{-1}, \,
\gamma_{ii}:=\bigl\|[\mathcal{D}\mathcal{T}_{\mathrm{inv}}]_{ii}\bigr\|,
\, i\in\mathcal I,
\end{equation}
where \(\mathcal{D}\mathcal{T}_{\mathrm{inv}}\) is given by \eqref{eq:Jinv-definition}, equivalently, by Proposition~\ref{prop:decoupled-jacobian},
$
\mathcal{D}\mathcal{T}_{\mathrm{inv}}=\mathcal{D}\mathcal{T}_{\mathrm{dec}}(K^\star).
$
For \(i\in\mathcal I\) and \(j\in\mathcal I\setminus\{i\}\), define
\begin{align}
\label{eq:gammaij}
\gamma_{ij}
&:=
\Bigl\|\bigl(R_{ii}+D_i^\top P_i^\star D_i\bigr)^{-1}D_i^\top P_i^\star D_j\Bigr\|
\nonumber\\
&\quad+
\frac{
\Bigl\|\bigl(R_{ii}+D_i^\top P_i^\star D_i\bigr)^{-1}\Bigr\|
\,\|(\mathcal{D}_P\mathcal{G}_i)^\star\|\,\|(\mathcal{D}_{K_j}\mathcal{E}_i)^\star\|
}{
\mu_i^\star
}.
\end{align}

The coefficients \(\gamma_{ij}\) in \eqref{eq:gammaij} reflect three structural drivers of local convergence. The first term captures cross-player control-dependent diffusion through \(D_i^\top P_i^\star D_j\), the second captures policy-evaluation sensitivity through \((\mu_i^\star)^{-1}\), and both are premultiplied by \(\bigl(R_{ii}+D_i^\top P_i^\star D_i\bigr)^{-1}\), which encodes how player \(i\)'s own control-dependent diffusion scales these effects. The following corollary uses \(\gamma_{ij}\) as a computable sufficient condition for local contraction. The criterion \eqref{eq:Jinv-weighted-comparison} is a weighted block-diagonal-dominance condition in the sense of \cite{feingold_varga_1962}.

\begin{corollary}[Explicit sufficient convergence criterion]
\label{cor:Jinv-block-comparison}
Let \(\pi\) be a permutation of \(\mathcal I\), and consider the transported
norm \(\|\cdot\|_{\pi,w,\infty}\) from \eqref{eq:transported-block-norm}. Assume
that there exist weights \(w_1,\dots,w_N>0\) and a constant \(q\in[0,1)\) such
that
\begin{equation}
\label{eq:Jinv-weighted-comparison}
\sum_{j\in\mathcal I} \frac{w_j}{w_i}\,\gamma_{\pi(i)\pi(j)} \le q,
\qquad i\in\mathcal I.
\end{equation}
Then
\begin{equation}
\label{eq:seq-contractive}
\|\mathcal{D}\mathcal{T}_{\mathrm{seq}}^{\pi}(K^\star)\|_{\pi,w,\infty}\le q<1.
\end{equation}
In particular,
\begin{equation}
\label{eq:seq-spectral}
\rho\!\left(\mathcal{D}\mathcal{T}_{\mathrm{seq}}^{\pi}(K^\star)\right)<1.
\end{equation}
Hence the sequential policy-iteration map \(\mathcal{T}_{\mathrm{seq}}^{\pi}\) is locally
contractive at \(K^\star\) in the weighted block-\(\ell_\infty\) norm
associated with the update order \(\pi\).
\end{corollary}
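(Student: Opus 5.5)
The plan is to reduce the claim to Theorem~\ref{thm:seq-no-worse}. It suffices to show that the block entries of the decoupled derivative are bounded entrywise by the coefficients \(\gamma_{ij}\), namely
\[
\bigl\|[\mathcal{D}\mathcal{T}_{\mathrm{dec}}(K^\star)]_{ij}\bigr\|\le\gamma_{ij},\qquad i,j\in\mathcal I .
\]
Once this holds, the weighted block-row formula \eqref{eq:block-row-operator-norm} applied to \(\mathcal{D}\mathcal{T}_{\mathrm{dec}}^{\pi}=\mathcal{P}_\pi\mathcal{D}\mathcal{T}_{\mathrm{dec}}(K^\star)\mathcal{P}_\pi^{-1}\) gives the following. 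For \(\|\Delta K^\pi\|_{w,\infty}\le1\) one has \(\|\Delta K^\pi_j\|\le w_j\). The \(i\)-th block row is then bounded by
\[
\frac{1}{w_i}\sum_j \gamma_{\pi(i)\pi(j)}w_j\le q .
\]
Hence \(\|\mathcal{D}\mathcal{T}_{\mathrm{dec}}(K^\star)\|_{\pi,w,\infty}\le q\le1\) by \eqref{eq:proof-dec-norm-equality}. Assumption \eqref{eq:dec-assumption} is therefore satisfied, and Theorem~\ref{thm:seq-no-worse} yields \eqref{eq:seq-contractive}. The spectral bound \eqref{eq:seq-spectral} follows because the spectral radius never exceeds any induced operator norm. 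Local contractivity then follows from Lemma~\ref{lem:local-convergence}, or directly from the contraction argument in its proof with the explicit norm \(\|\cdot\|_{\pi,w,\infty}\).

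For the diagonal blocks, the bound holds by definition, since \(\gamma_{ii}=\|[\mathcal{D}\mathcal{T}_{\mathrm{inv}}]_{ii}\|\) and \(\mathcal{D}\mathcal{T}_{\mathrm{inv}}=\mathcal{D}\mathcal{T}_{\mathrm{dec}}(K^\star)\) by Proposition~\ref{prop:decoupled-jacobian}. The proof of Corollary~\ref{cor:two-player-squaring} in fact shows that these diagonal blocks vanish, but this is not needed.

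For the off-diagonal blocks \(j\ne i\), I would use Lemma~\ref{lem:reduced-playerwise}. By \eqref{eq:proof-dec-block}, the \((i,j)\) block is \(\mathcal{D}\mathcal U_i(K^\star)\) restricted to the \(j\)-th perturbation \(\Delta K_j\). By \eqref{eq:DUi-operator}, with \((\mathcal{D}_{\widehat K_i}\mathcal{G}_i)^{-1}=(R_{ii}+D_i^\top P_i^\star D_i)^{-1}\) from \eqref{eq:proof-Ghat-invertible}, this block equals
\[
-(R_{ii}+D_i^\top P_i^\star D_i)^{-1}\Bigl(\mathcal{D}_{K_j}\mathcal{G}_i[\Delta K_j]-(\mathcal{D}_P\mathcal{G}_i)^\star\bigl((\mathcal{D}_P\mathcal{E}_i)^\star\bigr)^{-1}(\mathcal{D}_{K_j}\mathcal{E}_i)^\star[\Delta K_j]\Bigr).
\]
From \eqref{eq:local-Gi}, the partial derivative in \(K_j\) for \(j\ne i\) is \(\mathcal{D}_{K_j}\mathcal{G}_i[\Delta K_j]=D_i^\top P_i^\star D_j\Delta K_j\), since \(P\) is held fixed. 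The triangle inequality and submultiplicativity then bound the block norm by the first term of \eqref{eq:gammaij} plus
\[
\|(R_{ii}+D_i^\top P_i^\star D_i)^{-1}\|\,\|(\mathcal{D}_P\mathcal{G}_i)^\star\|\,\|((\mathcal{D}_P\mathcal{E}_i)^\star)^{-1}\|\,\|(\mathcal{D}_{K_j}\mathcal{E}_i)^\star\|.
\]
Since \(\|((\mathcal{D}_P\mathcal{E}_i)^\star)^{-1}\|=1/\mu_i^\star\), this is exactly the second term of \(\gamma_{ij}\).

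The main obstacle is bookkeeping rather than an analytical difficulty. Two points need care. First, the operator norms used in \eqref{eq:gammaij} must be the ones induced by the norms on \(\mathbb R^{m_j\times n}\), \(\mathbb S^n\) and \(\mathbb R^{m_i\times n}\) that also appear in \eqref{eq:block-row-operator-norm}. Second, the block-\(\ell_\infty\) estimate must be carried through the permutation correctly, with block \((i,j)\) of \(\mathcal{D}\mathcal{T}_{\mathrm{dec}}^{\pi}\) equal to block \((\pi(i),\pi(j))\) of \(\mathcal{D}\mathcal{T}_{\mathrm{dec}}(K^\star)\). I would state this identity explicitly before invoking Theorem~\ref{thm:seq-no-worse}.
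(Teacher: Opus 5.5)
Your proposal is correct and is essentially the paper's argument in Appendix~\ref{app:proof-explicit-contraction}: you bound the permuted blocks by \(\|(\mathcal{D}\mathcal{T}_{\mathrm{dec}}^{\pi})_{ij}\|\le\gamma_{\pi(i)\pi(j)}\), use the weighted row-sum estimate to obtain \(\|\mathcal{D}\mathcal{T}_{\mathrm{dec}}(K^\star)\|_{\pi,w,\infty}\le q<1\), and then invoke Theorem~\ref{thm:seq-no-worse} together with \(\rho\le\|\cdot\|_{\pi,w,\infty}\). Your explicit derivation of the off-diagonal bound from \eqref{eq:DUi-operator}, using \(\|((\mathcal{D}_P\mathcal{E}_i)^\star)^{-1}\|=1/\mu_i^\star\), spells out a step that the paper only asserts by citing \eqref{eq:gammaii}--\eqref{eq:gammaij}.
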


\begin{proof}
See Appendix~\ref{app:proof-explicit-contraction}.
\end{proof}

Notably, the influence of player \(i\)'s own diffusion matrix \(D_i\) on the criterion \eqref{eq:Jinv-weighted-comparison} is not monotone: \(D_i\) enters \(\gamma_{ij}\) both through the premultiplier \(\bigl(R_{ii}+D_i^\top P_i^\star D_i\bigr)^{-1}\) and through \(\mu_i^\star\), so that larger \(D_i\) may either tighten or relax the sufficient condition depending on the balance of these two effects.

The results established in this section are evaluated numerically in Section~\ref{sec:numerical_experiments}.
The analytic characterization of the Fr\'echet derivative of the sequential policy-iteration map developed in this section separates the effects of the update order from those of the game structure and provides explicit computable sufficient conditions for local contraction. It thereby completes Contribution~2 and resolves Problem~\ref{prob:convergence}.

\section{Homotopy Initialization}
\label{sec:homotopy_init}

This section develops a homotopy continuation method that produces a mean-square-stabilizing gain profile and thereby removes the requirement of a stabilizing gain profile for Algorithm~\ref{alg:cascaded-PI}, closing the initialization gap that arises from the restriction of the presented policy iteration method and its convergence analysis to \(\mathcal K_{\mathrm{MSS}}\). 
Mean-square stability is preserved throughout, so that every intermediate gain profile is itself admissible.
The procedure terminates after finitely many continuation steps, with an explicit bound on their number.
\subsection{Stability-Preserving Homotopy Path}
This subsection establishes the analytical foundation of the homotopy-based initialization by formulating an auxiliary single-controller problem and deriving the properties needed to obtain an MSS starting gain profile. The initialization procedure operates on this auxiliary problem and constructs a stabilizing gain profile by following a homotopy path from an artificially drift-shifted MSS system, initialized at $K = 0$, toward the target dynamics while successively removing the shift and preserving mean-square stability throughout. The resulting gain profile is admissible for Algorithm~\ref{alg:cascaded-PI}.

To construct a stabilizing initial gain profile for the \(N\)-player stochastic differential game
\eqref{eq:sde-N}--\eqref{eq:cost}, we temporarily disregard the game structure and treat the inputs of all players as the control of a single auxiliary controller. Collecting the input matrices as
\begin{equation}
\label{eq:aux-input-matrices}
B:=\begin{bmatrix}B_1&\cdots&B_N\end{bmatrix},
\qquad
D:=\begin{bmatrix}D_1&\cdots&D_N\end{bmatrix},
\end{equation}
with \(B,D\in\mathbb R^{n\times m}\) and \(m:=\sum_{i\in\mathcal I}m_i\), and identifying the joint control \(u=(u_1,\dots,u_N)\) with the stacked vector \((u_1^\top,\dots,u_N^\top)^\top\in\mathbb R^{m}\), the state equation \eqref{eq:sde-N} takes the single-input form
\begin{equation}
\label{eq:aux-sde}
dx(t)
=
\bigl(Ax(t)+Bu(t)\bigr)\,dt
+
\bigl(Cx(t)+Du(t)\bigr)\,dW(t).
\end{equation}
Under the corresponding identification of a gain profile \(K=(K_i)_{i\in\mathcal I}\in\mathcal K\) with the stacked matrix \(\begin{bmatrix}K_1^\top&\cdots&K_N^\top\end{bmatrix}^\top\in\mathbb R^{m\times n}\), the auxiliary control law \(u(t)=-Kx(t)\) coincides with the game feedback \(u_i(t)=-K_ix(t)\), \(i\in\mathcal I\), and
\begin{equation}
\label{eq:aux-closed-loop}
A-BK=A_{\mathrm{cl}}(K),
\qquad
C-DK=C_{\mathrm{cl}}(K).
\end{equation}
The auxiliary closed-loop system therefore coincides with the closed-loop game dynamics induced by \(K\).

The auxiliary controller minimizes the single cost functional
\begin{equation}
\label{eq:aux-cost}
J^{\mathrm{aux}}(u;x_0)
:=
\mathbb E\int_0^\infty
\Bigl(
x(t)^\top Q x(t)
+
u(t)^\top R\,u(t)
\Bigr)\,dt,
\end{equation}
with weights \(Q\in\mathbb S^n_{++}\) and \(R\in\mathbb S^m_{++}\), which may for instance be inherited from the game as \(Q=\sum_{i\in\mathcal I}Q_i\) and \(R=\operatorname{blkdiag}(R_{11},\dots,R_{NN})\).
Thus the initialization problem is to compute a stabilizing gain \(K\) for the auxiliary stochastic control problem \eqref{eq:aux-sde}--\eqref{eq:aux-cost}. By \eqref{eq:aux-closed-loop}, such a gain is precisely an element of \(\mathcal K_{\mathrm{MSS}}\).
The weights provide a degree of freedom in selecting where in \(\mathcal K_{\mathrm{MSS}}\) the initial profile is placed.

The auxiliary problem \eqref{eq:aux-sde}--\eqref{eq:aux-cost} is an instance of the game class \eqref{eq:sde-N}--\eqref{eq:cost} with a single player. The results of Section~\ref{sec:03_CSPI} therefore apply to it verbatim. In particular, its optimality conditions are obtained by specializing the coupled equilibrium relations \eqref{eq:equilibrium-eval}--\eqref{eq:equilibrium-improve} to the auxiliary dynamics \eqref{eq:aux-sde} with the single cost \eqref{eq:aux-cost}.

Assumption~\ref{ass:MSS} is existential and does not yield an explicit stabilizing gain.
To initialize the homotopy at the trivial gain \(K=0\), we therefore introduce a uniform negative shift in the drift of the auxiliary system. The shift parameter \(\beta\) is chosen so that the shifted auxiliary system is asymptotically mean-square stable already at \(K=0\).

\begin{lemma}[Sufficient drift shift for mean-square stability]
\label{lem:beta_shift_mss}
Fix \(\varepsilon_\beta>0\) and define
\begin{equation}
\label{eq:beta_def}
\beta
:=
\max\Bigl\{
\tfrac12\lambda_{\max}(A^\top+A+C^\top C),\,0
\Bigr\}
+\varepsilon_\beta.
\end{equation}
Then the shifted auxiliary system
\begin{equation}
\label{eq:shifted_aux_zero_gain}
dx(t)
=
(A-\beta I)x(t)\,dt
+
Cx(t)\,dW(t),
\qquad t\ge 0,
\end{equation}
which corresponds to the auxiliary dynamics \eqref{eq:aux-sde} at the trivial gain \(K=0\), is asymptotically mean-square stable.
\end{lemma}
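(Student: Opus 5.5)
The plan is to apply the Lyapunov criterion \eqref{eq:Lyap-MSS} with the simplest candidate, \(X=I\). For the shifted system \eqref{eq:shifted_aux_zero_gain} the drift is \(A-\beta I\) and the diffusion is \(C\). This matches the closed-loop form \eqref{eq:closed-loop-SDE} at the trivial gain \(K=0\) once \(A\) is replaced by \(A-\beta I\). The criterion of \cite{damm2004} used in \eqref{eq:Lyap-MSS} applies to any linear SDE of the form \(dx=\bar A x\,dt+\bar C x\,dW\), so it suffices to find \(X\succ0\) with \(\bar A^\top X+X\bar A+\bar C^\top X\bar C\prec0\) for \(\bar A=A-\beta I\) and \(\bar C=C\).

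Take \(X=I\). The Lyapunov expression then becomes
\[
(A-\beta I)^\top+(A-\beta I)+C^\top C
=A^\top+A+C^\top C-2\beta I .
\]
The matrix \(A^\top+A+C^\top C\) is symmetric, so it is bounded above by \(\lambda_{\max}(A^\top+A+C^\top C)\,I\). By \eqref{eq:beta_def}, \(2\beta\ge\lambda_{\max}(A^\top+A+C^\top C)+2\varepsilon_\beta\). Hence the expression is bounded above by \(-2\varepsilon_\beta I\prec0\). The maximum with \(0\) in \eqref{eq:beta_def} only makes \(\beta\) larger, so it preserves this bound and additionally guarantees \(\beta>0\).

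The conclusion follows from the equivalence stated after Assumption~\ref{ass:MSS}, namely that strict feasibility of \eqref{eq:Lyap-MSS} is equivalent to mean-square stability with exponential decay of \(\mathbb E\|x(t)\|^2\) (\cite[Thm.~1.5.3]{damm2004}). If a self-contained argument is preferred, apply It\^o's formula to \(\|x\|^2\). This gives
\[
\tfrac{d}{dt}\mathbb E\|x\|^2=\mathbb E\bigl[x^\top\bigl(A^\top+A+C^\top C-2\beta I\bigr)x\bigr]\le-2\varepsilon_\beta\,\mathbb E\|x\|^2 ,
\]
and Gr\"onwall's inequality then yields \(\mathbb E\|x(t)\|^2\le e^{-2\varepsilon_\beta t}\|x_0\|^2\). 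The stochastic-integral term is removed by the same localization argument used in the proof of Lemma~\ref{lem:eval-Pi}.

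No step is a genuine obstacle. The only point needing care is identifying the shifted system as an instance to which \eqref{eq:Lyap-MSS} applies, which is immediate after relabelling the drift matrix.
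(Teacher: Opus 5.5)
Your proposal is correct and matches the paper's proof: both verify the Lyapunov criterion \eqref{eq:Lyap-MSS} for the pair $(A-\beta I,\,C)$ with $X=I$, using $2\beta\ge\lambda_{\max}(A^\top+A+C^\top C)+2\varepsilon_\beta$. Your optional It\^o--Gr\"onwall argument is a valid self-contained supplement that also yields the explicit decay bound $\mathbb E\|x(t)\|^2\le e^{-2\varepsilon_\beta t}\|x_0\|^2$.
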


\begin{proof}
By \eqref{eq:Lyap-MSS}, it suffices to verify
\begin{equation}
(A-\beta I)^\top X + X(A-\beta I) + C^\top X C \prec 0
\end{equation}

for some \(X=X^\top\succ0\). Choosing \(X=I\) yields
\begin{equation}
A^\top + A + C^\top C - 2\beta I \prec 0
\end{equation}

by \eqref{eq:beta_def}. Hence \eqref{eq:Lyap-MSS} holds, and
\eqref{eq:shifted_aux_zero_gain} is asymptotically mean-square stable.
\end{proof}

Introduce the homotopy level \(\sigma\geq 0\) via the shifted drift \(A-\beta I+\sigma I\), with \(\sigma=0\) corresponding to the explicitly stabilized starting point and \(\sigma=\beta\) to the unshifted auxiliary dynamics.

\begin{lemma}[MSS preservation under the auxiliary PI step]
\label{lem:aux_mss_preservation}
Fix \(\sigma\in[0,\beta]\) and replace the drift matrix \(A\) in the auxiliary
system \eqref{eq:aux-sde} by \(A-\beta I+\sigma I\).
Let \(K\in\mathcal K\) be such that the corresponding closed-loop pair
\begin{equation}
\label{eq:shifted-aux-cl}
\bigl(A-\beta I+\sigma I-BK,\; C-DK\bigr)
\end{equation}
is asymptotically mean-square stable. Let \(P\) denote the unique solution of
the associated policy-evaluation equation
\begin{align}
\label{eq:shifted-aux-eval}
0
&=
\bigl(A-\beta I+\sigma I-BK\bigr)^\top P
+
P\bigl(A-\beta I+\sigma I-BK\bigr)
\notag\\
&\quad+
\bigl(C-DK\bigr)^\top P\bigl(C-DK\bigr)
+
Q+K^\top R\,K,
\end{align}
and define
\begin{equation}
\label{eq:aux-improvement}
\widehat K
=
\bigl(R+D^\top P D\bigr)^{-1}
\bigl(B^\top P+D^\top P C\bigr).
\end{equation}
Then the updated closed-loop pair
\begin{equation}
\label{eq:shifted-aux-cl-updated}
\bigl(A-\beta I+\sigma I-B\widehat K,\; C-D\widehat K\bigr)
\end{equation}
is asymptotically mean-square stable.
\end{lemma}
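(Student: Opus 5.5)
The lemma is the single-player case of Theorem~\ref{thm:mss-preservation-cascaded}, applied to the auxiliary problem with drift \(A_\sigma:=A-\beta I+\sigma I\). I will redo that completed-square argument directly rather than cite it, because the improvement \eqref{eq:aux-improvement} must be identified with the stationarity condition \eqref{eq:Ui-stationarity} for \(N=1\). With a single player the sum over \(j\neq i\) is empty, so \eqref{eq:Ui-stationarity} reads \((R+D^\top PD)\widehat K=B^\top P+D^\top PC\), which is exactly \eqref{eq:aux-improvement}. Note that the shift enters only the drift, and the improvement formula does not involve \(A\) at all.

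First, since the pair \eqref{eq:shifted-aux-cl} is MSS, Lemma~\ref{lem:eval-Pi} (applied with \(A\) replaced by \(A_\sigma\), \(Q_i=Q\), \(R_{ii}=R\)) gives a unique \(P\in\mathbb S^n_{++}\) solving \eqref{eq:shifted-aux-eval}. Next, set \(\Delta K:=\widehat K-K\), \(A_K:=A_\sigma-BK\), \(C_K:=C-DK\), so the updated pair is \((A_K-B\Delta K,\,C_K-D\Delta K)\). Expanding the Lyapunov operator as in \eqref{eq:proof-L-expansion} gives
\begin{align*}
\mathcal L_{\widehat K}(P)
&=\mathcal L_K(P)-\Delta K^\top\bigl(B^\top P+D^\top PC_K\bigr)\\
&\quad-\bigl(PB+C_K^\top PD\bigr)\Delta K+\Delta K^\top D^\top PD\,\Delta K .
\end{align*}

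Now use the stationarity rewrite, analogous to \eqref{eq:proof-stationarity-rewrite}: \(B^\top P+D^\top PC_K=R\widehat K+D^\top PD\,\Delta K\). Substituting this, adding \(Q+\widehat K^\top R\widehat K\), and using \(\mathcal L_K(P)=-Q-K^\top RK\) from \eqref{eq:shifted-aux-eval}, the terms combine as
\[
\widehat K^\top R\widehat K-K^\top RK-\Delta K^\top R\widehat K-\widehat K^\top R\Delta K=-\Delta K^\top R\Delta K .
\]
This yields
\[
\mathcal L_{\widehat K}(P)=-Q-\widehat K^\top R\widehat K-\Delta K^\top(R+D^\top PD)\Delta K\prec 0,
\]
since \(Q\succ0\). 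With \(P\succ0\), the criterion \eqref{eq:Lyap-MSS} then gives MSS of \eqref{eq:shifted-aux-cl-updated}.

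The main obstacle is purely bookkeeping: checking that the cross terms in \(R\) cancel into \(-\Delta K^\top R\Delta K\). The rest transfers verbatim from the proof of Theorem~\ref{thm:mss-preservation-cascaded}. No property of \(\sigma\in[0,\beta]\) is needed beyond the stated MSS hypothesis.
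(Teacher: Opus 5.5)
Your proposal is correct and follows the paper's route: the paper proves this lemma in two lines, invoking Lemma~\ref{lem:eval-Pi} for a unique $P\succ0$ and then specializing Theorem~\ref{thm:mss-preservation-cascaded} to a single controller with drift $A-\beta I+\sigma I$. Your explicit identity $\mathcal L_{\widehat K}(P)=-Q-\widehat K^\top R\widehat K-\Delta K^\top(R+D^\top PD)\Delta K\prec0$ is exactly that theorem's completed-square argument unrolled, and both your identification of \eqref{eq:aux-improvement} with \eqref{eq:Ui-stationarity} for $N=1$ and your cancellation of the $R$-cross terms check out.
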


\begin{proof}
Since \eqref{eq:shifted-aux-cl} is asymptotically mean-square stable,
Lemma~\ref{lem:eval-Pi} yields existence and uniqueness of the solution
\(P\) to \eqref{eq:shifted-aux-eval}. The claim is then the specialization of
Theorem~\ref{thm:mss-preservation-cascaded} to a single controller with input
matrices \eqref{eq:aux-input-matrices}, weights \(Q\succ0\) and \(R\succ0\), and
drift \(A-\beta I+\sigma I\).
\end{proof}

We next derive a sufficient condition on the homotopy increment for preservation of mean-square stability.

\begin{lemma}[Admissible homotopy step size]
\label{lem:homotopy_stepsize}
Fix \(\sigma_k\in[0,\beta]\), and let \(K^{k+1}\in\mathcal K\) be such that
\begin{equation}
\label{eq:hatAk-hatCk}
\widehat A_{k+1}:=A-\beta I+\sigma_k I-BK^{k+1},
\quad
\widehat C_{k+1}:=C-DK^{k+1}
\end{equation}
defines an asymptotically mean-square stable closed-loop pair. Let
\(\widehat P^k\succ0\) denote the unique solution of
\eqref{eq:shifted-aux-eval} with \(\sigma=\sigma_k\) and
\(K=K^{k+1}\), and define
\begin{equation}
\label{eq:gammak-def}
\nu_k
:=
\lambda_{\min}\!\Bigl(
(\widehat P^k)^{-1/2}
\bigl(Q+(K^{k+1})^\top R\,K^{k+1}\bigr)
(\widehat P^k)^{-1/2}
\Bigr).
\end{equation}
Then \(\nu_k>0\), and every
$
\alpha_{k+1}\in\Bigl(0,\frac{\nu_k}{2}\Bigr)
$
has the property that
\begin{equation}
\label{eq:next-shifted-pair}
\bigl(\widehat A_{k+1}+\alpha_{k+1}I,\widehat C_{k+1}\bigr)
\end{equation}
is asymptotically mean-square stable.
\end{lemma}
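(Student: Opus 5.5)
The plan is to use $\widehat P^k$ itself as a Lyapunov certificate for the shifted pair in \eqref{eq:next-shifted-pair}, and then invoke the MSS criterion \eqref{eq:Lyap-MSS}.

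First, positivity of $\nu_k$. Since $\widehat A_{k+1},\widehat C_{k+1}$ form an MSS pair, Lemma~\ref{lem:eval-Pi} (applied to the auxiliary single-player instance with drift $A-\beta I+\sigma_k I$) gives a unique solution $\widehat P^k\succ0$ of \eqref{eq:shifted-aux-eval}. Set $S:=Q+(K^{k+1})^\top R K^{k+1}$. Because $Q\succ0$ and $R\succ0$, we have $S\succ0$. Congruence with the invertible matrix $(\widehat P^k)^{-1/2}$ preserves positive definiteness, so $\nu_k=\lambda_{\min}\bigl((\widehat P^k)^{-1/2}S(\widehat P^k)^{-1/2}\bigr)>0$. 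The definition of $\nu_k$ as a minimal eigenvalue then gives
\[
(\widehat P^k)^{-1/2}S(\widehat P^k)^{-1/2}\succeq \nu_k I,
\]
and congruence with $(\widehat P^k)^{1/2}$ turns this into $S\succeq \nu_k \widehat P^k$.

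Next, the Lyapunov inequality. Only the drift changes, by $\alpha_{k+1}I$, so the diffusion term is untouched. Expanding gives
\[
(\widehat A_{k+1}+\alpha_{k+1} I)^\top \widehat P^k+\widehat P^k(\widehat A_{k+1}+\alpha_{k+1} I)+\widehat C_{k+1}^\top\widehat P^k\widehat C_{k+1}
= -S+2\alpha_{k+1}\widehat P^k .
\]
Here the evaluation identity \eqref{eq:shifted-aux-eval} was used to replace the unshifted Lyapunov terms by $-S$. Combining with $S\succeq\nu_k\widehat P^k$, the right-hand side is bounded above by $-(\nu_k-2\alpha_{k+1})\widehat P^k$. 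This is negative definite whenever $\alpha_{k+1}<\nu_k/2$, since $\widehat P^k\succ0$. Applying \eqref{eq:Lyap-MSS} with $X=\widehat P^k$ then yields mean-square stability of \eqref{eq:next-shifted-pair}.

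The only step requiring care is converting the generalized eigenvalue statement into the matrix inequality $S\succeq\nu_k\widehat P^k$. This is a routine congruence argument, so I expect no genuine obstacle. The strictness of the interval $(0,\nu_k/2)$ is precisely what makes the final inequality strict.
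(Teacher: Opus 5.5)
Your proposal is correct and matches the paper's proof: both use $\widehat P^k$ from the evaluation equation \eqref{eq:shifted-aux-eval} as the Lyapunov certificate, obtain the identity with right-hand side $-\bigl(Q+(K^{k+1})^\top R\,K^{k+1}\bigr)+2\alpha_{k+1}\widehat P^k$, and show it is negative definite for $\alpha_{k+1}<\nu_k/2$ by a congruence argument before invoking \eqref{eq:Lyap-MSS}. Your intermediate step $S\succeq\nu_k\widehat P^k$ is just a rephrasing of the paper's statement $(\widehat P^k)^{-1/2}S(\widehat P^k)^{-1/2}-2\alpha_{k+1} I\succ0$.
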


\begin{proof}
Since \(Q\succ0\) and \(R\succ0\), the matrix
\begin{equation}
Q+(K^{k+1})^\top R\,K^{k+1}
\end{equation}

is positive definite, and hence \(\nu_k>0\). Moreover, by
\eqref{eq:shifted-aux-eval},
\begin{align}
\widehat A_{k+1}^\top \widehat P^k
&+
\widehat P^k \widehat A_{k+1}
+
\widehat C_{k+1}^\top \widehat P^k \widehat C_{k+1}
\notag\\
&=
-
\bigl(Q+(K^{k+1})^\top R\,K^{k+1}\bigr).
\end{align}
Therefore, for any \(\alpha>0\),
\begin{align}
(\widehat A_{k+1}+\alpha I)^\top \widehat P^k
&+
\widehat P^k(\widehat A_{k+1}+\alpha I)
+
\widehat C_{k+1}^\top \widehat P^k \widehat C_{k+1} \notag
\\
&=
-
\bigl(Q+(K^{k+1})^\top R\,K^{k+1}\bigr)
+
2\alpha \widehat P^k.
\end{align}
If \(\alpha\in(0,\nu_k/2)\), then by \eqref{eq:gammak-def},
\begin{equation}
(\widehat P^k)^{-1/2}
\bigl(Q+(K^{k+1})^\top R\,K^{k+1}\bigr)
(\widehat P^k)^{-1/2}
-
2\alpha I
\succ0,
\end{equation}

and hence
$
-
\bigl(Q+(K^{k+1})^\top R\,K^{k+1}\bigr)
+
2\alpha \widehat P^k
\prec0.
$
Thus the mean-square Lyapunov inequality \eqref{eq:Lyap-MSS} holds for
\eqref{eq:next-shifted-pair} with \(\widehat P^k\), which proves
asymptotic mean-square stability.
\end{proof}

Whereas Lemma~\ref{lem:homotopy_stepsize} bounds each homotopy step so as to preserve mean-square stability along the path, the last step need not meet \(\beta\) exactly and may yield \(\sigma_k>\beta\). The following lemma shows that the resulting gain profile satisfies \(K\in\mathcal K_{\mathrm{MSS}}\) nonetheless, so that the continuation can be stopped as soon as \(\sigma_k\ge\beta\).

\begin{lemma}[Safe overshoot]
\label{lem:overshoot_safe}
Let \(\beta\) be given by \eqref{eq:beta_def}, let \(\sigma\ge\beta\), and let \(K\in\mathcal K\) be such that
the shifted closed-loop pair \eqref{eq:shifted-aux-cl} is
asymptotically mean-square stable. Then \(K\in\mathcal K_{\mathrm{MSS}}\).
\end{lemma}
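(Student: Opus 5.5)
The idea is that the shifted drift is the original closed-loop drift plus a nonnegative multiple of the identity, and adding such a multiple can only make the Lyapunov inequality harder to satisfy, never easier. So a certificate for the shifted pair is also a certificate for the unshifted pair.

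Concretely, since \(\sigma\ge\beta\), I write \(\delta:=\sigma-\beta\ge0\). The shifted closed-loop drift is then
\[
A-\beta I+\sigma I-BK=A_{\mathrm{cl}}(K)+\delta I .
\]
By \eqref{eq:aux-closed-loop}, the diffusion satisfies \(C-DK=C_{\mathrm{cl}}(K)\).

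The shifted pair \eqref{eq:shifted-aux-cl} is mean-square stable. By the Lyapunov characterization \eqref{eq:Lyap-MSS} (equivalently, by Lemma~\ref{lem:eval-Pi} applied to the shifted system with any positive definite running cost, e.g. \(Q+K^\top RK\)), there is an \(X\succ0\) with
\[
\bigl(A_{\mathrm{cl}}(K)+\delta I\bigr)^\top X+X\bigl(A_{\mathrm{cl}}(K)+\delta I\bigr)+C_{\mathrm{cl}}(K)^\top X C_{\mathrm{cl}}(K)\prec0 .
\]

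Expanding the left-hand side gives
\[
\mathcal L_K(X)+2\delta X\prec0 .
\]
Since \(\delta\ge0\) and \(X\succ0\), we have \(2\delta X\succeq0\), and therefore
\[
\mathcal L_K(X)\prec-2\delta X\preceq0 .
\]
This is exactly \eqref{eq:Lyap-MSS} for the unshifted pair \(\bigl(A_{\mathrm{cl}}(K),C_{\mathrm{cl}}(K)\bigr)\) with the same \(X\). Hence \(K\in\mathcal K_{\mathrm{MSS}}\) by the criterion of \cite{damm2004}, and the identification \eqref{eq:aux-closed-loop} transfers the conclusion from the auxiliary system to the game.

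No real obstacle is expected. The one point to state carefully is that the Lyapunov criterion applies in both directions: MSS of the shifted pair yields a strict certificate \(X\), and the existence of such an \(X\) yields MSS of the unshifted pair. The specific value of \(\beta\) from \eqref{eq:beta_def} plays no role beyond guaranteeing \(\sigma-\beta\ge0\).
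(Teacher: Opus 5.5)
Your proposal is correct and matches the paper's proof: you write the shifted drift as \(A_{\mathrm{cl}}(K)+\delta I\) with \(\delta=\sigma-\beta\ge0\), take a Lyapunov certificate \(X\succ0\) from \eqref{eq:Lyap-MSS}, and drop the term \(2\delta X\succeq0\) to obtain the strict inequality for the unshifted pair. Your remark that the criterion must be used in both directions, and that the value of \(\beta\) only serves to guarantee \(\delta\ge0\), is accurate.
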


\begin{proof}
Set \(\delta:=\sigma-\beta\ge0\). Then, by \eqref{eq:aux-closed-loop}, the shifted closed-loop pair
\eqref{eq:shifted-aux-cl} is
\begin{equation}
\bigl(A_{\mathrm{cl}}(K)+\delta I,\; C_{\mathrm{cl}}(K)\bigr).
\end{equation}

By \eqref{eq:Lyap-MSS}, there exists \(P=P^\top\succ0\) such that
\begin{align}
\bigl(A_{\mathrm{cl}}(K)+\delta I\bigr)^\top P
&+
P\bigl(A_{\mathrm{cl}}(K)+\delta I\bigr)
\notag\\
&\quad+
\bigl(C_{\mathrm{cl}}(K)\bigr)^\top
P\,C_{\mathrm{cl}}(K)
\prec0.
\end{align}

Subtracting \(2\delta P\succeq0\) yields
\begin{equation}
\bigl(A_{\mathrm{cl}}(K)\bigr)^\top P
+
P A_{\mathrm{cl}}(K)
+
\bigl(C_{\mathrm{cl}}(K)\bigr)^\top
P\,C_{\mathrm{cl}}(K)
\prec0.
\end{equation}
Hence \eqref{eq:Lyap-MSS} holds for the closed-loop pair induced by \(K\) in the
game \eqref{eq:sde-N}, that is, \(K\in\mathcal K_{\mathrm{MSS}}\).
\end{proof}

\begin{algorithm}[t]
\caption{Homotopy Initialization for Stochastic $N$-Player LQ Differential Games}
\label{alg:homotopy_init}
\begin{algorithmic}
\REQUIRE system matrices \(A,C\) and \(B,D\) from \eqref{eq:aux-input-matrices}, weights \(Q\succ0\), \(R\succ0\), shift margin \(\varepsilon_\beta>0\), step-size factor \(\eta\in(0,1)\)
\STATE Set \(k\gets 0\), \(\alpha_0\gets 0\), \(K^0\gets 0\)
\STATE Compute \(\beta\) from \eqref{eq:beta_def}
\REPEAT
    \STATE Set
    $
    \sigma_k:=\sum_{j=0}^k \alpha_j
    $
    \STATE \textbf{Policy iteration:} Starting from \(K^k\), iterate at level \(\sigma_k\) the policy-evaluation step
    \[
    \begin{aligned}
    0={}&\bigl(A-\beta I+\sigma_k I-BK\bigr)^\top P \\
    &\quad +P\bigl(A-\beta I+\sigma_k I-BK\bigr) \\
    &\quad +\bigl(C-DK\bigr)^\top P\bigl(C-DK\bigr) \\
    &\quad +Q+K^\top R\,K
    \end{aligned}
    \]
    and the policy-improvement update
    \[
    K\;\gets\;\bigl(R+D^\top P D\bigr)^{-1}
    \bigl(B^\top P+D^\top P C\bigr)
    \]
    until convergence, yielding the stabilizing solution \(\widehat P^k\succ0\) and the optimal gain \(K^{k+1}\)
    \STATE \textbf{Margin:} Compute \(\nu_k\) from \eqref{eq:gammak-def}
    \STATE Choose
    $
    \alpha_{k+1}:=\eta\,\frac{\nu_k}{2}
    $
    \STATE Set \(k\gets k+1\)
\UNTIL \(\sigma_k\ge\beta\)
\STATE Set \(K^{\mathrm{init}}:=K^k\) and partition it row-wise into the player blocks \(K_i^{\mathrm{init}}\in\mathbb R^{m_i\times n}\), \(i\in\mathcal I\)
\RETURN \(K^{\mathrm{init}}\)
\end{algorithmic}
\end{algorithm}

\subsection{Homotopy Initialization Algorithm}

Starting from the explicitly stabilized auxiliary system from Lemma~\ref{lem:beta_shift_mss}, the homotopy initialization proceeds by repeated policy-evaluation and policy-improvement steps, combined with a stability-preserving update of the homotopy level. The procedure is summarized in Algorithm~\ref{alg:homotopy_init}.
\begin{theorem}[Well-posedness of Algorithm~\ref{alg:homotopy_init}]
\label{thm:homotopy_wellposed}
Let \(\varepsilon_\beta>0\), \(\eta\in(0,1)\), let \(Q\succ0\) and \(R\succ0\),
and let \(\beta\) be given by \eqref{eq:beta_def}. For
each \(k\ge0\), define
\begin{equation}
\label{eq:theorem-sigmak}
\sigma_k:=\sum_{j=0}^k \alpha_j
\end{equation}
and the shifted closed-loop pair
\begin{equation}
\label{eq:theorem-AkCk}
A_k:=A-\beta I+\sigma_k I-BK^k,
\qquad
C_k:=C-DK^k.
\end{equation}
If, for some \(k\ge0\), the pair \((A_k,C_k)\) is asymptotically mean-square
stable, then the next iteration of Algorithm~\ref{alg:homotopy_init} is well
defined and produces a step size \(\alpha_{k+1}>0\) such that
\begin{equation}
\label{eq:theorem-next-pair}
A_{k+1}:=A-\beta I+\sigma_{k+1}I-BK^{k+1},
\quad
C_{k+1}:=C-DK^{k+1}
\end{equation}
is again asymptotically mean-square stable. In particular, \((A_k,C_k)\) is
asymptotically mean-square stable for all \(k\ge0\).

Moreover, under Assumption~\ref{ass:MSS}, \(\sigma_k\ge\beta\) is reached after
finitely many iterations, and the corresponding unshifted closed-loop pair
\begin{equation}
\label{eq:theorem-unshifted-pair}
\bigl(A-BK^k,\; C-DK^k\bigr)
\end{equation}
is then asymptotically mean-square stable.
\end{theorem}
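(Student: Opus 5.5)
The plan is an induction over the outer index \(k\). The main tool is the auxiliary PI step of Lemma~\ref{lem:aux_mss_preservation}, combined with Lemma~\ref{lem:homotopy_stepsize} for each homotopy step and Lemma~\ref{lem:overshoot_safe} for termination. Finiteness will follow from a uniform lower bound on the margins \(\nu_k\), obtained by comparison with the cost of a fixed stabilizing gain supplied by Assumption~\ref{ass:MSS}.

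\emph{Well-posedness of one outer step.} Assume \((A_k,C_k)\) is asymptotically MSS. The inner policy iteration at level \(\sigma_k\) starts from \(K^k\), which is therefore stabilizing at this level.
\begin{itemize}
\item By Lemma~\ref{lem:aux_mss_preservation}, every inner iterate \(K^{(j)}\) remains stabilizing at level \(\sigma_k\), so each evaluation \(P^{(j)}\succ0\) is well defined by Lemma~\ref{lem:eval-Pi}.
\item Subtracting consecutive evaluation equations and using the completed-square identity \eqref{eq:proof-completed-square} gives \(\mathcal L_{K^{(j+1)}}(P^{(j)}-P^{(j+1)})\preceq0\), where \(\mathcal L\) is the shifted Lyapunov operator. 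Since \(K^{(j+1)}\) is MSS, this yields \(P^{(j+1)}\preceq P^{(j)}\).
\item The sequence \((P^{(j)})\) is therefore monotone nonincreasing and bounded below by \(0\), so it converges to some \(\widehat P^k\succeq0\). The gains \(K^{(j)}\) converge to \(K^{k+1}\) because the improvement map is continuous in \(P\). Passing to the limit gives the evaluation equation \eqref{eq:shifted-aux-eval} at \((\widehat P^k,K^{k+1})\) with forcing \(S:=Q+(K^{k+1})^\top RK^{k+1}\succ0\).
\item Next we show \(\widehat P^k\succ0\). Suppose \(\widehat P^kv=0\). Then \(v^\top(\cdot)v\) applied to the limit equation gives \(v^\top\widehat C_{k+1}^\top\widehat P^k\widehat C_{k+1}v=-v^\top Sv<0\). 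This is impossible because \(\widehat P^k\succeq0\).
\item With \(\widehat P^k\succ0\), the limit equation is a strict Lyapunov inequality, so \(\bigl(\widehat A_{k+1},\widehat C_{k+1}\bigr)\) is MSS by \eqref{eq:Lyap-MSS}.
\end{itemize}
Lemma~\ref{lem:homotopy_stepsize} then gives \(\nu_k>0\). Since \(\eta<1\), the choice \(\alpha_{k+1}=\eta\nu_k/2\) lies in \((0,\nu_k/2)\), so \(A_{k+1}=\widehat A_{k+1}+\alpha_{k+1}I\) together with \(C_{k+1}\) is MSS. The base case \(k=0\) (\(\sigma_0=0\), \(K^0=0\)) is Lemma~\ref{lem:beta_shift_mss}, and induction gives MSS of \((A_k,C_k)\) for all \(k\).

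\emph{Finite termination.} This is the main obstacle: we need \(\alpha_{k+1}\) bounded away from zero uniformly in \(k\).
\begin{itemize}
\item Since \(\widehat P^k\succ0\), we have \(\nu_k\ge\lambda_{\min}(Q)/\lambda_{\max}(\widehat P^k)\). It therefore suffices to bound \(\widehat P^k\) uniformly.
\item By Assumption~\ref{ass:MSS}, fix \(\bar K\in\mathcal K_{\mathrm{MSS}}\). For \(\sigma\le\beta\), the shifted drift \(A_{\mathrm{cl}}(\bar K)-(\beta-\sigma)I\) keeps \(\bar K\) stabilizing. Its evaluation \(\bar P_\sigma\) satisfies \(\mathcal L_{\bar K}(\bar P_\beta-\bar P_\sigma)=-2(\beta-\sigma)\bar P_\sigma\preceq0\), where \(\mathcal L_{\bar K}\) is the unshifted operator. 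Hence \(\bar P_\sigma\preceq\bar P_\beta\).
\item The limit \(\widehat P^k\) is the stabilizing solution of the stochastic Riccati equation at level \(\sigma_k\). I would show it is minimal among value matrices of stabilizing gains, using the same completed-square comparison as above with \(\bar K\) in place of the next iterate. This gives \(\widehat P^k\preceq\bar P_{\sigma_k}\preceq\bar P_\beta\) whenever \(\sigma_k\le\beta\).
\end{itemize}
Consequently \(\alpha_{k+1}\ge\eta\lambda_{\min}(Q)/(2\lambda_{\max}(\bar P_\beta))=:\underline\alpha>0\) for as long as the loop runs. Hence \(\sigma_k\ge\beta\) is reached after at most \(\lceil\beta/\underline\alpha\rceil\) steps.

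\emph{Conclusion.} At the first index with \(\sigma_k\ge\beta\), the pair \((A_k,C_k)\) is MSS by the induction above. Lemma~\ref{lem:overshoot_safe} then yields that the unshifted pair \((A-BK^k,C-DK^k)\) is MSS, i.e. \(K^k\in\mathcal K_{\mathrm{MSS}}\).

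The step I expect to require most care is the minimality comparison \(\widehat P^k\preceq\bar P_{\sigma_k}\). It must be done for the control-dependent noise case, where the completed-square identity carries the extra \(D^\top PD\) terms, and I would verify it directly via \eqref{eq:proof-completed-square} rather than cite the deterministic Kleinman theory.
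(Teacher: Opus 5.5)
Your proposal is correct and follows essentially the same route as the paper. Both arguments use induction with Lemma~\ref{lem:beta_shift_mss} as the base case, Lemmas~\ref{lem:aux_mss_preservation} and~\ref{lem:homotopy_stepsize} for the inductive step, the comparison $\widehat P^k\preceq \bar P_{\sigma_k}\preceq \bar P_\beta$ against a fixed $\bar K\in\mathcal K_{\mathrm{MSS}}$ for the uniform step-size bound, and Lemma~\ref{lem:overshoot_safe} at termination; the only difference is that you prove inner-loop convergence (monotonicity, $\widehat P^k\succ0$, MSS of $K^{k+1}$) and the minimality $\widehat P^k\preceq\bar P_{\sigma_k}$ directly via the completed-square identity, where the paper cites \cite{damm2004} and invokes optimality of $\widehat P^k$.
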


\begin{proof}
We argue by induction on \(k\). For \(k=0\), Algorithm~\ref{alg:homotopy_init}
sets \(K^0=0\) and \(\alpha_0=0\), so that \(\sigma_0=0\) and
\((A_0,C_0)=(A-\beta I,\,C)\). The claim therefore follows from
Lemma~\ref{lem:beta_shift_mss}.

Now assume that \((A_k,C_k)\) is asymptotically mean-square stable, so that \(K^k\) stabilizes the auxiliary system at level \(\sigma_k\). Policy iteration at level \(\sigma_k\), started from \(K^k\), is then well defined, since by Lemma~\ref{lem:eval-Pi} each policy-evaluation equation \eqref{eq:shifted-aux-eval} (with \(\sigma=\sigma_k\)) admits a unique solution \(P\succ0\), each improvement is well defined since
\begin{equation}
\label{eq:theorem-improvement-wellposed}
R+D^\top P D\succ0,
\end{equation}
and by Lemma~\ref{lem:aux_mss_preservation} mean-square stability is preserved at every step. The iteration converges to the optimal stabilizing gain \(K^{k+1}\), with value \(\widehat P^k\succ0\)~\cite{damm2004}, so that the pair
\begin{equation}
\label{eq:theorem-hatAkhatCk}
\widehat A_{k+1}:=A-\beta I+\sigma_k I-BK^{k+1},
\quad
\widehat C_{k+1}:=C-DK^{k+1}
\end{equation}
is asymptotically mean-square stable.
Applying Lemma~\ref{lem:homotopy_stepsize} yields \(\nu_k>0\). Since
Algorithm~\ref{alg:homotopy_init} chooses
\begin{equation}
\label{eq:theorem-alpha-choice}
\alpha_{k+1}=\eta\,\frac{\nu_k}{2},
\end{equation}
one has \(\alpha_{k+1}\in(0,\nu_k/2)\). Therefore,
\begin{equation}
\label{eq:theorem-shifted-next}
(\widehat A_{k+1}+\alpha_{k+1}I,\widehat C_{k+1})
=
(A_{k+1},C_{k+1})
\end{equation}
is asymptotically mean-square stable. This proves that the next iteration is
well defined and preserves mean-square stability.

It remains to establish finite termination. Let \(\bar K\in\mathcal K_{\mathrm{MSS}}\), which is nonempty by Assumption~\ref{ass:MSS}. By \eqref{eq:aux-closed-loop}, \(\bar K\) stabilizes the unshifted auxiliary system. Since a larger drift shift only improves stability, \(\bar K\) is admissible at every level \(\sigma\in[0,\beta]\). Let \(P_\beta\succ0\) denote the solution of \eqref{eq:shifted-aux-eval} for \(\bar K\) at \(\sigma=\beta\).
Let \(P_{\sigma_k}\succ0\) denote the solution of \eqref{eq:shifted-aux-eval} for \(\bar K\) at \(\sigma=\sigma_k\). Subtracting the two evaluation equations shows that \(\Delta:=P_\beta-P_{\sigma_k}\) satisfies
\begin{equation}
\label{eq:theorem-value-monotone}
A_{\mathrm{cl}}(\bar K)^\top\Delta+\Delta A_{\mathrm{cl}}(\bar K)+C_{\mathrm{cl}}(\bar K)^\top\Delta\,C_{\mathrm{cl}}(\bar K)=-2(\beta-\sigma_k)P_{\sigma_k},
\end{equation}
whose right-hand side is negative semidefinite because \(\sigma_k\le\beta\). Since \(\bar K\in\mathcal K_{\mathrm{MSS}}\), the generalized Lyapunov equation \eqref{eq:theorem-value-monotone} is uniquely solvable, and a negative semidefinite right-hand side yields \(\Delta\succeq0\) by the argument used in the proof of Lemma~\ref{lem:eval-Pi} with the strict inequalities replaced by their nonstrict counterparts. Hence \(P_{\sigma_k}\preceq P_\beta\). By optimality of \(\widehat P^k\) at level \(\sigma_k\), it follows that \(\widehat P^k\preceq P_{\sigma_k}\preceq P_\beta\), and in particular
\begin{equation}
\lambda_{\max}(\widehat P^k)\le\lambda_{\max}(P_\beta)
\end{equation}
is bounded by a constant independent of \(k\). By \eqref{eq:gammak-def}, dropping the nonnegative term \((K^{k+1})^\top R\,K^{k+1}\) and applying the standard bound \(\lambda_{\min}(X^{-1/2}YX^{-1/2})\ge\lambda_{\min}(Y)/\lambda_{\max}(X)\) for \(X,Y\succ0\) (see, e.g.,~\cite[Thm.~4.2.2]{HornJohnson}),
\begin{align}
\label{eq:theorem-gamma-lower}
\nu_k
\;&\ge\;
\lambda_{\min}\!\bigl((\widehat P^k)^{-1/2}Q(\widehat P^k)^{-1/2}\bigr) \nonumber \\
\;&\ge\;
\frac{\lambda_{\min}(Q)}{\lambda_{\max}(\widehat P^k)}
\;\ge\;
\frac{\lambda_{\min}(Q)}{\lambda_{\max}(P_\beta)}
\;>\;0.
\end{align}
Hence the homotopy level increases at every iteration by at least
\begin{equation}
\label{eq:theorem-alpha-lower}
\alpha_{k+1}=\eta\,\frac{\nu_k}{2}\;\ge\;\underline\alpha:=\frac{\eta\,\lambda_{\min}(Q)}{2\lambda_{\max}(P_\beta)}\;>\;0,
\end{equation}
so that \(\sigma_k\ge k\,\underline\alpha\) and therefore \(\sigma_k\ge\beta\) after at most
\begin{equation}
\label{eq:theorem-step-bound}
\Bigl\lceil\frac{\beta}{\underline\alpha}\Bigr\rceil
=
\Bigl\lceil\frac{2\beta\,\lambda_{\max}(P_\beta)}{\eta\,\lambda_{\min}(Q)}\Bigr\rceil
<\infty
\end{equation}
iterations. Once \(\sigma_k\ge\beta\), Lemma~\ref{lem:overshoot_safe} applied to
\((A_k,C_k)\) yields asymptotic mean-square stability of the unshifted
pair \eqref{eq:theorem-unshifted-pair}.
\end{proof}

Assumption~\ref{ass:MSS} ensures that the target problem at the homotopy level \(\sigma=\beta\), i.e., the unshifted auxiliary system, admits a stabilizing gain. Theorem~\ref{thm:homotopy_wellposed} shows that Algorithm~\ref{alg:homotopy_init} preserves asymptotic mean-square stability along the homotopy path and that every iterate with \(\sigma_k\ge\beta\) already yields a stabilizing gain for the unshifted auxiliary system.

Let \(k\) denote the terminal iteration index of Algorithm~\ref{alg:homotopy_init}, so that its output satisfies \(K^{\mathrm{init}}=K^k\) with \(\sigma_k\ge\beta\). By Theorem~\ref{thm:homotopy_wellposed}, the unshifted pair \(\bigl(A-BK^{\mathrm{init}},\,C-DK^{\mathrm{init}}\bigr)\) is asymptotically mean-square stable. By \eqref{eq:aux-closed-loop}, this pair coincides with \(\bigl(A_{\mathrm{cl}}(K^{\mathrm{init}}),\,C_{\mathrm{cl}}(K^{\mathrm{init}})\bigr)\), so that the closed-loop game dynamics induced by the gain profile \(K^{\mathrm{init}}=(K_i^{\mathrm{init}})_{i\in\mathcal I}\) is asymptotically mean-square stable, that is, \(K^{\mathrm{init}}\in\mathcal K_{\mathrm{MSS}}\).

Algorithm~\ref{alg:homotopy_init} thus constructs a stabilizing initial gain profile \(K^{\mathrm{init}}\in\mathcal K_{\mathrm{MSS}}\) for every mean-square stabilizable stochastic \(N\)-player LQ differential game, addressing the initialization requirement of Algorithm~\ref{alg:cascaded-PI}, thereby completing Contribution~3 and resolving Problem~\ref{prob:init}.

\section{Numerical Examples}
\label{sec:numerical_experiments}

We examine the local convergence behavior of sequential policy iteration on a stochastic LQ differential game~\eqref{eq:sde-N}--\eqref{eq:cost} with \(N=3\), the smallest number of players at which the update order can matter, and \(n=2\), \(m_i=2\) for all \(i\in\mathcal I\), varying only the diffusion matrices.
With increasing diffusion intensity, the update order can determine whether local contraction is lost, so that its effect is not merely quantitative.
This dependence itself does not require diffusion, as the deterministic example in Lemma~\ref{lem:spectral-order-dependence} shows, and here the diffusion level determines its severity.
The game is constructed so that all these effects become visible in a single instance.
We consider a nominal game specified by the matrices
\[
A =
\begin{bmatrix}
-0.781 & -1.161 \\
-1.376 & -0.926
\end{bmatrix},
\quad
C =
\begin{bmatrix}
0.211 & 0.563 \\
0.524 & -0.408
\end{bmatrix},
\]
\[
B_1 =
\begin{bmatrix}
-1.325 & -0.429 \\
-2.411 & -1.187
\end{bmatrix},
\quad
B_2 =
\begin{bmatrix}
-2.106 & 3.394 \\
-0.183 & 2.958
\end{bmatrix},
\]
\[
B_3 =
\begin{bmatrix}
0.256 & 0.953 \\
-0.060 & 0.068
\end{bmatrix},
\quad
D_1 =
\begin{bmatrix}
-0.030 & -0.005 \\
-0.040 & 0.063
\end{bmatrix},
\]
\[
D_2 =
\begin{bmatrix}
0.237 & 0.290 \\
0.467 & 0.069
\end{bmatrix},
\quad
D_3 =
\begin{bmatrix}
-0.077 & -0.025 \\
0.090 & 0.075
\end{bmatrix}.
\]

The weighting matrices are \(Q_i=q_i I_2\) and \(R_{ij}=r_{ij} I_2\) with
\[
q=\begin{bmatrix}0.214 & 36.940 & 0.184\end{bmatrix},
\]
\[
[r_{ij}]=\begin{bmatrix}
2.026 & 0.003 & 2.575\\
0.002 & 2.948 & 0.007\\
0.168 & 2.308 & 41.851
\end{bmatrix}.
\]

For \(s\ge 0\), we vary only the diffusion terms according to
\[
C(s)=s\,C,
\qquad
D_i(s)=s\,D_i,
\qquad i\in\mathcal I,
\]
while keeping \(A\), \(B_i\), \(Q_i\), and \(R_{ij}\) fixed. Thus
\(s\) parametrizes the diffusion intensity. The case \(s=1\) is the
nominal stochastic game.
First, as a preliminary analysis Algorithm~\ref{alg:homotopy_init} was applied to the nominal game, for which $K=0$ is not mean-square stabilizing. With $Q=I$, $R=\operatorname{blkdiag}(R_{ii})$, $\eta=0.5$, and $\varepsilon_\beta=10^{-3}$, it returned a stabilizing profile after two continuation levels and $16$ policy-iteration steps, within the bound $\lceil 2\beta\lambda_{\max}(P_\beta)/(\eta\lambda_{\min}(Q))\rceil=5$ of Theorem~\ref{thm:homotopy_wellposed}. Algorithm~\ref{alg:cascaded-PI} then converged from this profile in nine iterations.

Second, as the main numerical example we compare the two update orders
$
\pi^{(1)}:=(2,1,3)$, and
$
\pi^{(2)}:=(1,2,3).
$
By Lemma~\ref{lem:spectral-order-dependence}, cyclic shifts of a given
order induce the same spectrum of the local Fr\'echet derivative
\(\mathcal{D}\mathcal{T}_{\mathrm{seq}}^{\pi}(K^\star)\). Hence, for \(N=3\), the six
permutations split into two cyclic equivalence classes, represented by
\(\pi^{(1)}\) and \(\pi^{(2)}\).
For each value of \(s\), we compute a stabilizing equilibrium
\(K^\star(s)\in\mathcal K_{\mathrm{MSS}}\) by
Algorithm~\ref{alg:cascaded-PI} with update order \(\pi^{(1)}\). 
Along the parameter sweep, the converged equilibrium at the previous value of $s$ is used as the initial gain profile for the next value, so that a single equilibrium branch is followed, with Algorithm~\ref{alg:homotopy_init} supplying the profile at the first value. 
At each computed equilibrium  \(K^\star(s)\), we evaluate the local Fr\'echet derivatives
$
\mathcal{D}\mathcal{T}_{\mathrm{seq}}^{\pi^{(1)}}(K^\star(s))$,
$
\mathcal{D}\mathcal{T}_{\mathrm{seq}}^{\pi^{(2)}}(K^\star(s))$, and
$
\mathcal{D}\mathcal{T}_{\mathrm{dec}}(K^\star(s)),
$
together with their spectral radii and the weighted
block-\(\ell_\infty\) quantities from
Theorem~\ref{thm:seq-no-worse}. For the weighted block-\(\ell_\infty\) quantities in
\eqref{eq:block-row-operator-norm}, the weights \(w\) are obtained by
normalizing a dominant eigenvector of the nonnegative matrix with entries
\(\|(\mathcal{D}\mathcal{T}_{\mathrm{dec}}^{\pi})_{ij}\|\).
All runs are terminated once
$
\|K^{k+1}-K^k\| \le 10^{-8},
$
or once \(500\) iterations are reached.

\subsection{Convergence Rates and Order Dependence}

The quantity evaluated first is the local spectral radius at \(K^\star(s)\).
By Lemma~\ref{lem:local-convergence}, \(\rho<1\) yields local contraction, whereas \(\rho>1\) renders the equilibrium locally repelling \cite{OrtegaRheinboldt}.
Figure~\ref{plt:rho} summarizes the local spectral radii
\(\rho_{\mathrm{seq}}^{\pi}(s):=\rho\!\left(\mathcal{D}\mathcal{T}_{\mathrm{seq}}^{\pi}(K^\star(s))\right)\)
and
\(\rho_{\mathrm{dec}}(s):=\rho\!\left(\mathcal{D}\mathcal{T}_{\mathrm{dec}}(K^\star(s))\right)\). All three quantities
increase with the diffusion level \(s\), showing that stronger diffusion
systematically weakens local contraction. At \(s=0\), all three update maps are
strongly contractive
(\(\rho_{\mathrm{seq}}^{\pi^{(1)}}(0)\approx 1.60\times10^{-2}\),
\(\rho_{\mathrm{seq}}^{\pi^{(2)}}(0)\approx 3.97\times10^{-3}\),
\(\rho_{\mathrm{dec}}(0)\approx 7.43\times10^{-2}\)),
and the same hierarchy is still visible at the nominal game
\(s=1\). As \(s\) increases further, the three spectral radii grow at markedly different rates, so that the local behavior of the three updates separates increasingly.

A central observation is that the locally preferred sequential update order is
not constant along the diffusion sweep. For small \(s\), the order \(\pi^{(2)}\) yields the smaller local spectral radius, but the two orders cross near \(s\approx2.2\), beyond which \(\pi^{(1)}\) becomes preferable.

Most importantly, the loss of local contraction is itself order-dependent. The
decoupled update crosses the threshold \(\rho=1\) first, near \(s\approx4.6\), and the sequential order \(\pi^{(2)}\) only later, near \(s\approx5.7\), whereas the order \(\pi^{(1)}\) remains locally contractive throughout the entire tested range.
Running all three schemes at \(s=6\) from a common admissible gain profile near
\(K^\star\) confirms that this spectral prediction governs the actual iteration.
In Fig.~\ref{fig:order-residuals}, the order \(\pi^{(1)}\) converges at the rate
predicted by Lemma~\ref{lem:local-convergence} and reaches machine precision
after roughly \(30\) iterations, whereas \(\pi^{(2)}\) is repelled from
\(K^\star\) at its own predicted rate. The decoupled scheme leaves
\(\mathcal K_{\mathrm{MSS}}\) after \(33\) iterations, so that its value
equations cease to be solvable and the iteration is no longer defined, while
both sequential orders remain admissible throughout, as guaranteed by
Theorem~\ref{thm:mss-preservation-cascaded}. 
The update order therefore governs not only the local rate, but whether local convergence is retained at all under increasing diffusion.
Sequential schemes have been used repeatedly \cite{nortmann_monti_sassano_mylvaganam2024_tac_iterative_datadriven_lq_games,chen_chen_lewis_xie_mcpi_nash_dg,chen_chen_lewis_2026_automatica_mcpi_nonlinear}, where the effect of the update order has remained unexplained.  To the best of our knowledge, it has not been identified as a factor that can decide whether an equilibrium is locally attracting at all in policy iteration for differential games.  
Its absence from the literature is consistent with Remark~\ref{rem:two-player-spectral-invariance} and Lemma~\ref{lem:spectral-order-dependence}, which confine the effect to at least three players and to orders from distinct cyclic classes.

\begin{figure}
    \centering
    \begin{tikzpicture}
    \begin{axis}[
        tacaxis,
        xlabel={Diffusion scaling $s$},
        ylabel={Spectral radius}
    ]
      \addplot[decstyle] table [x=s, y=rho_dec, col sep=comma] {content/plots/01_gamma_plot.csv};
      \addlegendentry{$\rho_{\mathrm{dec}}(s)$}
    
      \addplot[seqBstyle] table [x=s, y=rho_seqB, col sep=comma] {content/plots/01_gamma_plot.csv};
      \addlegendentry{$\rho_{\mathrm{seq}}^{\pi^{(2)}}(s)$}

      \addplot[seqAstyle] table [x=s, y=rho_seqA, col sep=comma] {content/plots/01_gamma_plot.csv};
      \addlegendentry{$\rho_{\mathrm{seq}}^{\pi^{(1)}}(s)$}
    
      \addplot[thresholdstyle] coordinates {(0,1) (6,1)};
    \end{axis}
    \end{tikzpicture}
    \caption{Local spectral radii
    $\rho_{\mathrm{seq}}^{\pi}(s):=\rho(\mathcal{D}\mathcal{T}_{\mathrm{seq}}^{\pi}(K^\star(s)))$
    and
    $\rho_{\mathrm{dec}}(s):=\rho(\mathcal{D}\mathcal{T}_{\mathrm{dec}}(K^\star(s)))$
    along the diffusion sweep.}
    \label{plt:rho}
\end{figure}

\subsection{Comparison of Sequential and Decoupled Dynamics}

Spectral radii yield sharp local rates but no guaranteed comparison between the
two schemes. Theorem~\ref{thm:seq-no-worse} supplies such a guarantee in a
transported weighted norm, which we evaluate along the same sweep.
In Fig.~\ref{fig:gamma_plot}, the comparison quantity for the decoupled core
increases monotonically with \(s\) and crosses the threshold \(1\) near
\(s\approx 2.76\), whereas both sequential orders remain strictly below it
throughout the tested range.
The plotted values are the exact block norms, not the explicit bounds of
Corollary~\ref{cor:Jinv-block-comparison}. 
The mechanism identified there
nonetheless accounts for the behavior, since increasing diffusion enlarges the
coupling coefficients \(\gamma_{ij}\) and reduces the contraction margin.
Whenever the decoupled norm does not exceed one, the sequential local
contraction is at least as strong as the decoupled one in this norm, and the
ordering seen in the spectral radii is consistent with the guarantee.
The condition of Theorem~\ref{thm:seq-no-worse} thus delimits the equilibria at
which the advantage of the sequential update structure is not incidental but
guaranteed.

\begin{figure}[t]
    \centering
    \begin{tikzpicture}
    \begin{axis}[
        tacaxis,
        ymode=log,
        xmin=0, xmax=100,
        ymin=1e-14, ymax=1e0,
        ytick={1e-12,1e-9,1e-6,1e-3,1e0},
        unbounded coords=jump,
        xlabel={iteration $k$},
        ylabel={$\|K^{k}-K^{\star}\|_2$},
        legend style={
          at={(0.975,0.6)},
          anchor=north east,
          font=\scriptsize,
          draw=none,
          fill=none,
          row sep=0.5pt,
          inner sep=1pt
        },
    ]
      \addplot[seqAdensestyle] table [x=k, y=dist2_primary_seqA, col sep=comma] {content/plots/order_dependence_residuals_plot_data.csv};
      \addlegendentry{$\pi^{(1)}$, $\rho=0.585$}

      \addplot[seqBdensestyle] table [x=k, y=dist2_primary_seqB, col sep=comma] {content/plots/order_dependence_residuals_plot_data.csv};
      \addlegendentry{$\pi^{(2)}$, $\rho=1.096$}

      \addplot[decstyle] table [x=k, y=dist2_primary_dec, col sep=comma] {content/plots/order_dependence_residuals_plot_data.csv};
      \addlegendentry{decoupled, $\rho=1.610$}

      \addplot[predstyle] table [x=k, y=ref_seqA, col sep=comma] {content/plots/order_dependence_residuals_plot_data.csv};
      \addlegendentry{predicted rate $\rho^{k}$}

      \addplot[predstyle, forget plot] table [x=k, y=ref_seqB, col sep=comma] {content/plots/order_dependence_residuals_plot_data.csv};
    \end{axis}
    \end{tikzpicture}
    \caption{Distance to the equilibrium \(K^\star\) along the iteration at
    \(s=6\), for the two update orders of Fig.~\ref{plt:rho} and the decoupled
    scheme. Dotted lines show the rates \(\rho^{k}\) predicted by
    Lemma~\ref{lem:local-convergence}. The decoupled curve ends where its
    iterates leave \(\mathcal K_{\mathrm{MSS}}\).}
    \label{fig:order-residuals}
\end{figure}

\begin{figure}[t]
    \centering
    \begin{tikzpicture}
    \begin{axis}[
        tacaxis,
        xmin=0,
        xmax=2.6,
        xlabel={Diffusion scaling $s$},
        ylabel={Weighted block-norm}
    ]
      \addplot[decstyle] table [x=s, y=eta_dec, col sep=comma] {content/plots/01_gamma_plot.csv};
      \addlegendentry{$\|\mathcal{D}\mathcal{T}_{\mathrm{dec}}(K^\star(s))\|_{\pi,w,\infty}$}

      \addplot[seqBstyle] table [x=s, y=eta_seqB, col sep=comma] {content/plots/01_gamma_plot.csv};
      \addlegendentry{$\|\mathcal{D}\mathcal{T}_{\mathrm{seq}}^{\pi^{(2)}}(K^\star(s))\|_{\pi^{(2)},w,\infty}$}

      \addplot[seqAstyle] table [x=s, y=eta_seqA, col sep=comma] {content/plots/01_gamma_plot.csv};
      \addlegendentry{$\|\mathcal{D}\mathcal{T}_{\mathrm{seq}}^{\pi^{(1)}}(K^\star(s))\|_{\pi^{(1)},w,\infty}$}
    
      \addplot[thresholdstyle] coordinates {(0,1) (2.6,1)};
    \end{axis}
    \end{tikzpicture}
    \caption{Weighted block-\(\ell_\infty\) comparison quantities associated with
    Theorem~\ref{thm:seq-no-worse} along the diffusion sweep.}
    \label{fig:gamma_plot}
\end{figure}

\section{Conclusion}
\label{sec:conclusion}

This paper has proposed a sequential policy-iteration method for
infinite-horizon stochastic linear--quadratic differential games with
state- and control-dependent noise. The player-wise sequential
construction preserves mean-square stability at every intermediate update,
yielding a well-posed iteration on the set of mean-square-stabilizing gain profiles.
Its fixed points are precisely the stabilizing feedback Nash equilibria, so that every run converging within this set certifies the solvability of the coupled equilibrium equations for the instance at hand.
From the Fr\'echet derivative of the iteration map, we established a
sufficient criterion for local convergence.
A weighted block-norm comparison further delimits the equilibria at which sequential
updating is guaranteed to contract at least as strongly as updating all
players simultaneously from a common gain profile. The same derivative shows
that the order in which the players are updated can decide whether an
equilibrium is locally attracting at all.
To address the initialization requirement of infinite-horizon policy
iteration, we further developed a homotopy initialization that preserves
mean-square stability along the continuation path and provides a
stabilizing initial gain profile for the full \(N\)-player iteration.
The numerical example confirms the analysis for a three-player game.
Increasing diffusion weakens local contraction, changes which update order is
locally preferred, and causes contraction to be lost at different diffusion
levels for different orders.

\appendices

\section{Explicit Fr\'echet blocks}
\label{app:frechet-blocks}

This appendix records explicit expressions for the Fr\'echet blocks
appearing in Lemma~\ref{lem:reduced-playerwise}. 
For \(X\in\mathbb S^n\), \(H_i\in\mathbb R^{m_i\times n}\), and
\(\Delta K=(\Delta K_j)_{j\in\mathcal I}\in\mathcal K\), the partial
Fr\'echet derivatives of \(\mathcal{E}_i\) and \(\mathcal{G}_i\), evaluated at
\((P_i(K),\mathcal U_i(K),K)\), are given as follows.

\medskip
\noindent\textup{(i) Evaluation residual.}
\begin{align}
&\mathcal{D}_P\mathcal{E}_i(K)[X]
=
A_{\mathrm{cl}}(K)^\top X
+
X A_{\mathrm{cl}}(K)
+
C_{\mathrm{cl}}(K)^\top X\,C_{\mathrm{cl}}(K),
\label{eq:app-EPi}
\\
&\mathcal{D}_K\mathcal{E}_i(K)[\Delta K]
=
-\sum_{j\in\mathcal I}
\left(
\Delta K_j^\top B_j^\top P_i
+
P_i B_j \Delta K_j
\right)
\nonumber\\
&\quad \qquad 
-\sum_{j\in\mathcal I}
\left(
\Delta K_j^\top D_j^\top P_i C_{\mathrm{cl}}(K)
+
C_{\mathrm{cl}}(K)^\top P_i D_j \Delta K_j
\right)
\nonumber\\
&\quad \qquad 
+\sum_{j\in\mathcal I}
\left(
K_j^\top R_{ij}\Delta K_j
+
\Delta K_j^\top R_{ij}K_j
\right).
\label{eq:app-EKi}
\end{align}

\medskip
\noindent\textup{(ii) Fixed-value stationarity residual.}
\begin{align}
\mathcal{D}_P\mathcal{G}_i(K)[X]
&=
D_i^\top X D_i\,\widehat K_i + \sum_{j\in\mathcal I\setminus\{i\}} D_i^\top X D_j K_j
 \notag \\
&\quad
-
B_i^\top X
-
D_i^\top X C,
\label{eq:app-GPi}
\\
\mathcal{D}_{\widehat K_i}\mathcal{G}_i(K)[H_i]
&=
\bigl(R_{ii}+D_i^\top P_i D_i\bigr)H_i,
\label{eq:app-GKhatii}
\\
\mathcal{D}_K\mathcal{G}_i(K)[\Delta K]
&=
\sum_{j\in\mathcal I\setminus\{i\}} D_i^\top P_i D_j \Delta K_j.
\label{eq:app-GKi}
\end{align}

\section{Proof of Corollary~\ref{cor:Jinv-block-comparison}}
\label{app:proof-explicit-contraction}
\begin{proof}
By \eqref{eq:Mpi-definition} and \eqref{eq:transported-block-norm},
$
\|\mathcal{D}\mathcal{T}_{\mathrm{inv}}\|_{\pi,w,\infty}
=
\|\mathcal{D}\mathcal{T}_{\mathrm{dec}}^{\pi}\|_{w,\infty}.$
Moreover, the \((i,j)\)-block of \(\mathcal{D}\mathcal{T}_{\mathrm{dec}}^{\pi}\) is
$
(\mathcal{D}\mathcal{T}_{\mathrm{dec}}^{\pi})_{ij}
=
[\mathcal{D}\mathcal{T}_{\mathrm{inv}}]_{\pi(i)\pi(j)}$, for
$i,j\in\mathcal I.
$
Hence, by \eqref{eq:gammaii}, \eqref{eq:gammaij},
$
\|(\mathcal{D}\mathcal{T}_{\mathrm{dec}}^{\pi})_{ij}\|
\le
\gamma_{\pi(i)\pi(j)}$, for
$ i,j\in\mathcal I.
$
Let \(\Delta K^{\pi}\in\mathcal K^\pi\). For each \(i\in\mathcal I\),
\begin{align}
\frac{\|(\mathcal{D}\mathcal{T}_{\mathrm{dec}}^{\pi} \Delta K^{\pi})_i\|}{w_i}
&=
\frac{\|\sum_{j\in\mathcal I} (\mathcal{D}\mathcal{T}_{\mathrm{dec}}^{\pi})_{ij}\Delta K^{\pi}_j\|}{w_i}
\notag\\
&\le
\sum_{j\in\mathcal I}
\frac{\|(\mathcal{D}\mathcal{T}_{\mathrm{dec}}^{\pi})_{ij}\|\,\|\Delta K^{\pi}_j\|}{w_i}
\notag\\
&\le
\sum_{j\in\mathcal I}
\frac{w_j}{w_i}\,\gamma_{\pi(i)\pi(j)}\,
\frac{\|\Delta K^{\pi}_j\|}{w_j}
\notag\\
&\le
\left(
\sum_{j\in\mathcal I} \frac{w_j}{w_i}\,\gamma_{\pi(i)\pi(j)}
\right)
\|\Delta K^{\pi}\|_{w,\infty}.
\label{eq:proof-row-estimate}
\end{align}
By \eqref{eq:Jinv-weighted-comparison},
$
\frac{\|(\mathcal{D}\mathcal{T}_{\mathrm{dec}}^{\pi} \Delta K^{\pi})_i\|}{w_i}
\le
q\,\|\Delta K^{\pi}\|_{w,\infty}$,
for $ i\in\mathcal I.
$
Taking the maximum over \(i\) yields
$
\|\mathcal{D}\mathcal{T}_{\mathrm{dec}}^{\pi} \Delta K^{\pi}\|_{w,\infty}
\le
q\,\|\Delta K^{\pi}\|_{w,\infty}$, with $ \Delta K^{\pi}\in\mathcal K^\pi.
$
Therefore,
\begin{equation}
\label{eq:proof-Jinv-contraction}
\|\mathcal{D}\mathcal{T}_{\mathrm{inv}}\|_{\pi,w,\infty}
=
\|\mathcal{D}\mathcal{T}_{\mathrm{dec}}^{\pi}\|_{w,\infty}
\le q<1.
\end{equation}
Since \(\mathcal{D}\mathcal{T}_{\mathrm{inv}}=\mathcal{D}\mathcal{T}_{\mathrm{dec}}(K^\star)\) by Proposition~\ref{prop:decoupled-jacobian}, Theorem~\ref{thm:seq-no-worse} implies
\begin{equation}
\label{eq:proof-seq-contraction}
\|\mathcal{D}\mathcal{T}_{\mathrm{seq}}^{\pi}(K^\star)\|_{\pi,w,\infty}
\le
\|\mathcal{D}\mathcal{T}_{\mathrm{inv}}\|_{\pi,w,\infty}
\le q<1.
\end{equation}
The spectral radius bound $\rho(\mathcal{D}\mathcal{T}_{\mathrm{seq}}^{\pi}(K^\star))<1$ follows since $\rho\le\|\cdot\|_{\pi,w,\infty}$.
\end{proof}

\section*{References}
\bibliographystyle{IEEEtran}
\bibliography{bibliography/refs}

@book{basar_olsder_1998,
  author    = {Ba{\c{s}}ar, Tamer and Olsder, Geert Jan},
  title     = {Dynamic Noncooperative Game Theory},
  edition   = {2nd},
  series    = {Classics in Applied Mathematics},
  volume    = {23},
  publisher = {SIAM},
  address   = {Philadelphia, PA, USA},
  year      = {1998},
  doi       = {10.1137/1.9781611971132}
}

@book{friedman_2013,
  author    = {Friedman, Avner},
  title     = {Differential Games},
  publisher = {Dover Publications},
  address   = {Mineola, NY, USA},
  year      = {2013}
}

@book{yuksel_basar_2013,
  author    = {Y{\"u}ksel, Serdar and Ba{\c{s}}ar, Tamer},
  title     = {Stochastic Networked Control Systems: Stabilization and Optimization under Information Constraints},
  publisher = {Springer},
  year      = {2013},
  doi       = {10.1007/978-1-4614-7085-4}
}

@article{lewis_vrabie_2009_rl_adp,
  author  = {Lewis, Frank L. and Vrabie, Draguna},
  title   = {Reinforcement Learning and Adaptive Dynamic Programming for Feedback Control},
  journal = {{IEEE} Circuits and Systems Magazine},
  volume  = {9},
  number  = {3},
  pages   = {32--50},
  year    = {2009},
  doi     = {10.1109/MCAS.2009.933854}
}

@article{vamvoudakis_lewis_2011_online_hjb_games,
  author  = {Vamvoudakis, Kyriakos G. and Lewis, Frank L.},
  title   = {Multi-player non-zero-sum games: Online adaptive learning solution of coupled {Hamilton--Jacobi} equations},
  journal = {Automatica},
  volume  = {47},
  number  = {8},
  pages   = {1556--1569},
  year    = {2011},
  doi     = {10.1016/j.automatica.2011.03.012}
}

@article{hu_lauriere_2024_survey_ml_stochastic_games,
  author  = {Hu, Ruimeng and Lauri{\`e}re, Mathieu},
  title   = {Recent developments in machine learning methods for stochastic control and games},
  journal = {Numerical Algebra, Control and Optimization},
  volume  = {14},
  number  = {3},
  pages   = {435--525},
  year    = {2024},
  doi     = {10.3934/naco.2024031}
}

@article{chen_lewis_li_2022_homotopic_pi,
  author  = {Chen, Ci and Lewis, Frank L. and Li, Bo},
  title   = {Homotopic policy iteration-based learning design for unknown linear continuous-time systems},
  journal = {Automatica},
  volume  = {138},
  pages   = {110153},
  year    = {2022},
  doi     = {10.1016/j.automatica.2021.110153}
}

@article{moon_wang_basar_2026_survey_lq_sdg,
  author  = {Moon, Jun and Wang, Bing-Chang and Ba{\c{s}}ar, Tamer},
  title   = {A selective survey of recent results on linear-quadratic stochastic differential games},
  journal = {International Journal of Control, Automation, and Systems},
  volume  = {24},
  pages   = {1--29},
  year    = {2026},
  doi     = {10.1007/s12555-026-00003-y},
  number  = {1}
}

@article{nortmann_monti_sassano_mylvaganam2024_tac_iterative_datadriven_lq_games,
  author  = {Nortmann, Benita and Monti, Andrea and Sassano, Mario and Mylvaganam, Thulasi},
  title   = {Nash equilibria for linear quadratic discrete-time dynamic games via iterative and data-driven algorithms},
  journal = {{IEEE} Transactions on Automatic Control},
  volume  = {69},
  number  = {10},
  pages   = {6561--6575},
  year    = {2024},
  doi     = {10.1109/TAC.2024.3375249}
}

@article{saito_takahashi_2019_automatica,
  author  = {Saito, Takashi and Takahashi, Akihiko},
  title   = {Stochastic differential game in high frequency market},
  journal = {Automatica},
  volume  = {104},
  pages   = {111--125},
  year    = {2019},
  doi     = {10.1016/j.automatica.2019.02.051}
}

@article{liu_huang_2025_tac_aggregative,
  author  = {Liu, Zhixin and Huang, Jie},
  title   = {Distributed Nash Equilibrium Seeking in Aggregative Games Over Jointly Connected and Weight-Balanced Networks},
  journal = {{IEEE} Transactions on Automatic Control},
  volume  = {70},
  number  = {5},
  pages   = {3486--3493},
  year    = {2025},
  doi     = {10.1109/TAC.2024.3520809}
}

@inproceedings{varga_inga_lemmer_hohmann_2021_ccta,
  author    = {Varga, Balint and Inga, Jairo and Lemmer, Marius and Hohmann, S{\"o}ren},
  title     = {Ordinal Potential Differential Games to Model Human-Machine Interaction in Vehicle-Manipulators},
  booktitle = {Proc.\ {IEEE} Conf.\ on Control Technology and Applications ({CCTA})},
  year      = {2021},
  pages     = {728--734},
  doi       = {10.1109/CCTA48906.2021.9658788}
}

@inproceedings{kille_leibold_karg_varga_hohmann_2024_roman,
  author    = {Kille, Sven and Leibold, Marion and Karg, Philipp and Varga, Balint and Hohmann, S{\"o}ren},
  title     = {Human-Variability-Respecting Optimal Control for Physical Human-Machine Interaction},
  booktitle = {Proc.\ 33rd {IEEE} Int.\ Conf.\ on Robot and Human Interactive Communication ({RO-MAN})},
  year      = {2024},
  pages     = {1595--1602},
  doi       = {10.1109/RO-MAN60168.2024.10731297}
}

@article{todorov_2005_neural_computation,
  author  = {Todorov, Emanuel},
  title   = {Stochastic Optimal Control and Estimation Methods Adapted to the Noise Characteristics of the Sensorimotor System},
  journal = {Neural Computation},
  volume  = {17},
  number  = {5},
  pages   = {1084--1108},
  year    = {2005},
  doi     = {10.1162/0899766053491887}
}

@book{yong_zhou_1999,
  author    = {Yong, Jiongmin and Zhou, Xun Yu},
  title     = {Stochastic Controls: {Hamiltonian} Systems and {HJB} Equations},
  publisher = {Springer-Verlag},
  address   = {New York, NY, USA},
  year      = {1999},
  doi       = {10.1007/978-1-4612-1466-3}
}

@article{starr_ho_1969,
  author  = {Starr, A. W. and Ho, Y.-C.},
  title   = {Nonzero-sum differential games},
  journal = {Journal of Optimization Theory and Applications},
  volume  = {3},
  number  = {3},
  pages   = {184--206},
  year    = {1969},
  doi     = {10.1007/BF00929443}
}

@article{kleinman_1969,
  author  = {Kleinman, D. L.},
  title   = {Optimal stationary control of linear systems with control-dependent noise},
  journal = {{IEEE} Transactions on Automatic Control},
  volume  = {14},
  number  = {6},
  pages   = {673--677},
  year    = {1969},
  doi     = {10.1109/TAC.1969.1099303}
}

@article{mclane_1971,
  author  = {McLane, Peter},
  title   = {Optimal stochastic control of linear systems with state- and control-dependent disturbances},
  journal = {{IEEE} Transactions on Automatic Control},
  volume  = {16},
  number  = {6},
  pages   = {793--798},
  year    = {1971},
  doi     = {10.1109/TAC.1971.1099828}
}

@article{willems_willems_1976,
  author  = {Willems, Jan L. and Willems, Jan C.},
  title   = {Feedback stabilizability for stochastic systems with state and control dependent noise},
  journal = {Automatica},
  volume  = {12},
  number  = {3},
  pages   = {277--283},
  year    = {1976},
  doi     = {10.1016/0005-1098(76)90029-7}
}

@article{freiling_1996,
  author  = {Freiling, Gerhard and Jank, Gerhard and Abou-Kandil, Hisham},
  title   = {On global existence of solutions to coupled matrix {Riccati} equations in closed-loop {Nash} games},
  journal = {{IEEE} Transactions on Automatic Control},
  volume  = {41},
  number  = {2},
  pages   = {264--269},
  year    = {1996},
  doi     = {10.1109/9.481532}
}

@incollection{li_gajic_1995,
  author    = {Li, Tung-Yi and Gaji{\'c}, Zoran},
  title     = {Lyapunov iterations for solving coupled algebraic {Riccati} equations of {Nash} differential games and algebraic {Riccati} equations of zero-sum games},
  booktitle = {New Trends in Dynamic Games and Applications},
  editor    = {Olsder, G. J.},
  publisher = {Birkh{\"a}user},
  year      = {1995},
  pages     = {333--351},
  doi     = {10.1007/978-1-4612-4274-1_17}
}

@article{puterman_brumelle_1979_pi_newton,
  author  = {Puterman, Martin L. and Brumelle, Shelby L.},
  title   = {On the Convergence of Policy Iteration in Stationary Dynamic Programming},
  journal = {Mathematics of Operations Research},
  volume  = {4},
  number  = {1},
  pages   = {60--69},
  year    = {1979},
  doi     = {10.1287/moor.4.1.60}
}

@inproceedings{guan_salizzoni_kamgarpour_summers_2024_pi_lqdg,
  author    = {Guan, Yutao and Salizzoni, Giorgia and Kamgarpour, Maryam and Summers, Tyler H.},
  title     = {A Policy Iteration Algorithm for {N}-player General-Sum Linear Quadratic Dynamic Games},
  booktitle = {Proc.\ 63rd {IEEE} Conf.\ on Decision and Control ({CDC})},
  pages     = {1725--1730},
  year      = {2024},
  doi       = {10.1109/CDC56724.2024.10886048}
}

@article{chen_chen_lewis_xie_mcpi_nash_dg,
  author  = {Chen, Yuzhe and Chen, Ci and Lewis, Frank L. and Xie, Shengli},
  title   = {Multiplayer Cascaded Policy Iteration for {Nash} Differential Games},
  journal = {{IEEE} Transactions on Automatic Control},
  volume  = {71},
  number  = {2},
  pages   = {1038--1053},
  year    = {2026},
  doi     = {10.1109/TAC.2025.3605273}
}

@article{chen_chen_lewis_2026_automatica_mcpi_nonlinear,
  author  = {Chen, Yuzhe and Chen, Ci and Lewis, Frank L.},
  title   = {Non-zero-sum games in continuous-time nonlinear systems: Multiplayer cascaded solutions},
  journal = {Automatica},
  volume  = {186},
  year    = {2026},
  pages   = {112828},
  doi     = {10.1016/j.automatica.2026.112828}
}

@book{nisio2015,
  author    = {Nisio, Makiko},
  title     = {Stochastic Control Theory: Dynamic Programming Principle},
  edition   = {2nd},
  series    = {Probability Theory and Stochastic Modelling},
  volume    = {72},
  publisher = {Springer},
  address   = {Tokyo, Japan},
  year      = {2015},
  doi     = {10.1007/978-4-431-55123-2}
}

@book{oksendal_2003,
  author    = {{\O}ksendal, Bernt},
  title     = {Stochastic Differential Equations: An Introduction with Applications},
  edition   = {6th},
  publisher = {Springer},
  address   = {Berlin, Germany},
  year      = {2003},
  doi       = {10.1007/978-3-642-14394-6}
}

@inproceedings{sagara_mukaidani_yamamoto2007_state_dep_noise,
  author    = {Sagara, Muneomi and Mukaidani, Hiroaki and Yamamoto, Toru},
  title     = {Stochastic {Nash} Games with State-Dependent Noise},
  booktitle = {Proc.\ {SICE} Annual Conference},
  address   = {Kagawa University, Japan},
  pages     = {1635--1638},
  year      = {2007},
  doi       = {10.1109/SICE.2007.4421245}
}

@article{ivanov_tanov_2018_iterative_lq_stochastic_games,
  author  = {Ivanov, Ivelin G. and Tanov, Vladislav K.},
  title   = {An iterative method for an equilibrium point of linear quadratic stochastic differential games with state and control-dependent noise},
  journal = {Annals of the Academy of Romanian Scientists, Series on Mathematics and its Applications},
  volume  = {10},
  number  = {2},
  pages   = {202--210},
  year    = {2018}
}

@article{han_hu_2020_deep_fp,
  author  = {Han, Jiequn and Hu, Ruimeng},
  title   = {Deep Fictitious Play for Finding {Markovian} {Nash} Equilibrium in Multi-Agent Games},
  journal = {Proceedings of Machine Learning Research},
  volume  = {107},
  pages   = {221--245},
  year    = {2020}
}

@article{han_hu_long_2020_convergence_dfp,
  author  = {Han, Jiequn and Hu, Ruimeng and Long, Jihao},
  title   = {Convergence of deep fictitious play for stochastic differential games},
  journal = {Frontiers of Mathematical Finance},
  volume  = {1},
  number  = {2},
  pages   = {287--319},
  year    = {2022},
  doi     = {10.3934/fmf.2021011}
}

@book{damm2004,
  author    = {Damm, Tobias},
  title     = {Rational Matrix Equations in Stochastic Control},
  series    = {Lecture Notes in Control and Information Sciences},
  volume    = {297},
  publisher = {Springer-Verlag},
  address   = {Berlin, Germany},
  year      = {2004}
}

@book{OrtegaRheinboldt,
  author    = {Ortega, James M. and Rheinboldt, Werner C.},
  title     = {Iterative Solution of Nonlinear Equations in Several Variables},
  publisher = {Academic Press},
  address   = {New York, USA},
  year      = {1970}
}

@book{Deuflhard,
  author    = {Deuflhard, Peter},
  title     = {Newton Methods for Nonlinear Problems: Affine Invariance and Adaptive Algorithms},
  series    = {Springer Series in Computational Mathematics},
  publisher = {Springer-Verlag},
  address   = {Berlin, Germany},
  year      = {2011},
  doi     = {10.1007/978-3-642-23899-4}
}

@book{HornJohnson,
  author    = {Horn, Roger A. and Johnson, Charles R.},
  title     = {Matrix Analysis},
  edition   = {2nd},
  publisher = {Cambridge University Press},
  address   = {Cambridge, UK},
  year      = {2012},
  doi       = {10.1017/CBO9781139020411}
}

@book{engwerda2005,
  author    = {Engwerda, Jacob},
  title     = {{LQ} Dynamic Optimization and Differential Games},
  publisher = {John Wiley \& Sons, Ltd},
  address   = {Chichester, England},
  year      = {2005}
}

@article{zhu_zhang2013_state_control_noise,
  author  = {Zhu, Huainian and Zhang, Chengke},
  title   = {Infinite time horizon nonzero-sum linear quadratic stochastic differential games with state and control-dependent noise},
  journal = {Journal of Control Theory and Applications},
  volume  = {11},
  number  = {4},
  pages   = {629--633},
  year    = {2013},
  doi     = {10.1007/s11768-013-1182-3}
}

@article{feingold_varga_1962,
  author  = {Feingold, David G. and Varga, Richard S.},
  title   = {Block diagonally dominant matrices and generalizations of the {Gerschgorin} circle theorem},
  journal = {Pacific Journal of Mathematics},
  volume  = {12},
  number  = {4},
  pages   = {1241--1250},
  year    = {1962},
  doi     = {10.2140/pjm.1962.12.1241}
}

@book{varga_2000,
  author    = {Varga, Richard S.},
  title     = {Matrix Iterative Analysis},
  edition   = {2nd rev.\ and expanded},
  publisher = {Springer},
  address   = {Berlin, Germany},
  year      = {2000},
  doi       = {10.1007/978-3-642-05156-2}
}

@article{householder_1956,
  author  = {Householder, Alston S.},
  title   = {On the Convergence of Matrix Iterations},
  journal = {Journal of the {ACM}},
  volume  = {3},
  number  = {4},
  pages   = {314--324},
  year    = {1956},
  doi     = {10.1145/320843.320851}
}

@misc{thoemmes2026_ifac_policy_gradient,
  author = {Th{\"o}mmes, Felix and G{\"u}nther, Lucas and Handwerker, Karl and Kr{\"u}ger, Philipp and Varga, Balint and Hohmann, S{\"o}ren},
  title  = {Policy Gradient Methods for Continuous-Time Linear Quadratic Games},
  year   = {2026},
  note   = {{IFAC}-PapersOnLine, in press}
}

@misc{Lucas,
  author = {G{\"u}nther, Lucas and Handwerker, Karl and Th{\"o}mmes, Felix and Varga, Balint and Hohmann, S{\"o}ren},
  title  = {Infinite-Horizon Inverse Linear-Quadratic Differential Games with State- and Control-Dependent Noise},
  year   = {2026},
  note   = {arXiv preprint}
}

@article{mukaidani2009_automatica_soft_constrained,
  author  = {Mukaidani, Hiroaki},
  title   = {Soft-constrained stochastic {Nash} games for weakly coupled large-scale systems},
  journal = {Automatica},
  volume  = {45},
  number  = {5},
  pages   = {1272--1279},
  year    = {2009},
  doi     = {10.1016/j.automatica.2008.12.020}
}

@article{sun_jiang_zhang2012_discrete_lq_games,
  author  = {Sun, Huiying and Jiang, Liuyang and Zhang, Weihai},
  title   = {Infinite horizon linear quadratic differential games for discrete-time stochastic systems},
  journal = {Journal of Control Theory and Applications},
  volume  = {10},
  number  = {3},
  pages   = {391--396},
  year    = {2012},
  doi     = {10.1007/s11768-012-1004-z}
}

@article{hambly_xu_yang2022_policy_gradient_lq_games,
  author  = {Hambly, Ben and Xu, Renyuan and Yang, Huining},
  title   = {Policy Gradient Methods Find the {Nash} Equilibrium in {N}-player General-sum Linear-quadratic Games},
  journal = {Journal of Machine Learning Research},
  volume  = {24},
  number  = {139},
  pages   = {1--56},
  year    = {2023}
}

@misc{andersson_andersson_ljung2026_fp_fbsde,
  author = {Andersson, Adam and Andersson, Kristoffer and Ljung, Per},
  title  = {Convergence of fictitious play for fully coupled {FBSDEs} in finite-player stochastic differential games},
  year   = {2026},
  note   = {arXiv preprint arXiv:2607.08861},
  doi    = {10.48550/arXiv.2607.08861}
}

\end{document}